\newtheorem{exam.}{Example}
\newtheorem{def.}{Definition}
\newtheorem{definition}[subsubsection]{Definition}
\newtheorem{theorem}[subsubsection]{Theorem}
\newtheorem{corollary}[subsubsection]{Corollary}
\title{Quotient complex (QC)-based machine learning for 2D perovskite design}
\author{Chuan-Shen Hu \\ 
	Division of Mathematical Sciences\\
	School of Physical and Mathematical Sciences\\
        Nanyang Technological University \\
	Singapore 637371 \\
	\texttt{chuanshen.hu@ntu.edu.sg} \\
	%% examples of more authors
	\And
	Rishikanta Mayengbam \\
	Division of Physics \& Applied Physics\\
	School of Physical and Mathematical Sciences\\
        Nanyang Technological University \\
	Singapore 637371 \\
	\texttt{rishikanta.m@ntu.edu.sg} \\
	\AND
	Kelin Xia \\
	Division of Mathematical Sciences\\
	School of Physical and Mathematical Sciences\\
        Nanyang Technological University \\
	Singapore 637371 \\
	\texttt{xiakelin@ntu.edu.sg} \\
	\And
	Tze Chien Sum \\
	Division of Physics \& Applied Physics\\
	School of Physical and Mathematical Sciences\\
        Nanyang Technological University \\
	Singapore 637371 \\
	\texttt{tzechien@ntu.edu.sg} \\
	%% \And
	%% Coauthor \\
	%% Affiliation \\
	%% Address \\
	%% \texttt{email} \\
}
\begin{document}
\maketitle

\begin{abstract}
With remarkable stability and exceptional optoelectronic properties, two-dimensional (2D) halide layered perovskites hold immense promise for revolutionizing photovoltaic technology. Presently, inadequate representations have substantially impeded the design and discovery of 2D perovskites. In this context, we introduce a novel computational topology framework termed the quotient complex (QC), which serves as the foundation for the material representation. Our QC-based features are seamlessly integrated with learning models for the advancement of 2D perovskite design. At the heart of this framework lies the quotient complex descriptors (QCDs), representing a quotient variation of simplicial complexes derived from materials unit cell and periodic boundary conditions. Differing from prior material representations, this approach encodes higher-order interactions and periodicity information simultaneously. Based on the well-established New Materials for Solar Energetics (NMSE) databank, our QC-based machine learning models exhibit superior performance against all existing counterparts. This underscores the paramount role of periodicity information in predicting material functionality, while also showcasing the remarkable efficiency of the QC-based model in characterizing materials structural attributes.
\end{abstract}

% keywords can be removed
\keywords{2D perovskite \and Material design \and Quotient complex \and Persistent homology}

\begin{figure}[t!]%[tbhp]
\centering
\includegraphics[width=1.0\linewidth]{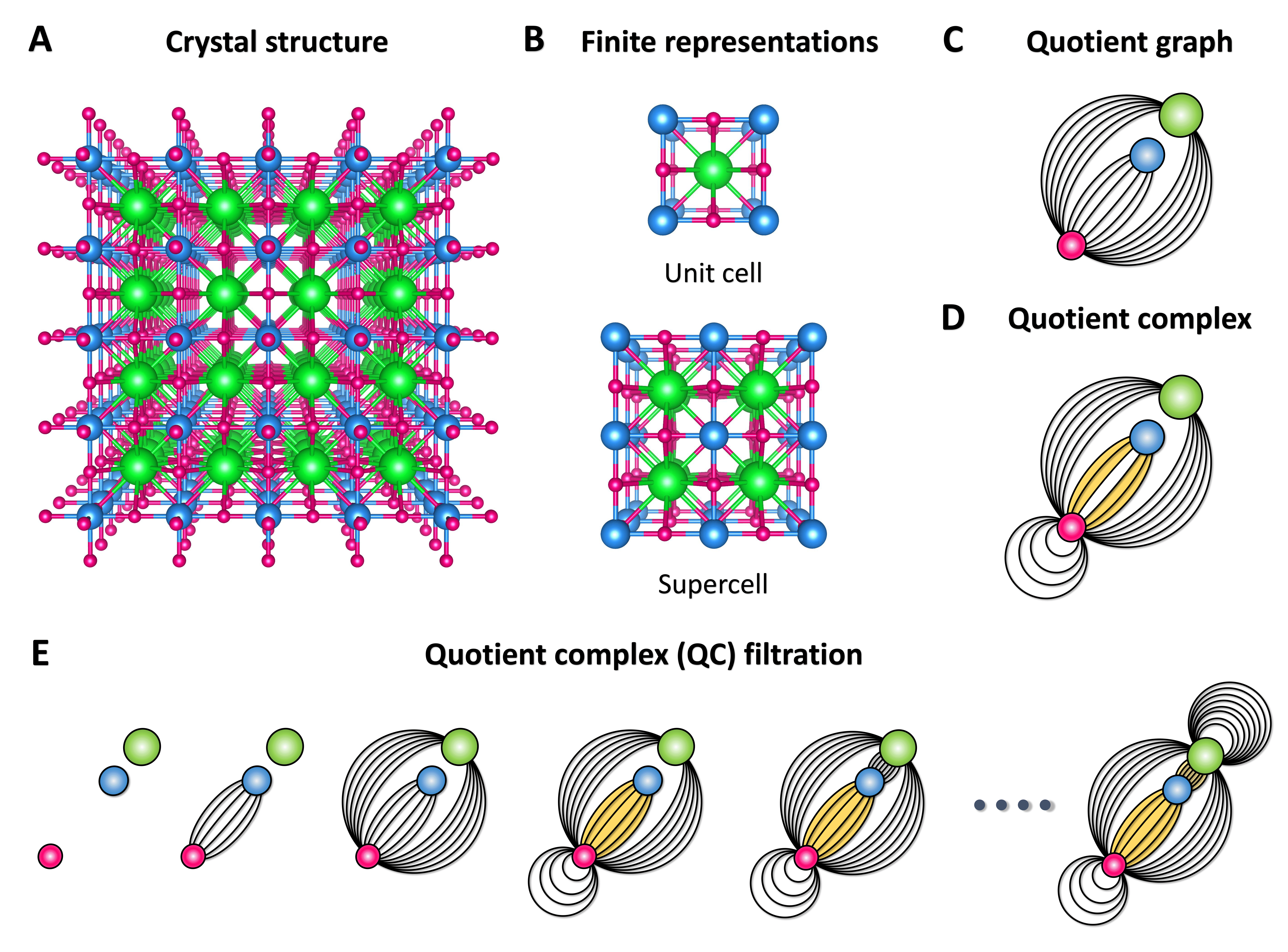}
\caption{Illustration of a crystal structure (Panel \textbf{A}) and its finite representations (Panel \textbf{B}), including the unit cell and supercell representations. The entire crystal structure is built by repeating the unit cell or supercell. In particular, the supercell depicted in \textbf{B} represents a $2 \times 2 \times 2$ extension of the unit cell. Based on the periodicity of duplicated atoms, a quotient graph representation is obtained (Panel \textbf{C}). Furthermore, by leveraging higher-dimensional objects (e.g., 2-dimensional surfaces) in the quotient graph, the quotient complex (Panel \textbf{D}) is established. Panel \textbf{E} illustrates a schematic image of a quotient complex filtration derived from the crystal structure. The crystal, unit cell, and supercell structures were visualized using the VESTA program~\cite{momma2011vesta}.}
\label{fig: Periodic structure}
\end{figure}

\section*{Introduction}
Materials innovation and design stand at the forefront of humanity's response to the grand challenge of achieving sustainable energy generation and combating climate change \cite{green2017fulfilling}. In less than a decade, perovskites have showcased remarkable power conversion efficiency (PCE), standing shoulder-to-shoulder with state-of-the-art technologies. The performance not only establishes them as a formidable contender but also hints at the possibility of a paradigm shift in photovoltaic (PV) technology \cite{grancini2019dimensional,mao2018two}. Especially, single-junction perovskite solar cells achieve an impressive 26\% efficiency, while perovskite-silicon tandem counterparts reach an even higher 33\%, outperforming single-junction silicon solar cells \cite{min2021perovskite, sharma2022stability}. In recent times, the advent of a diverse family of two-dimensional (2D) perovskites positions them as leading candidates for the next generation of perovskite materials. For instance, beyond the archetypal Ruddlesden-Popper (RP) phases, Dion–Jacobson (DJ), and alternating-cation-interlayer (ACI) structures have rapidly emerged~\cite{gong2022layered}. Nevertheless, the diverse range of organic cations available to the RP, DJ, and ACI phases gives rise to an incredibly intricate chemo-structural space, unlike their more straightforward 3D counterparts. This complexity not only makes traditional trial-and-error methods time-intensive but also remarkably costly and impractical for screening \cite{pilania2013accelerating}.

Over the past few years, data-driven materials informatics has demonstrated great potential for material design and discovery \cite{xie2018crystal,ramprasad2017machine,damewood2023representations}. Data-driven materials informatics involves the convergence of computational materials science with machine learning technologies and is regarded as the fourth paradigm of materials science \cite{hey2009fourth}. Notably, the Materials Genome Initiative (MGI) \cite{national2011materials}, initiated in 2011 within the United States, represents a pivotal milestone in data-driven materials informatics. MGI's ambitious goal was to foster extensive collaboration between materials scientists and computer scientists, harnessing the power of machine learning models to predict, screen, and optimize materials on an unprecedented scale and pace. This initiative spurred the establishment of pivotal databanks such as the Materials Project \cite{jain2013commentary}, JARVIS \cite{choudhary2020joint}, NOMAD, Aflowlib, and OQMD. Collaborating with other well-established materials repositories, they lay a robust foundation for data-centric artificial intelligence (AI) models. These models play a pivotal role in predicting materials' functionality, propelling forward the realms of new material design, and accelerating ground-breaking material discoveries. At present, AI models have been widely used in the analysis of various aspects of perovskite materials, encompassing single perovskites \cite{balachandran2018predictions, li2018stability, park2019exploring, schmidt2017predicting}, double perovskites \cite{ l2019machine, li2019thermodynamic}, lead (Pb)-free perovskites \cite{im2019identifying, jacobs2019materials, wu2019global}, and perovskite solar cell devices \cite{li2019predictions, odabacsi2020assessment}.

Crystal periodic information is crucial in shaping the physical properties of a material. In material science, periodic information is considered through two primary approaches. The first approach focuses on integrating periodicity into structural, chemical, and physical features and descriptors~\cite{rupp2012fast,huo2022unified,bartok2013representing}. For instance, the smooth overlap of atomic positions (SOAP) features, specifically designed for crystal structures, take into account both periodic and electronic interactions between translated unit cells~\cite{bartok2013representing}, and found great success in 2D perovskite bandgap prediction. In periodic geometry, periodic structural information is incorporated into crystal features, such as density fingerprints and density functions. Periodic geometry models provide a multiscale, continuous, and periodically invariant descriptor for the crystalline structure ~\cite{edelsbrunner2021density, anosova2023density, widdowson2023recognizing, anosova2022algorithms, kurlin2023polynomial}. The second approach is to develop a quotient topological representation that incorporates periodic information, which is known as topological crystallography and crystal-based discrete geometric analysis ~\cite{sunada2012topological,sunada2012lecture,sunada2016discrete}. Among them is the \textit{quotient graph}, which plays a pivotal role as a representation of the periodic crystalline structure. In periodic graph representation, graph symmetries, loop structures, and periodic tiling patterns all characterize material's periodic information~\cite{o2022symmetry,o2022tangled,delgado2023three,gao2020determining}.

Recently, topological data analysis (TDA) has found great success in the analysis of molecular data from biology, chemistry, and materials~\cite{cang:2017integration, cang:2017topologynet, cang:2018representability, nguyen2020review,wu:2018quantitative,wu2018topp,anand2022topological}. TDA-based molecular features have three unique properties that differ dramatically from previous material descriptors. First, TDA makes use of \textit{simplicial complexes}, which is a generalization of graphs into their higher-dimensional counterparts. Other than pair-wise interactions, simplicial complexes can also characterize high-order or many-body interactions. Second, a filtration process is used in TDA and naturally incorporates it into a multiscale representation, which can be suitable for the modeling of various ionic interactions, van der Waals interactions, etc. Third, TDA employs the topological invariant of homology as the descriptors. These invariants are more intrinsic and fundamental, thus providing a better transferability for learning models.  More recently, a periodic simplicial complex has been proposed to characterize the infinite material structure spanned by the unit cell under periodic conditions.
Essentially, driven by the need for materials characterization, the quantization of the homological structures of infinite periodic simplicial complexes has emerged as a crucial research focus in computational topology~\cite{Onus2022, onus2023computing, carletti2023global}. Specifically, certain methods for quantifying the homological groups of periodic simplicial/cell complexes, particularly focusing on loop structures within infinite material arrangements, have been proposed~\cite{Onus2022, onus2023computing}.

Here we propose \textit{quotient complex} (QC)-based material representation and QC-based machine learning models for 2D perovskite design. Our QC-based learning framework integrates several critical elements, including high-order representation with simplicial complexes, quotient complexes derived by encoding the periodicity to the underlying simplicial complex, and a multiscale feature representation based on PH. Specifically, we utilize the persistent homology (PH) of QC filtration as a representation of the 2D perovskite structure. Through this representation, we define a novel descriptor termed QC-based descriptor (QCD) based on the induced PH. In particular, the QC-based approach would be highly suitable for the elaborate family of 2D layered perovskites. Subsequently, we integrate these descriptors into a gradient boosting tree (GBT) model to predict materials' bandgaps. Our QC-GBT (quotient complex-based GBT) model demonstrates remarkable performance, comparable to, and in some cases even surpassing, existing state-of-the-art models, as validated extensively. These evaluations are conducted on the 2D perovskite dataset obtained from the new materials for solar energetics (NMSE) databank~\cite{marchenko2020database}, affirming the efficacy of our proposed approach.

\section*{Results}
\label{sec:Results}

\begin{figure*}[t!]%[tbhp]
\centering
\includegraphics[width=1.0\linewidth]{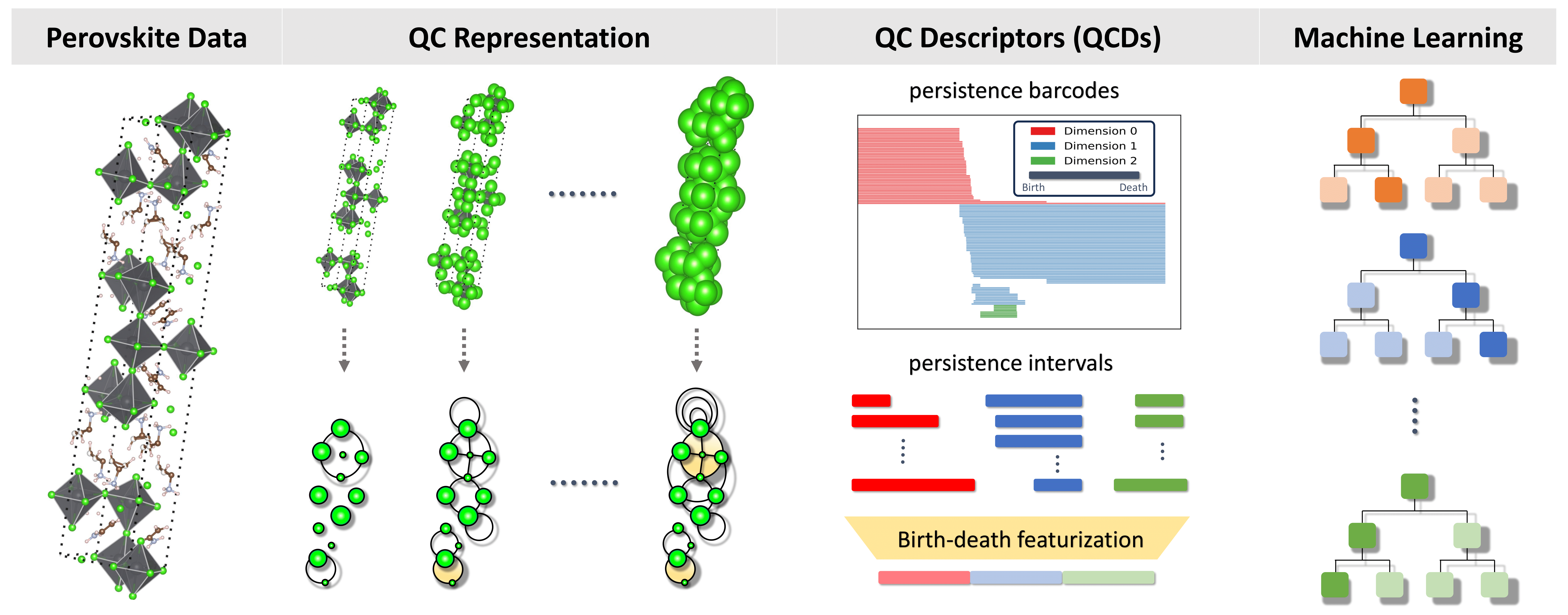}
\caption{Flowchart illustrating the quotient complex (QC)-based ML model used in this study to predict material bandgaps for 2D perovskite structures. In this work, we atomwisely compute the QC of a given material structure. The second panel of the flowchart depicts spheres centered at the iodine atoms of the material at varying radii, accompanied by the schematic visualization of the corresponding QC filtration. The third panel illustrates the persistence barcodes (PBs) of the QC filtration. Each PB collects intervals (or bars) with beginning and ending values based on the multiscale process, referred to as the birth and death of the interval. By examining the birth and death of intervals within the PB, the feature vector is generated, which serves as an input vector for the ML models. Finally, a gradient boosting tree (GBT) model is applied to the QC-based descriptors to predict material bandgaps. Perovskite structures were visualized using the VESTA program~\cite{momma2011vesta}.}
\label{fig:flowchart}
\end{figure*}

\subsection*{Quotient complex (QC)-based material representation}

We propose a multiscale representation of periodic crystalline structures based on quotient complexes (QCs). Illustrated in Fig. \ref{fig: Periodic structure}, a crystalline arrangement replicates unit cells of atoms (Fig. \ref{fig: Periodic structure}\textbf{A} and \textbf{B}), constituting the entire material. Using the coordinates of the periodic atomic systems, graphs or simplicial complexes are constructed to depict the geometric structure of the material. Furthermore, through the consolidation of periodic points within the system, we obtain the quotient graph and quotient complex (Fig. \ref{fig: Periodic structure}\textbf{C} and \textbf{D}). Ultimately, employing a multiscale approach to create a filtration of simplicial complexes (such as the Vietoris--Rips complexes), the induced filtration of QCs is chosen as the representation of the material (Fig. \ref{fig: Periodic structure}\textbf{E}).

Mathematically, the key components of our QC-based representation are quotient operation and filtration process. Consider the periodic vertex set $V$ as a subset of $\mathbb{R}^3$, representing the atomic system. This system is constructed by iteratively duplicating a local atomic arrangement referred to as a \textit{motif} within a \textit{unit cell}. Formally, a unit cell is defined as the parallelepiped generated by a basis $\{ v_1, v_2, v_3 \} \subseteq \mathbb{R}^3$, expressed as:
\begin{equation*}
U = U(v_1, v_2, v_3) = \left \{ \sum_{i=1}^3 a_iv_i \ \bigg| \ a_i \in [0,1) \right \},
\end{equation*}
and a motif $M$ is considered as a finite subset of $U$. Based on the basis $\{ v_1, v_2, v_3 \}$, the associated \textit{lattice group} is defined by
\begin{equation*}
\Lambda = \Lambda(v_1, v_2, v_3) = \left \{ \sum_{i=1}^3 a_iv_i \ \bigg| \ a_i \in \mathbb{Z} \right \}.
\end{equation*}
Then the periodic set $V$ is defined as the Minkowski sum
\begin{equation*}
V = M + \Theta = \bigcup_{w \in \Theta} M + w,
\end{equation*}
where $\Theta$ is a finite subset of $\Lambda$. In other words, $V$ is a union of finitely many translated copies of $M$ and plays a pivotal role in real applications, such as the supercell representation of the material~\cite{krishnapriyan2020topological, anand2022topological}. With the set $V$, two vertices $v$ and $w$ are regarded as periodic points, denoted as $v \sim_V w$, if either of them can be translated by vectors in the lattice $\Lambda$. That is,
\begin{equation}
\label{Eq. The periodic equivalence relation}
v \sim_V w \ \text{ if, and only if} \ v-w \in \Lambda.
\end{equation}
In particular, this translation relation forms an equivalence relation on the vertex set $V$. A QC with the vertex set $V$ can be constructed as follows. Let $K$ be a simplicial complex constructed from $V$, for instance, the Vietoris--Rips complex $K = K_\epsilon(V)$ generated using a fixed radius $\epsilon$~\cite{vietoris1927hoheren,chazal2013interleaved}:
\begin{equation*}
K_\epsilon(V) = \bigcup \{ {\rm conv}(X) \ | \ {\rm diam}({\rm conv}(X)) \leq \epsilon\text{, } X \subseteq_{\text{finite}} V  \},
\end{equation*}
where ${\rm conv}(X)$ denotes the convex hull of a finite set $X \subseteq V$ in $\mathbb{R}^3$ and ${\rm diam}({\rm conv}(X)) = \max\{ \Vert x-y \Vert  \ | \  x, y \in X \}$ is the diameter of ${\rm conv}(X)$. Regarding $\sim_V$ as an equivalence relation on the whole $K$, the QC is defined as the quotient topological space $\overline{K} := K/\sim_V$.

Further, a multiscale representation of the crystalline structure is achieved through a filtration process. Specifically, let $K_{\epsilon_1} \subseteq K_{\epsilon_2} \subseteq \cdots \subseteq K_{\epsilon_n}$ be a Vietoris--Rips filtration over the same $V$, with radii $\epsilon_1 < \epsilon_2 < \cdots < \epsilon_n$. Then, each $K_i$ induces a QC $\overline{K_{\epsilon_i}}$, and we obtain a sequence of QCs $\overline{K_{\epsilon_1}}, \overline{K_{\epsilon_2}}, \ldots, \overline{K_{\epsilon_n}}$. Indeed, as the equivalence relations $\sim_V$ remain the same across all $K_{\epsilon_i}$, there exists a canonical inclusion map $\overline{K_{\epsilon_i}} \hookrightarrow \overline{K_{\epsilon_{i+1}}}$ for each $i = 1, 2, ..., n-1$. Consequently, we obtain the following filtration of QCs:
\begin{equation*}
\emptyset \subseteq \overline{K_{\epsilon_1}} \subseteq \overline{K_{\epsilon_2}} \subseteq \cdots \subseteq \overline{K_{\epsilon_n}}.
\end{equation*}

\begin{table*}[t!]
\centering
\caption{Comparison of the performances of methods in predicting 2D perovskites' DFT-based bandgaps based on the NMSE database~\cite{marchenko2020database}. The compared methods include SOAP-based models (KRR~\cite{mayr2021global} and  MLM1~\cite{marchenko2020database}) and GNN models (GCN~\cite{na2023substructure}, ECCN~\cite{na2023substructure}, CGCNN~\cite{na2023substructure}). The numerical count of materials used in various experiments is presented in the second column. Performance assessments of GCN, ECCN, CGCNN, TFGNN, SIGNNA, and SIGNNA$_c$ were replicated following Na's work~\cite{na2023substructure}. The evaluation metrics encompass the coefficient of determination (COD), Pearson correlation coefficient (PCC), mean absolute error (MAE), and root-mean-square error (RMSE). These metrics represent average scores derived from 5-fold cross-validation in $5$ times. Bold and underlining notations are used to signify the most and second exceptional results, respectively.}
\begin{tabular}{|lccccc|}
 \hline
 Method & Number of Compounds & COD & PCC & MAE (eV) & RMSE (eV) \\ % [0.5ex]
 \hline\hline
 SOAP-KRR~\cite{mayr2021global} & 445 & 0.6700 & - & 0.1250 & 0.2530 \\
 SOAP-MLM1~\cite{marchenko2020database}  & 515 & 0.9005  & - &  0.1030 &  0.1360 \\
 \hline
 GCN~\cite{na2023substructure} & 624 & 0.6030 & - & 0.1910 & - \\
 ECCN~\cite{na2023substructure} & 624 & 0.3170 & - & 0.2240 & - \\
 CGCNN~\cite{na2023substructure} & 624 & 0.6450 & - & 0.1750 & -  \\
 TFGNN~\cite{na2023substructure} & 624 & 0.6650 & - & 0.1620 & - \\
 SIGNNA~\cite{na2023substructure} & 624 & 0.9080 & - & 0.0920 & - \\
 SIGNNA$_c$~\cite{na2023substructure} & 624 & \textbf{0.9270} & - & 0.0830 & - \\
 \hline
 QC-GBT  & 445 & 0.9111 & 0.9558 & 0.0751 & 0.1102 \\
 QC-GBT  & 515 & 0.9092 & 0.9549 & 0.0754 & 0.1139 \\
 QC-GBT  & 624 & 0.9207 & \textbf{0.9613} & \underline{0.0722} & \underline{0.1086} \\
 QC-GBT  & 716 & \underline{0.9213} & \underline{0.9609} & \textbf{0.0700} & \textbf{0.1077} \\
 \hline
\end{tabular}
\label{Table: Performances}
\end{table*}

\subsection*{Quotient complex (QC)-based material descriptor}

Based on the filtration process, we obtain the following QC-based PH in degree $q = 0, 1, 2$:
\begin{equation*}
{\rm PH}_q(\overline{K_{\epsilon_\bullet}}): 0 \rightarrow H_q(\overline{K_{\epsilon_1}}) \rightarrow H_q(\overline{K_{\epsilon_2}}) \rightarrow \cdots \rightarrow H_q(\overline{K_{\epsilon_n}}).
\end{equation*}
From the persistent homology ${\rm PH}_q(\overline{K_{\epsilon_\bullet}})$ information, we can generate the associated \textit{quotient complex descriptors} (QCDs) for the crystal structure. Specifically, we calculate the persistence barcode (PB) of ${\rm PH}_q(\overline{K_{\epsilon_\bullet}})$, which is a multiset of persistence intervals $(b,d)$ with real numbers $b < d$, recording the birth and death of $q$-dimensional homological structures in the filtration~\cite{CZCG05,Ghrist:2008}.
In practice, statistical measures of birth/death information, such as birth/death moments, lifespans, and midpoints, along with Betti curves, are utilized to summarize the $q$-dimensional barcodes with $q = 0, 1, 2$. These serve as descriptors for the crystalline material and form the input features for the machine learning model. Additional details regarding the QC filtration, PH of the QC filtration, and the QCDs are provided in the Materials and Methods section and the Appendix \ref{Appendix: Quotient Complex Descriptors (QCDs)}.

The crystal structure's QC-based material representation and descriptors seamlessly encode the interactions of atoms and the associated periodicity of the crystalline structure, playing a crucial role in shaping the material's structural properties. In particular, going beyond the interactions between paired atoms within the unit cell, our proposed QC structure takes into account the intrinsic and complex ``intra-interactions'' of atoms in groups containing more than three atoms, utilizing the based simplicial complex structure. Moreover, the QC structure adeptly incorporates the periodic information within the material into its homological structure. For instance, the self-loops (e.g., Fig. \ref{fig: Periodic structure}\textbf{D}) within the QC representation notably depict the ``inter-interactions'' of periodic atoms, capturing the geometric relationships between distinct periodic cells. Moreover, the more intrinsic inter-interactions among cells may be reflected through more intricate loops in QC. From this perspective, elements in the homology groups $H_q(\overline{K})$ serve not only to discern the global topology of the simplicial complex $K$ but also to depict the local geometry and periodicity inherent in the crystalline structure. In particular, we conduct a theoretical analysis of the persistence intervals within the PHs $H_q(K_{\epsilon_\bullet})$ and $H_q(\overline{K_{\epsilon_\bullet}})$ and classify them based on their lifespans. Especially, for the $1$-th PH we prove that
\begin{equation*}
\begin{split}
{\rm PB}_1(\overline{K_{\epsilon_\bullet}}) &= {\rm PB}_1^{\rm finite}(\overline{K_{\epsilon_\bullet}}) \cup {\rm PB}_1^{\infty}(\overline{K_{\epsilon_\bullet}}) \\
&= {\rm PB}_1(K_{\epsilon_\bullet}) \cup {\rm PB}_1^{\infty}(\overline{K_{\epsilon_\bullet}}),
\end{split}
\end{equation*}
where ${\rm PB}_1^{\rm finite}(\overline{K_{\epsilon_\bullet}})$ and ${\rm PB}_1^{\infty}(\overline{K_{\epsilon_\bullet}})$ represent the collections of finite and infinite persistence intervals within the QC filtration. These intervals correspond to the global topology structure and the periodicity information of the material, respectively.  Theoretical discussions and proofs are presented in Appendix \ref{Appendix: Mathematical Foundation of QC Model}.

\subsection*{QC-based machine learning for 2D perovskites}
The flowchart of our QC-based learning model for 2D perovskite analysis can be found in Fig. \ref{fig:flowchart}. The methodology comprises three primary phases. Firstly, a diverse array of atom types and sites within the unit cell, encompassing organic atoms (e.g., $\rm O$), inorganic atoms (e.g., $\rm Sn$), halide atoms (e.g., $\rm Cl$), organic site ($\rm A$), inorganic site ($\rm B$), halide site ($\rm X$), and their combinations (e.g., ${\rm A}_{\rm C}{\rm B}{\rm X}$, which collects ${\rm C}$ atoms to represent the organic site) are systematically extracted from the dataset. Subsequently, a cohesive augmentation of atom radii is achieved through a filtration process to establish a multiscale cell complex filtration of QCs. This filtration yields a collection of 1D feature vectors generated based on its PH. Finally, leveraging the remarkable accuracy and robustness of the Gradient Boosted Tree (GBT) model, we utilize these feature vectors to predict material bandgaps~\cite{friedman2001greedy,salami2022estimating, piryonesi2020data,chun2021stroke}.

We employ standard statistical measures to assess the performance of our QC-GBT model, utilizing four key indices for error evaluation~\cite{marchenko2020database, mayr2021global, na2023substructure}: the coefficient of determination (COD), Pearson correlation coefficient (PCC), mean absolute error (MAE), and root-mean-square error (RMSE). The same 5-fold cross-validation is employed for all experiments. Table \ref{Table: Performances} lists the results. It can be seen that our QC-GBT model has a COD of 0.9207 and an MAE of 0.0722 eV, highlighting its superior performance.

\subsection*{Extensive comparison with state-of-the-art models}
We conduct an extensive comparative analysis of the proposed QC-GBT against state-of-the-art models for predicting the bandgap of 2D perovskites. This evaluation encompasses models based on the smooth overlap of atomic positions (SOAP) kernel, namely SOAP-based KRR~\cite{mayr2021global} and MLM1~\cite{marchenko2020database}), as well as Graph Neural Network (GNN) models including GCN \cite{welling2016semi}, ECCN \cite{park2020developing}, CGCNN \cite{xie2018crystal}, TFGNN \cite{shi2020masked}, and SIGNNA \cite{na2023substructure}. %A comprehensive breakdown of their performances is outlined in Table \ref{Table: Performances}, with each model utilizing the same 5-fold cross-validation procedure.

The proposed model can outperform molecular feature-based ML models based on the mean absolute error (MAE) values (eV). Notably, the SOAP descriptor-driven GBT model, namely SOAP-MLM1 \cite{marchenko2020database}, has recently been employed for predicting 2D perovskite bandgaps. This model exhibits a coefficient of determination (COD) of approximately 0.90 and an MAE of around 0.103 eV. Additionally, the SOAP descriptor has been integrated with an autoencoder network and the kernel ridge regression (KRR) technique to facilitate feature reduction and bandgap prediction, primarily in the context of 3D perovskite datasets \cite{kim2017hybrid,mayr2021global}. However, it is worth noting that, when evaluated against the NMSE database, the SOAP-KRR model's performance falls short compared to MLM1.

Our QC-GBT models consistently outperform in comparison to the GNN models in Table \ref{Table: Performances}, as evidenced by lower MAE values. Recent advancements in GNN-based models listed in Table \ref{Table: Performances} have showcased remarkable capabilities in handling 3D perovskite datasets~\cite{mayr2021global, anand2022topological, schutt2017schnet, chen2019graph}. In a comprehensive assessment against GNN-based methods, namely CGCNN~\cite{mayr2021global}, GCN~\cite{welling2016semi}, ECCN~\cite{park2020developing}, TFGNN~\cite{shi2020masked},  SIGNNA~\cite{na2023substructure}, and SIGNNA$_c$~\cite{na2023substructure} carried out using the NMSE database, our models consistently exhibit a better performance. In particular, SIGNNA$_c$ is an enhanced version of SIGNNA that incorporates additional chemical information and machine-learning feature vectors within the SIGNNA architecture. In comparison to SIGNNA, SIGNNA$_c$ achieves improved results with a COD of 0.9270 and MAE of 0.0830 (eV). While SIGNNA$_c$ performs slightly better than QC-GBT in terms of COD, QC-GBT still outperforms SIGNNA$_c$ with a better MAE of 0.0722 (eV), even when applied to smaller datasets (e.g., 0.0751 eV). Overall, fingerprint-based models consistently outperform these GNN models, possibly due to limitations in training data volume and the increased complexity of unit cell structures.

\begin{table}
\centering
\caption{Predictions for bandgaps in five new materials within NMSE are made using the QC-GBT model, with computed DFT bandgaps~\cite{marchenko2020database} serving as new prediction labels. The indices and chemical formulas of the tested materials in NMSE are presented in the first and second columns. In the third column, DFT-computed bandgaps (eV) for these materials, as determined in our study, are displayed. The fourth column showcases the prediction outcomes of the QC-GBT model. These predictions represent average scores obtained over 5 iterations of training and testing. The training dataset for this experiment encompasses 716 materials from NMSE, each associated with a known DFT bandgap as documented in the NMSE database.}
\begin{tabular}{|cccc|}
 \hline
 No. & Formula & DFT bg (eV) & QC-GBT (eV) \\ % [0.5ex]
 \hline\hline
 110 &
 $[({\rm C}_6{\rm H}_{11}){\rm P}{\rm H}_3]_2{\rm Sn}{\rm I}_4$ &  1.3540 & 2.1087
 \\
 309 &  $[{\rm C}_2{\rm H}_5{\rm N}{\rm H}_3]_2{\rm Fe}{\rm Cl}_4$ & 2.7002 & 3.2804 \\
 315 &
 ${\rm Cs}_3{\rm Sb}_2{\rm I}_9$ & 1.5450 & 2.1718
 \\
 363 &  $[{\rm C}{\rm H}_3{\rm N}{\rm H}_3]_2{\rm Fe}{\rm Cl}_4$ & 3.1661 & 3.2603
 \\
 452 &  ${\rm Tl}_3{\rm Bi}_3{\rm I}_9$ & 2.3780 & 2.2812
 \\
 \hline
\end{tabular}
\label{Table: Testing on real data with details}
\end{table}

\subsection*{Bandgap prediction on new peroveskites}
To assess the predictive capabilities of the QC-GBT model, we applied it to predict the bandgaps of new 2D perovskites within the NMSE database. More specifically, we use DFT to evaluate bandgaps for perovskites in the NMSE database, that lacked DFT bandgap values. Five perovskites with indices 110, 309, 315, 363, and 452, are considered. Table \ref{Table: Testing on real data with details} illustrates the calculated and predicted bandgap values for five distinct compounds with indices 110, 309, 315, 363, and 452 that are annotated in NMSE. Treating the calculated DFT bandgaps as test labels, the MAE, COD, and PCC for QC-GBT are 0.4297 (eV), 0.43, and 0.879, respectively. While these MAE values are notably higher than those presented in Tables \ref{Table: Performances}, this disparity can be explained by the significantly smaller size of the test set in this experiment. Notably, QC-GBT's predictions for materials 452 (3.2603 eV) and 363 (2.2812 eV) closely match the DFT bandgap labels (3.1661 and 2.378 eV), with absolute errors of 0.0942 and 0.0967 eV, respectively. In contrast, the predicted value for material 309 is 3.2804 eV, resulting in an absolute error of 0.5802 eV compared to the DFT label (2.7002 eV).

\section*{Methods}
\begin{figure*}[t!]%[tbhp]
\centering
\includegraphics[width=1.0\linewidth]{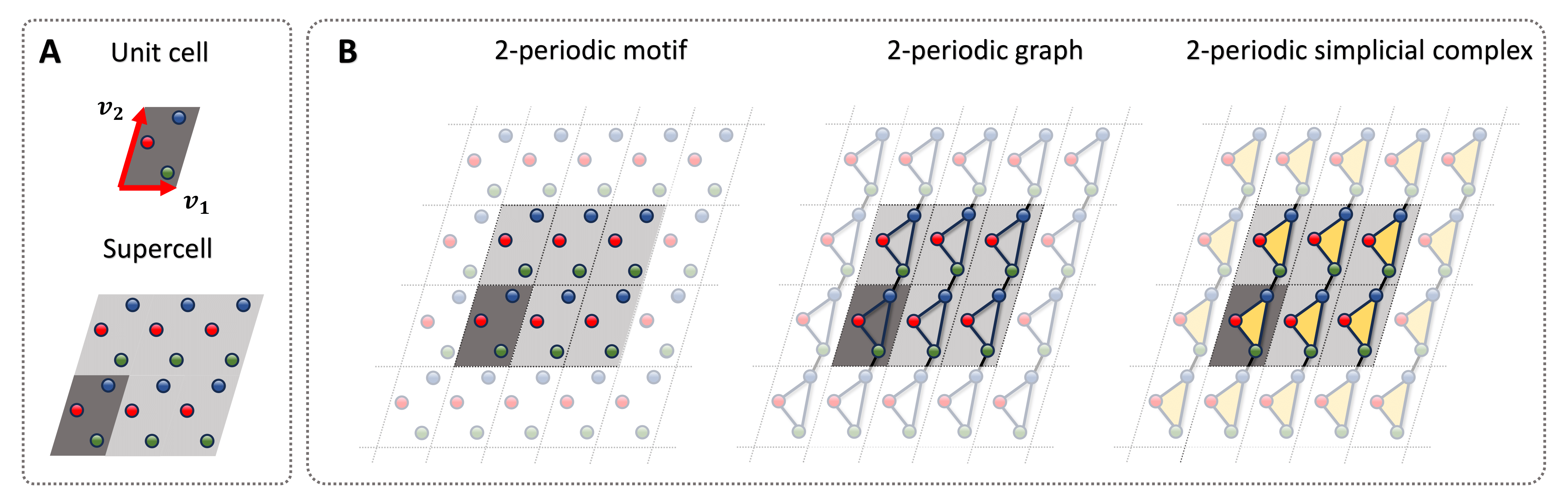}
\caption{A 2D illustration includes a unit cell, supercell, periodic motif, periodic simplicial complex, and a periodic graph. In Panel \textbf{A}, the red arrows ($v_1$ and $v_2$) accompanying the unit cell specifically represent a basis within 2-dimensional Euclidean space $\mathbb{R}^2$. In comparison to the shaded region of the supercell, the darker region denotes the unit cell containing $3$ points, forming the parallelepiped spanned by the basis vectors. In this example, a $3 \times 2$ supercell is illustrated, comprising a union of $6$ translated cells, totaling $3 \times 3 \times 2 = 18$ points. Panel \textbf{B} displays examples of $2$-dimensional periodic objects derived from unit cell information: a $2$-periodic motif, a $2$-periodic graph, and a $2$-periodic simplicial complex. The shaded regions within these objects delineate finite $d$-periodic objects, which constitute the primary focus of this study.}
\label{fig:periodic objects}
\end{figure*}

\subsection*{Mathematical foundation for QC-based material representation}
Mathematics serves as a fundamental pillar of materials science, providing a robust framework for representing crystalline data at the molecular level. Material structures are usually represented as unit cells, motifs, and space groups, which have led to the widespread adoption of the crystallographic information file (CIF) as the standard format~\cite{hall1991crystallographic}. Moreover, through the mathematical expressions of unit cells, motifs, and supercells, the corresponding simplicial and the proposed quotient complexes can be rigorously formulated. This section systematically introduces these mathematical formulations. More detailed analyses, discussions, and theoretical proofs are provided in Appendix \ref{Appendix: Mathematical Foundation of QC Model}.

\subsubsection*{Basic mathematical setting for material structures}
Rigorous mathematical formulations for unit cells, motifs, and supercells in $\mathbb{R}^3$ are of key importance for material structure representation. Specifically, a $d$\textit{-dimensional lattice group} is a group of the form
\begin{equation*}
\Lambda = \Lambda(\mathcal{B}) =  \mathbb{Z}v_1 \oplus \mathbb{Z}v_2 \oplus \cdots \oplus \mathbb{Z}v_d \simeq \mathbb{Z}^d,
\end{equation*}
where $\mathcal{B} = \{ v_1, v_2, ..., v_d \} \subseteq \mathbb{R}^d$ represents an $\mathbb{R}$-linear basis of $\mathbb{R}^d$. In crystalline analysis, the number $d$ is commonly considered to be $1, 2,$ or $3$. Subsequently, based on the basis $\mathcal{B}$, the \textit{unit cell}
\begin{equation*}
U = U(\mathcal{B}) := \left \{ \sum_{i = 1}^d c_iv_i \ \bigg| \ c_i \in [0,1) \right \}
\end{equation*}
is defined as the $d$-dimensional half-open parallelepiped generated by the vectors $v_1, v_2, ..., v_d$. Furthermore, a finite subset $M$ within the unit cell is called a \textit{motif}, which forms a unit atomic system that generates the entire material through translations. Mathematically, a $d$\textit{-periodic motif} (or a $d$\textit{-periodic set}) is expressed by the Minkowski sum
\begin{equation*}
S = M + \Theta,
\end{equation*}
where $\Theta$ is any subset of the lattice group $\Lambda$. In particular, when $\Theta = \Lambda$, then $S$ is an infinite set. From a computational perspective, we often consider a finite $\Theta$ and build the quotient complex representation on this finite atomic system.

\subsubsection*{Periodic simplicial complex and quotient complex for material representation}

\paragraph{Periodic simplicial complex model}
Going beyond the set of discrete points in $\mathbb{R}^d$, simplicial complexes in $\mathbb{R}^d$ can also exhibit periodic structures. Formally, a simplicial complex $K$ embedded in $\mathbb{R}^d$ is called a $d$\textit{-dimensional periodic simplicial complex} (or just a $d$\textit{-periodic simplicial complex}) if
\begin{equation*}
K = L + \Theta,
\end{equation*}
where $L$ is a finite simplicial complex and $\Theta$ is a subset of $\Lambda$~\cite{Onus2022,onus2023computing}. Based on the definition, the finiteness of $K$ is determined by the choice of $\Theta$, typically regarded as a finite set when constructing and computing simplicial complexes on a finite atomic system. On the other hand, when $\Theta = \Lambda$, there exists a group action of $\Lambda$ that transits effectively on $K$. In other words, $v+\sigma$ forms a simplex of $K$ whenever $v \in \Lambda$ and $\sigma$ is a simplex of $K$. This aligns with the settings when contemplating infinite $d$-periodic simplicial complexes~\cite{Onus2022, onus2023computing}.

To concisely summarize periodic objects, Fig. \ref{fig:periodic objects} visualizes the concepts of the unit cell, supercell, periodic motif, periodic graph, and periodic simplicial complex. In Panel \textbf{A}, the darker gray region specifically represents a 2-dimensional unit cell consisting of three points. The red arrows accompanying the unit cell form a basis $\mathcal{B} = \{ v_1, v_2 \}$ of $\mathbb{R}^2$ and span the unit cell $U(\mathcal{B})$ as a 2-dimensional parallelepiped. Within the unit cell, the three points form a 2-dimensional motif $M$, and the entire $2$-periodic motif is represented as the first image in Panel \textbf{B}. Mathematically, the entire motif can be represented by $S = M + \Lambda$, where
\begin{equation*}
\Lambda = \{ av_1 + bv_2 \ | \ a, b \in \mathbb{Z} \}
\end{equation*}
is the 2-dimensional lattice group generated by the basis $\mathcal{B}$. Based on the lattice group $\Lambda$, any two points are annotated with the same color (red, blue, and green) if they can be translated to each other by a vector in $\Lambda$. Furthermore, as a special case of finite periodic motifs, the $3 \times 2$ supercell illustrated in Panel \textbf{A} is a union of 6 translated cells, which can be formulated by
\begin{equation*}
V = \bigcup_{w \in \Theta} M + w,
\end{equation*}
where $\Theta = \{ av_1 + bv_2 \ | \ a \in \{ 0, 1, 2, 3 \}, b = \{ 0, 1, 2 \} \}$. From a computational standpoint, the set $V$ gathers finitely many points as a point cloud, offering a platform for the application of data analysis tools (e.g., PH).

In Panel \textbf{B}, besides the periodic motif, two examples of periodic objects are highlighted: a $2$-periodic graph and a $2$-periodic simplicial complex. Within the periodic simplicial complex, the 2-simplices are represented as yellow triangles. A periodic graph, as a special case of periodic simplicial complexes, encompasses vertices and edges as components of the 0 and 1-simplices. As depicted in these two examples, the translation of each $q$-simplex ($q = 0, 1, 2$) also belongs as a part of the entire graph and simplicial complex, respectively. This translation action on simplices defines the periodicity of these two geometric objects. Notably, the same periodic relations of simplices can be defined for finite simplicial complexes, forming the central focus of this study. As depicted in Panel \textbf{B}, the graph and simplicial complex bounded by the $3 \times 2$ supercell (the shaded region) exemplify a $2$-period graph and a $2$-period simplicial complex within a bounded region.

\paragraph{Quotient complex model}
Based on the translation relations on simplices, we introduce a framework known as the quotient complex (QC) representation for any simplicial complex with periodic information. Specifically, for any simplicial complex $K \subseteq \mathbb{R}^d$ that contains a $d$-periodic subcomplex $A$, we define an equivalence relation $\sim_A$ as follows:
\begin{equation}
\label{Eq. Equivalence relation on A}
x \sim_A y \text{ if, and only if } x - y \in \Lambda(\mathcal{B}).
\end{equation}
Viewing $\sim_A$ as an equivalence relation on the entire $K$, we may define the quotient complex $\overline{K}$ as the quotient topological space $\overline{K} = K/\sim_A$. Based on the equivalence relation, points $x$ and $y$ in $A$ with $x \sim_A y$ are glued as the same point $[x] = [y]$ in the space $\overline{K}$. Alternatively, points belonging to $K \setminus A$ remain fixed under the equivalence relation $\sim_A$. It is noteworthy that although the equivalence relation is induced by the infinite lattice group $\Lambda(\mathcal{B})$, the equivalence relation defined in \eqref{Eq. Equivalence relation on A} is also valid on a finite $d$-periodic simplicial complex $A$. In applications, it provides a more flexible and feasible approach for computing data from a finitely sampled point cloud.

Fig. \ref{fig: Quotient Complexes} illustrates three examples of quotient complexes obtained from different equivalence relations on a finite 2-periodic simplicial complex. In this figure, the base simplicial complex $K$ consists of 6 vertices, 7 edges, and 2 triangles and is contained in a $1 \times 2$ supercell embedded in $\mathbb{R}^2$. Subcomplexes $G$ and $V$ of $K$ are defined as the union of all edges and vertices of $K$, and both of them admit equivalence relations of translations on their simplices. Let $\sim_G$ and $\sim_V$ be the equivalence relations of translations on $G$ and $V$, respectively. Then, the corresponding quotient complexes $K/\sim_K$, $K/\sim_G$, and $K/\sim_V$ are depicted in Fig. \ref{fig: Quotient Complexes}, showing different quotient topologies. In detail, $q(K) = K/\sim_K$ has Betti numbers (cf., \cite{munkres2018elements}) $(\beta_0, \beta_1, \beta_2) = (1,1,0)$, $K/\sim_G$ has $(1,1,1)$, and $K/\sim_V$ has $(1,3,0)$, showing the different loops and void structures.
In this work, our main focus will be on exploring quotient complexes of the form $K/\sim_V$ from a computational perspective.
 
\subsection*{QC-based learning model for 2D perovskite design}
\subsubsection*{Database}
The production and characterization of 2D perovskites pose intricate challenges, leading to a lack of comprehensive databases concerning these materials. However, a notable breakthrough occurred in 2021 when the Laboratory of New Materials for Solar Energetics (NMSE) initiated a pioneering crowd-sourcing endeavor. This initiative aimed to aggregate 2D perovskite data from diverse experiments and literature sources, as evidenced by the work of Marchenko et al.~\cite{marchenko2020database}. A remarkable facet of this undertaking is the establishment of an open-access repository teeming with crystal structures featuring atomic coordinates. Moreover, the database offers both density functional theory (DFT)-derived and experiment-based (Exp) bandgap values, expressed in electronvolts (eV), alongside machine learning (ML)-based atomic partial charges. This repository, meticulously maintained by the NMSE, stands unrivaled in its comprehensiveness, serving as an indispensable resource for delving into the world of 2D perovskite crystalline materials. Currently, the repository encompasses an impressive collection of 849 compounds possessing 3D structures. Utilizing the pymatgen package~\cite{ong2013python} to process CIFs~\cite{hall1991crystallographic}, we have gathered a total of 716 compounds characterized by DFT-based bandgap values or labels. Additionally, we have included 235 compounds with experimentally determined bandgap values.

\begin{figure}%[tbhp]
\centering
\includegraphics[width=0.5\linewidth]{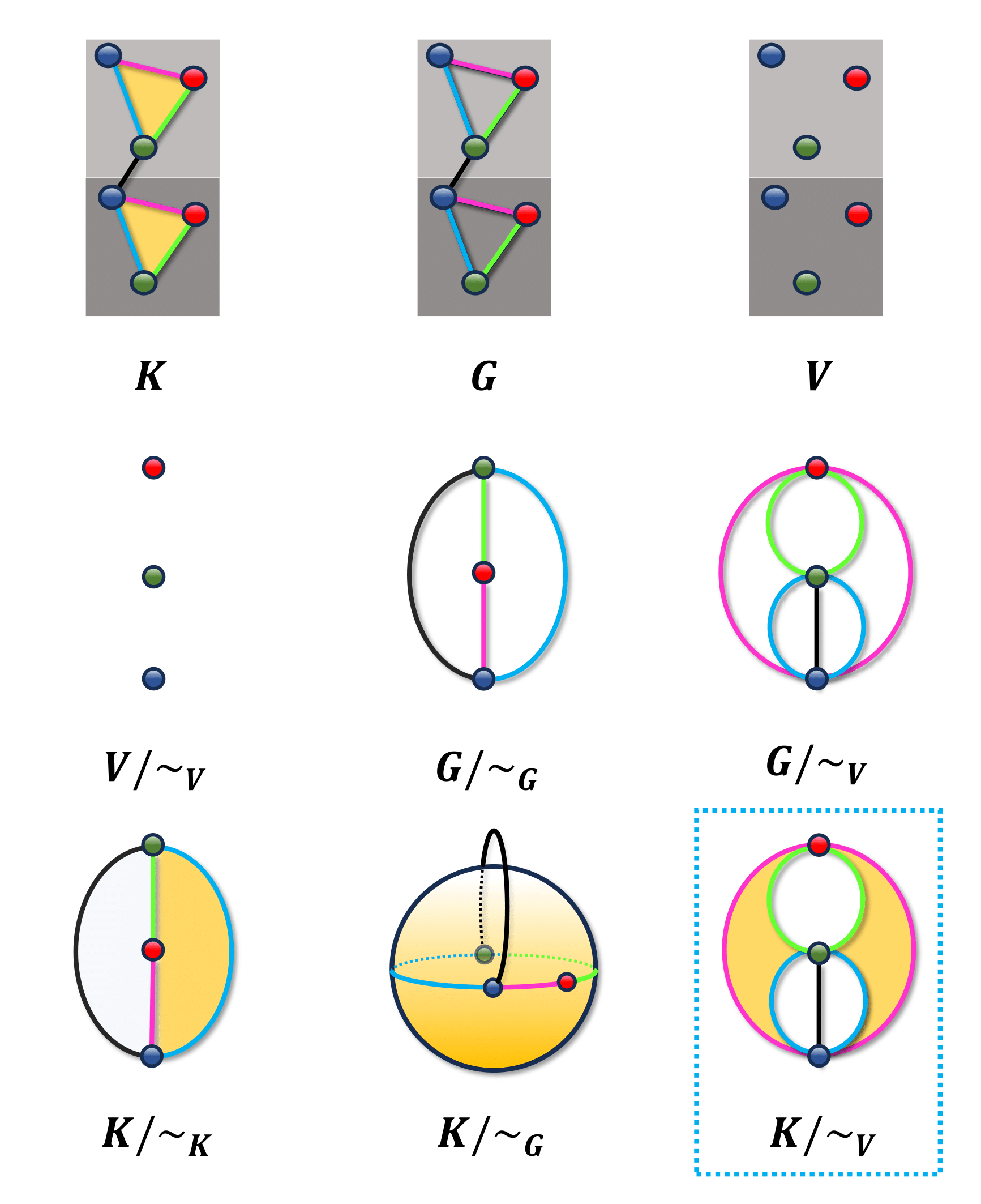}
\caption{Illustration of QCs obtained from a finite 2-periodic simplicial complex $K$ and its subcomplexes $G$ and $V$ within a 2-copies supercell embedded in $\mathbb{R}^2$. In this example, three equivalence relations are defined on $K$: $\sim_K$, $\sim_G$, and $\sim_V$, where $G$ and $V$ are subcomplexes of all 1-simplices (i.e., edges) and 0-simplices (i.e., vertices), respectively. For the illustrations of $K$, $G$, and $V$, vertices, edges, and triangles are annotated with the same color if they are equivalent based on the translation action on simplices. The QCs $V/\sim_V$, $G/\sim_G$, $G/\sim_V$, $K/\sim_K$, $K/\sim_G$, $K/\sim_V$ are illustrated in the second and third row, with particular emphasis on $K/\sim_V$ as the main focus in this study.}
\label{fig: Quotient Complexes}
\end{figure}

\subsubsection*{QC-based representation}

From a computational efficiency standpoint, this work utilizes a process for constructing the finite periodic motif, the QC filtration, and QC-based PH, aligning with the CIF data format of material structure. In terms of the finite periodic motif, our approach differs from supercell extensions, like the $3 \times 3 \times 3$ or $5 \times 5 \times 5$ extensions, as we concentrate on the union of $M$ and the other three translated motifs:
\begin{equation}
\label{Eq. 4M}
\begin{split}
V &= M \cup (M + v_1) \cup (M + v_2) \cup (M + v_3),
\end{split}
\end{equation}
where $\mathcal{B} = \{ v_1, v_2, v_3 \}$ is the $\mathbb{R}$-basis indicated by the material's CIF file. To concisely and efficiently represent the material, we primarily gather points in the original motif $M$ and incorporate the material's periodicity by considering its union with the translated copies $M + v_i$.

Once we have obtained the extended atomic system $V$ as shown in~\eqref{Eq. 4M}, we employ the bipartite distance $d: V \times V \rightarrow \mathbb{R} \cup \{ +\infty \}$ to characterize the interaction relationships between the original and shifted unit cells, with the aim of reducing computational costs:
\begin{equation}
\label{Eq. Distance function of extended cells-v1}
d(u,v) = \begin{cases}
 	 +\infty &\quad\null\text{ if } u \notin M \text{ and
 } v \notin M, \\
 	\Vert u - v \Vert_2 &\quad\null\text{ otherwise.}
 	\end{cases}
\end{equation}
The distance function in~\eqref{Eq. Distance function of extended cells-v1} associates a distance matrix, which can be used to construct the Vietoris--Rips filtration $K_{\epsilon_\bullet}$ of the atomic system as a point cloud~\cite{chazal2013interleaved}.  Similar bipartite distances, as in~\eqref{Eq. Distance function of extended cells-v1}, have been widely used for extracting PH-based representations for protein-protein, protein-ligand, and protein-nucleic acid interactions in computational bioinformatics~\cite{cang:2017integration, cang:2018representability}. Furthermore, by defining $v \sim_V w$ as $v - w \in \Lambda(\mathcal{B})$ (i.e., \eqref{Eq. The periodic equivalence relation}), the QC $\overline{K_{\epsilon_\bullet}}$ with $\overline{K_{\epsilon_i}} = K_{\epsilon_i}/\sim_V$ is defined.

Beyond enhancing computational efficiency, these configurations afford us a geometric insight into the relationship between unit cell information and the induced PH (refer to Theorem \ref{Theorem: Unit cell info as PB intervals}).

Fig. \ref{fig: Barcode information} shows an auxiliary example of the connection between unit cell information and QC-based PH. In this example, the unit cell is a parallelepiped generated by the basis $\{ v_1 = (10,0,0), v_2 = (0,20,0), v_3 = (0,0,30) \}$, containing $M = \{ (0,0,0), (5,10,15) \}$ as its motif. The figure depicts the lengths $|v_1|$, $|v_2|$, and $|v_3|$ appear as the birth values of certain persistence intervals within the ${\rm PB}_1^\infty$ of its QC filtration. Furthermore, more persistence intervals in ${\rm PB}_1^\infty$ have birth values beyond the lengths of the basis vectors and show their potential in extracting more complicated and intrinsic periodicity contained in the material.

\begin{figure}
	\centering
	\includegraphics[width=0.6\linewidth]{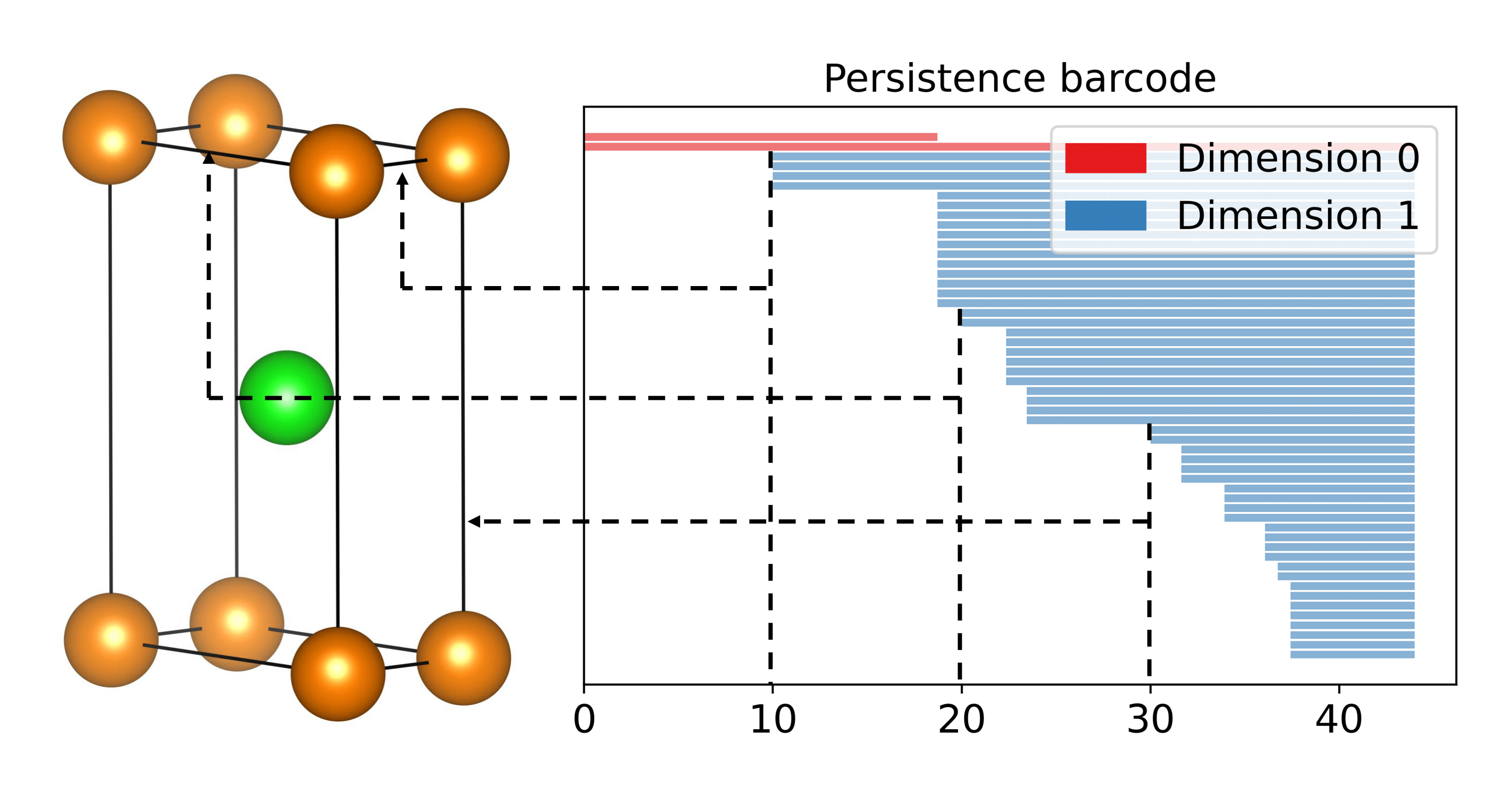}
	\caption{Illustration of a $10 \times 20 \times 30$ unit cell encompassing the motif ${ (0,0,0), (5,10,15) }$ and the induced persistence barcode of the quotient complex filtration. The quotient complex filtration is constructed from a Vietoris--Rips filtration based on settings in \eqref{Eq. The periodic equivalence relation}, \eqref{Eq. 4M}, and \eqref{Eq. Distance function of extended cells-v1} with filtration levels ranging from $0$ to $40$. }	
	\label{fig: Barcode information}
\end{figure}

\subsubsection*{QC-based descriptors}

To generate descriptors and features from QC-based filtrations, we compile birth and death information within the persistence barcodes ${\rm PB}_0$, ${\rm PB}_1^{\rm finite}$, ${\rm PB}_1^{\infty}$, and ${\rm PB}_2$ using statistical measures and Betti curves~\cite{umeda2017time}. In terms of statistical measurements, we analyze various aspects of birth, death, and lifespan information within these persistence barcodes. Specifically, we examine the maximum, minimum, median, quartiles, mean, and standard deviation for both birth and death values. Notably, for ${\rm PB}_0$ and ${\rm PB}_1^{\infty}$ with fixed birth and infinite death values, our focus centers on death and birth values, including normalized birth and death values, i.e.,
\begin{equation*}
\frac{d}{\sum_{(0,d') \in {\rm PB}_0(\overline{K_\bullet})} d'}  \text{ \ and \ } \frac{b}{\sum_{(b',\infty) \in {\rm PB}_1^{\infty}(\overline{K_\bullet})} b'}.
\end{equation*}

On the other hand, we calculate both normalized and non-normalized Betti curves as features derived from the barcodes ${\rm PB}_0$, ${\rm PB}_1^{\rm finite}$, ${\rm PB}_1^{\infty}$, and ${\rm PB}_2$~\cite{umeda2017time}. These curves capture the $q$-Betti number ($q = 0,1,2$) of the corresponding barcode at various filtration levels within the range $[0,T]$, where $T$ is the maximal filtration level of the PH. In application, $T$ is set to be $10$ (\r{A}) for computation efficiency. For further details on the implementation, refer to Appendix \ref{Appendix: Quotient Complex Descriptors (QCDs)} and the technique document of our code released on GitHub.

Additionally, as part of our specialized QCDs as shown in Theorem S8 and Fig. \ref{fig: Barcode information}, we further enhance the feature set by considering $(|v_1|, |v_2|, |v_3|, |v_1 + v_2|, |v_1 + v_3|, |v_2 + v_3|, |v_1 + v_2 + v_3|)$ with $|v_1| \leq |v_2| \leq |v_3|$ to emphasize the periodic information within the material. This feature depicts the edge length information of the unit cell representation and has been applied in GNN models as a crucial feature for crystalline structures~\cite{yan2022periodic}.

%\paragraph*{Element-specific QCDs}
Element-specific QCDs pertain to the geometric and topological characteristics of specific atomic systems or their combinations. Many other types of descriptors have demonstrated impressive efficacy in various applications, including bioinformatics molecules~\cite{cang:2017integration, szocinski2021awegnn} and perovskite data~\cite{mayr2021global, anand2022topological}, showcasing their high performance.

Due to this advantage, our approach deviates from capturing the entire structure of a 2D perovskite. Instead, we leverage the QCDs, which are derived from diverse atomic site combinations and types (i.e., atom sets). Specifically, the atomic sites encompass the B-site ($\rm B$), X-site ($\rm X$), and their combinations (e.g., ${\rm A}_{\rm C}{\rm B}{\rm X}$). Simultaneously, our consideration extends to QC descriptors affiliated with distinct atom types, spanning from $\rm O$ and $\rm N$ (associated with the A-site atom type) to $\rm Bi$, $\rm Cd$, $\rm Pb$, and $\rm Sn$, etc. (pertaining to the B-site atom type), and finally $\rm Cl$, $\rm Br$, and $\rm I$ (linked to the X-site atom type). These particular QCDs, originating from designated atom types or sites, are referred to as element-specific QCDs (cf. Table S1).

\subsubsection*{QC-GBT for 2D perovskite design}

For the hyperparameters of our GBT model, we employed 10,000 estimators with a maximum depth of 7 to analyze the input DF features. The loss metric chosen was the squared error. The minimum number of samples was set at 2, the learning rate at 0.001, and the subsampling rate during training was established at 0.7.

The scale of the training database significantly impacts the performance of AI models. Given the continuous updates to the NMSE database, the models listed in Table \ref{Table: Performances} draw from varying data sizes. Specifically, the MLM1 model leverages 515 compounds~\cite{marchenko2020database}, the SOAP-based KRR incorporates 445 compounds~\cite{mayr2021global}, while all GNN models (GCN \cite{welling2016semi}, ECCN \cite{park2020developing}, CGCNN \cite{xie2018crystal}, TFGNN \cite{shi2020masked}, SIGNNA, and SIGNNA$_c$ \cite{na2023substructure}) rely on 624 compounds. To ensure equitable comparisons with established state-of-the-art models~\cite{marchenko2020database,xie2018crystal,mayr2021global}, our QC-GBT models were also trained on the same datasets as their predecessors. Table \ref{Table: Performances} showcases the performance of QC-GBT models across various data sizes. Evidently, our QC-GBT consistently outperforms most of the prior models on their respective datasets.

Consistently, based on the MAE value (eV), our models emerge as the top-performing choice across various training sub-datasets within the NMSE database. Notably, as the NMSE database is continuously updated, the models outlined in Table \ref{Table: Performances} undergo training using diverse perovskite data volumes. It's important to observe that the dataset size for each model is outlined in Table \ref{Table: Performances}, typically falling into four categories: 445, 515, 624, and 716 instances. To ensure equitable comparison, we've equally trained our QC-based models using these four dataset sizes, with outcomes showcased in Table \ref{Table: Performances}. We observe that QC-GBT performs best on the dataset with 716 compounds in terms of COD and MAE, which aligns with machine learning theory. However, although QC-GBT performs slightly worse on smaller datasets, the average PCC, MAE, and RMSE are similar to the best model. Particularly, when QC-GBT is applied to datasets with 445 compounds, the average MAE is still superior to the baseline models. This indicates that the proposed QC-GBT demonstrates stable performance with respect to dataset size, showcasing the ability of QCDs to effectively capture the geometric and topological structure of perovskites and enhance the prediction of their bandgaps.

\subsection*{Data and code availability}
All 2D perovskite data utilized in this work are available on the official website of the NMSE database:  \href{http://www.pdb.nmse-lab.ru/}{http://www.pdb.nmse-lab.ru/}.
The main code utilized for generating QC-based descriptors in this paper is available on the project's GitHub page: \href{https://github.com/peterbillhu/QCPH}{https://github.com/peterbillhu/QCPH}.

\section*{Acknowledgement}
This work was supported in part by the Nanyang Technological University SPMS Collaborative Research Award 2022, Singapore Ministry of Education Academic Research fund Tier 2 grants MOE-T2EP20120-0013 and MOE-T2EP20221-0003, as well as the National Research Foundation (NRF), Singapore under its NRF Investigatorship (NRF-NRFI2018-04) and Competitive Research Program (CRP) (NRF-CRP25-2020-0004).

\bibliographystyle{unsrt}
\bibliography{references}  %%% Uncomment this line and comment out the ``thebibliography'' section below to use the external .bib file (using bibtex) .

\appendix
\section{Mathematical Foundation of QC Model}
\label{Appendix: Mathematical Foundation of QC Model}
In this section, we establish the mathematical foundation for the quotient complex (QC)-base framework utilized in this work. Specifically, in Section \ref{SI: Topological Background}, we introduce the topological background of quotient complexes, including materials and tools such as simplicial complexes, quotient topological spaces, and adjunction spaces that are applied to establish the quotient complexes from simplicial complexes. All the materials and theories mentioned in the section can be found in~\cite{munkres2000topology, munkres2018elements, brown2006topology, vick2012homology}. For the self-contained purpose, although all of the theorems in the section can be found directly or have similar statements in the references, we provide some of the proofs of the theorems that have a specific form used in this work. 

Section \ref{SI: Quotient Complexes} introduces the proposed construction of QCs from a simplicial complex embedded in a Euclidean space. In this work, our primary focus lies in constructing the QC $\overline{K}$ from the embedded simplicial complex $K$ by gluing the 0-simplices together. Additionally, based on the adjunction space construction introduced in Section \ref{SI: Topological Background}, we present a simplicial complex representation $\widetilde{K}$ of the established QC. Notably, the complexes $\overline{K}$ and $\widetilde{K}$ are homotopy equivalent, and hence, they yield the same homology information. In particular, leveraging the structure of the proposed simplicial complex $\widetilde{K}$, we further explore the relationships between the homology groups $H_q(\overline{K})$ and $H_q(K)$ (Theorem \ref{Theorem: onto, one-one, and iso}).

In Section \ref{SI: QC based PH}, we present the construction of the proposed QC filtrations and their persistent homology (PH). We will see that the $\widetilde{K}$ representation proposed in Section \ref{SI: Quotient Complexes} offers a combinatorial perspective to the study of persistent homology within quotient complexes, thereby facilitating the construction of the PH of $\widetilde{K}$. Especially, based on the homotopy equivalent representation $\widetilde{K}$ for $\overline{K}$, we also show that investigating the PH of $\overline{K_\bullet}$ is equivalent to the study on the PH of $\widetilde{K_\bullet}$.

Lastly, in Section \ref{SI: Persistence Barcode Analysis}, we explore the persistence barcodes (PBs) associated with both the simplicial complex filtration and the generated QC filtration. In particular, we delve into the 0-th and 1-th PBs of $K$ and $\widetilde{K}$ and represent the main theorems of the work (Corollary \ref{Corollary: Pre-thm of 0-ladder}, \ref{Coro: inclusion relation of the first PBs} ) 

\subsection{Topological Background}
\label{SI: Topological Background}

\paragraph{Simplicial Complexes}
An \textit{abstract simplicial complex} $\mathcal{K}$ over a set $V$ is a collection of non-empty finite subsets of $V$ with the following property: if $\sigma \in \mathcal{K}$ and $\tau \subseteq \sigma$, then $\tau \in \mathcal{K}$.  The set $V$ is called the \textit{vertex set} of $\mathcal{K}$. Any element $\sigma \in \mathcal{K}$ consisting of $n+1$ elements (called \textit{vertices}) is termed an $n$\textit{-simplex}, where $n$ is also defined as the \textit{dimension} of $\sigma$ (denoted as $\dim(\tau) = n$).  If $\sigma \in \mathcal{K}$ and $\tau \subseteq \sigma$, then $\tau$ is called a face of $\sigma$. Especially, $\tau$ is called an $n$\textit{-dimensional face} of $\sigma$ if $\tau \subseteq \sigma$ and $\dim(\tau) = n$. An abstract simplicial complex is called \textit{finite} if it is finite as a set i.e., it contains only finitely many simplices. In particular, any abstract simplicial complex over a finite vertex set must be a finite simplicial complex. 

In this study, we mainly consider \textit{simplicial complexes} that are the \textit{geometric realization} of an abstract simplicial complex embedded in a Euclidean space $\mathbb{R}^d$.  Especially, every element $v$ in $V$ is embedded as a vector $\iota(v)$ in $\mathbb{R}^d$ such that every $n$-simplex $\sigma = \{ v_0, v_1, ..., v_n \}$ ($n \leq d$) in $\mathcal{K}$ corresponds to the convex hull of $\{ \iota(v_0), \iota(v_1), ..., \iota(v_n) \}$, where $\iota(v_0), \iota(v_1), ..., \iota(v_n)$ are affinely independent in $\mathbb{R}^d$. One important fact is that every abstract simplicial complex must have a geometric realization~\cite{munkres2018elements}. In this paper, we consider an abstract simplicial complex as the foundational structure of a simplicial complex in a Euclidean space, frequently using these two concepts interchangeably in various contexts. To avoid unnecessary distinction, we denote both by $K$ when no confusion arises.

\paragraph{Simplicial Homology} 
The (singular) homology group, a fundamental tool in algebraic topology~\cite{vick2012homology,munkres2018elements}, serves as a homotopy invariant and is more computable relative to homotopy groups. This group, or vector space, effectively detects various dimensional hole structures within a topological space, including 0-dimensional connected components, 1-dimensional loops, and higher-dimensional voids. These detections are invariant under continuous deformations of the spaces, forming a robust basis for topological data analysis (TDA).

As a specific instance of singular homology, \textit{simplicial homology} is tailored for computing the homological information of simplicial complexes. In contrast to singular homology, simplicial homology places greater emphasis on face relations of simplices in the abstract simplicial complex structure, providing more intuitive and computable boundary maps. Mathematically, for a simplicial complex $K$ over a vertex set $V$ and a non-negative integer $q$, the $q$-\textit{chain space} over the binary field $\mathbb{Z}_2 = \mathbb{Z}/2\mathbb{Z}$ is defined as the $\mathbb{Z}_2$-vector space
\begin{equation*}
C_q(K) = C_q(K;\mathbb{Z}_2) = \bigoplus_{\sigma \in K_q} \mathbb{Z}_2 \sigma, 
\end{equation*}
where $K_q$ denotes the set of all $q$-simplices of $K$. We also set $C_q(K) = 0$ for $q < 0$ as a convention. Furthermore, to define the \textit{homology} of the simplicial complex, the $q$\textit{-th boundary map} $\partial_q: C_q(K) \rightarrow C_{q-1}(K)$ is defined as the $\mathbb{Z}_2$-linear map that extends the mapping
\begin{equation*}
\sigma = [v_0, v_1, ..., v_q] \longmapsto \sum_{i = 0}^q \ (-1)^i [v_0, ..., \widehat{v_i}, ..., v_q] = \sum_{i = 0}^q \ [v_0, ..., \widehat{v_i}, ..., v_q], 
\end{equation*}
where $[v_0, ..., \widehat{v_i}, ..., v_q]$ denotes the $(q-1)$-simplex $[v_0, ..., v_{i-1}, v_i, ..., v_q]$, and the equality holds since the based field is $\mathbb{Z}_2$. Especially, $\partial_{q-1} \circ \partial_q = 0$ for every $q \geq 1$~\cite{munkres2018elements,vick2012homology}, and the $q$\textit{-th homology} (over $\mathbb{Z}_2$) is defined as the quotient vector space
\begin{equation*}
H_q(K) = H_q(K;\mathbb{Z}_2) = \frac{\ker(\partial_q)}{{\rm im}(\partial_{q+1})}.    
\end{equation*}

The simplicial homology detects the information hole structures within the simplicial complex, including connected components, loops, and higher-dimensional voids. In particular, the $q$\textit{-th Betti number} of a simplicial complex $K$ is defined as the dimension of the $q$-th homology of $K$, which counts the number of $q$-dimensional holes within $K$ and is denoted by $\beta_q(K) = {\rm dim}_{\mathbb{Z}_2} H_q(K)$.  

In addition, suppose $\iota: K \hookrightarrow L$ is an inclusion of simplicial complexes, then $C_q(K) \subseteq C_q(L)$ for each $q$. Moreover, every rectangle of the ladder
\begin{equation*}
\xymatrix@+1.0em{
                \cdots
                \ar[r]^{}
                & C_{q+1}(K)
                \ar[r]^{\partial^K_{q+1}}
                \ar@{^{(}->}[d]^{}
                & C_q(K)
                \ar[r]^{\partial^K_{q}}
                \ar@{^{(}->}[d]^{}
                & C_{q-1}(K)
                \ar[r]^{}
                \ar@{^{(}->}[d]^{}
                & \cdots
                \\
                \cdots
                \ar[r]^{}
                & C_{q+1}(L)
                \ar[r]^{\partial^L_{q+1}}
        	& C_{q}(L)
			\ar[r]^{\partial^L_{q}}
			& C_{q-1}(L)
                \ar[r]^{}
                & \cdots
                }
\end{equation*}
of chain spaces and boundary maps is commutative. This implies that the mapping $H_q(K) \rightarrow H_q(L)$ defined by $z + {\rm im}(\partial^X_{q+1}) \mapsto z + {\rm im}(\partial^Y_{q+1})$ with $z \in \ker(\partial^X_{q})$ is a well-defined $\mathbb{Z}_2$-linear transformation.

\begin{figure}
	\centering
	\includegraphics[width=1.0\linewidth]{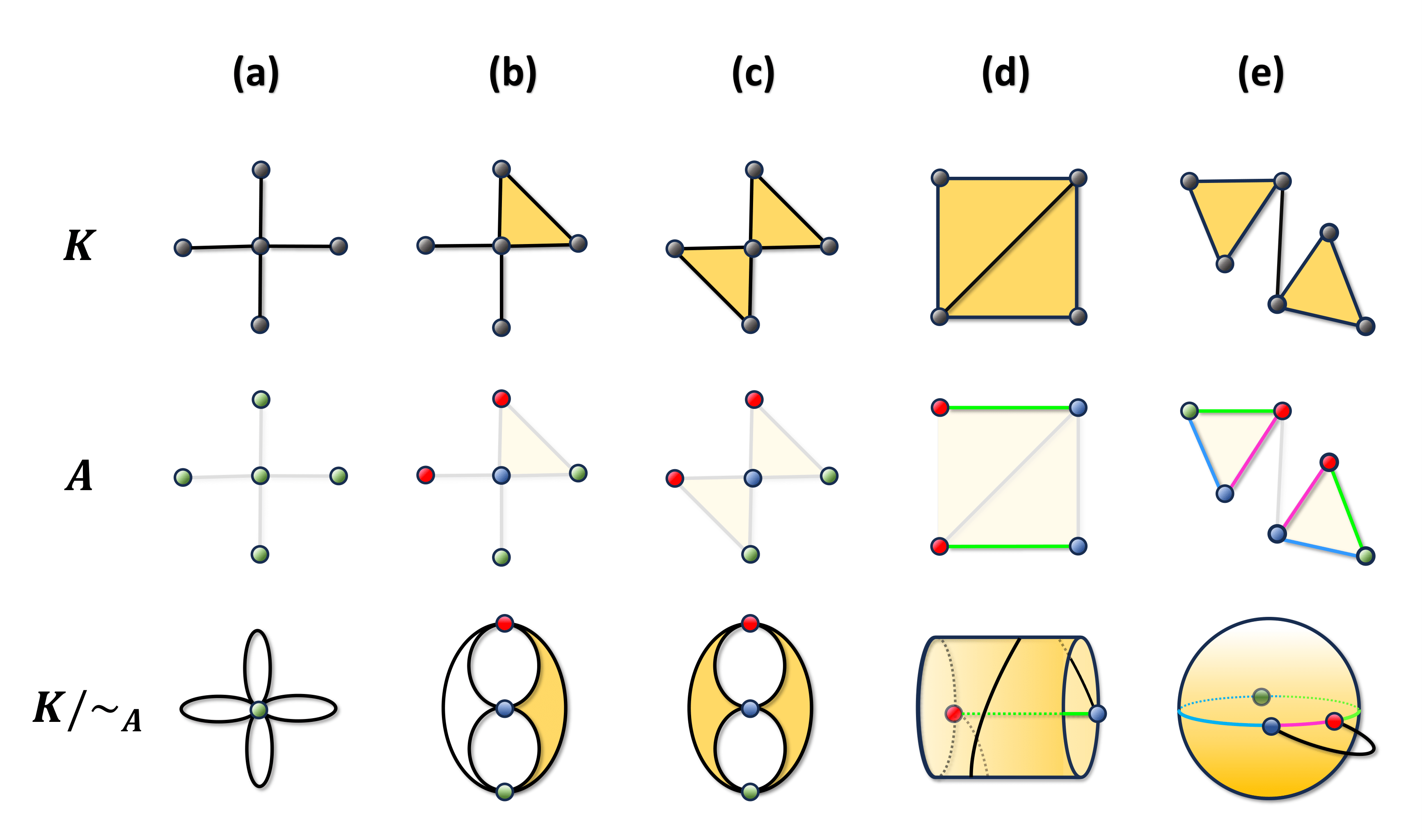}
	\caption{Illustration of simplicial complexes $K$ in dimension $\leq 2$ and their QCs $K/\sim_A$, where every $\sim_A$ is an equivalence relation defined on a subcomplex $A$ of $K$. Simplices (vertices or edges) in $A$ are deemed equivalent if they share the same color, thereby determining an equivalence relation $\sim_A$ on $A$. By gluing the same-colored vertices and edges in $K$ together, the associated QCs $K/\sim_A$ are depicted in the third row. The Betti triples $(\beta_0, \beta_1, \beta_2)$ of the QCs $K/\sim_A$ in \textbf{(a)}, \textbf{(b)}, \textbf{(c)}, \textbf{(d)}, and \textbf{(e)} are $(1,4,0)$, $(1,3,0)$, $(1,2,0)$, $(1,1,0)$, and $(1,1,1)$.}	
	\label{fig: quotient spaces and quotient complexes}
\end{figure}

\paragraph{Quotient Topological Spaces} 
In the realm of topology, the quotient operation is a typical method for creating a new topological space from a given space by gluing its subspaces together. Examples include the M\"{o}bius band, torus, and Klein bottle, all of which can be formed by connecting corresponding oriented edges of a single rectangle~\cite{munkres2000topology}. Formally speaking, given a topological space $X$ and an equivalence relation $\sim$ on $X$, all the equivalence classes $[x]$ with $x \in X$ form a set $X/\sim$. Especially, one can equip the finest topology on $X/\sim$ that ensures the canonical map $q: X \rightarrow X/\sim$, defined by $x \mapsto [x]$, to be continuous. The map $q$ is called the quotient map, which satisfies the following \textit{universal property}: for every continuous $f: X \rightarrow Y$ satisfying $f(x) = f(x')$ whenever $x \sim x'$, there is a unique continuous map $\overline{f}: X/\sim \rightarrow Y$ such that $\overline{f} \circ q = f$. This universal property can be illustrated through the use of a diagram, as depicted below.
\begin{equation*}
\xymatrix@+1.5em{
                & X
                \ar[r]^{q}
                \ar[rd]_{f}
                & X/\sim
                \ar[d]^{\exists! \ \overline{f}}
                \\
                & 
        	& Y
			}    
\end{equation*}

We employ Fig. \ref{fig: quotient spaces and quotient complexes} to elucidate examples of quotient topological spaces. In these examples, the based topological spaces $X = K$ are simplicial complexes embedded in $\mathbb{R}^2$. By defining equivalence relations $\sim$ on $X$, where $\sim \ = \ \sim_A$ is related to a subspace $A$ of $X$, the corresponding quotient topological spaces $X/\sim_A$, called \textit{quotient complexes} (see Section \ref{SI: Quotient Complexes}), are depicted in the third row of the figure.

\paragraph{Adjunction Spaces}
Let $X$ be a topological space, and $A$ be a subspace of $X$. For a continuous map $f: A \rightarrow Y$, one can define the least equivalence relation on the disjoint union space $X \sqcup Y$ generated by $a \sim f(a)$ for all $a \in A \subseteq X$. The resulting \textit{adjunction space}, denoted by $X \cup_f Y$, is defined as the quotient topological space $X \sqcup Y / \sim$. This quotient operation serves as the foundation for constructing cell complexes and proves to be a powerful tool for exploring the homology theory of topological manifolds~\cite{vick2012homology}. In particular, within this work, adjunction spaces provide a convenient framework for studying the structure of quotient complexes.

Let $X$ and $Y$ be topological spaces, $A$ a subspace of $X$, and $f: A \rightarrow Y$ a continuous map. Then there is a canonical equivalence relation $\sim_f$ on $X$, defined by $x \sim_f x'$ if and only if $x = x'$ or $f(x) = f(x')$ with $x, x' \in A$. Because $f(x) = f(x')$ whenever $x \sim_f x'$, the universal property of topological spaces leads to a continuous map
\begin{equation}
\label{Equation: Universal property map-1}
\phi: X /\sim_f \rightarrow X \cup_f Y    
\end{equation}
that assigns every equivalence class of an $x$ in $X/\sim_f$ to the equivalence class of $x$ in $X \cup_f Y$. The following theorem lists some fundamental properties of the continuous map $\phi$.

\begin{theorem}
\label{Theorem: Adjunction space versus Gluing space}
Let $X$ and $Y$ be topological spaces, $A$ a subspace of $X$, and $f: A \rightarrow Y$ be a continuous map.  Let $\phi$ be the continuous map defined in~\eqref{Equation: Universal property map-1}. Then, the following assertions hold.
\begin{itemize}
\item[\rm (a)] $\phi$ is a one-to-one map;
\item[\rm (b)] $\phi$ is onto if $f$ is onto;
\item[\rm (c)] For any subset $\mathcal{S}$ of $X/\sim$, $q^{-1}(\phi(\mathcal{S})) \cap X = i_X^{-1}(q^{-1}(\phi(\mathcal{S})))$ and $q^{-1}(\phi(\mathcal{S})) \cap Y = f(p^{-1}(\mathcal{S}) \cap A)$.
\end{itemize}
\end{theorem}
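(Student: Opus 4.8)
The plan is to reduce all three assertions to an explicit description of the equivalence classes of the adjunction relation $\sim$ on $X \sqcup Y$. First I would observe that the generating relations $a \sim f(a)$ (for $a \in A$) only ever connect a point of $A \subseteq X$ to its image in $Y$; consequently, in any finite chain $z_0 \sim z_1 \sim \cdots \sim z_n$ realizing the reflexive--symmetric--transitive closure, the terms alternate between $A$ and $Y$, and all the $A$-entries of the chain share a single common $f$-value. From this I would extract the key description of the classes: for $x \in X \setminus A$ the class is the singleton $\{x\}$; for $a \in A$ the class is $\{f(a)\} \cup f^{-1}(f(a))$; and for $y \in Y$ the class is $\{y\} \cup f^{-1}(y)$. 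This combinatorial lemma is the heart of the argument, and checking that the transitive closure introduces \emph{no spurious identifications} between two distinct points of $X$ is the step I expect to require the most care, since it is the only place where one must reason about arbitrarily long chains rather than manipulate the generators directly.

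With the class description in hand, part \textrm{(a)} is immediate: for $x, x' \in X$ one has $[x]_\sim = [x']_\sim$ in $X \cup_f Y$ exactly when $x = x'$ or when both $x, x' \in A$ with $f(x) = f(x')$, which is precisely the defining condition of $x \sim_f x'$. Hence $\phi([x]_{\sim_f}) = \phi([x']_{\sim_f})$ forces $[x]_{\sim_f} = [x']_{\sim_f}$, so $\phi$ is injective. For part \textrm{(b)}, I would note that the image of $\phi$ consists of exactly those $\sim$-classes that admit a representative in $X$; by the class description the only classes without such a representative are the singletons $\{y\}$ with $f^{-1}(y) = \emptyset$, so if $f$ is onto then every class meets $X$ and $\phi$ is surjective.

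For part \textrm{(c)}, the first identity is essentially the statement that $i_X$ is an embedding: writing $T = q^{-1}(\phi(\mathcal{S}))$, the set $i_X^{-1}(T) = \{ x \in X : i_X(x) \in T \}$ is carried bijectively by $i_X$ onto $T \cap i_X(X)$, which is what $T \cap X$ abbreviates under the standard identification $X \equiv i_X(X)$. The second identity I would prove by a direct diagram chase: unfolding $\phi(\mathcal{S}) = \{ [x]_\sim : x \in p^{-1}(\mathcal{S}) \}$, where $p \colon X \to X/\sim_f$ is the quotient map, a point $y \in Y$ satisfies $[y]_\sim \in \phi(\mathcal{S})$ if and only if the class $\{y\} \cup f^{-1}(y)$ meets $p^{-1}(\mathcal{S})$, and by the class description this happens precisely when $y = f(a)$ for some $a \in A \cap p^{-1}(\mathcal{S})$; that is, $q^{-1}(\phi(\mathcal{S})) \cap Y = f(p^{-1}(\mathcal{S}) \cap A)$. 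Throughout, the only genuinely non-formal input is the chain analysis of the first paragraph; once the explicit classes are available, every remaining step is routine bookkeeping.
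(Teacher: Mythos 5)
Your proposal is correct and follows essentially the same route as the paper: both arguments hinge on the explicit description of the equivalence classes of the adjunction relation on $X \sqcup Y$ (singletons off $A$, and $\{f(a)\} \cup f^{-1}(f(a))$ otherwise), from which (a), (b), and the chase in (c) are routine. The only difference is that you prove this class description by the chain/alternation analysis, whereas the paper's proof simply invokes it as following ``by the definition of $X \cup_f Y$''; your version spells out the one step the paper leaves implicit.
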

\begin{proof}
Let $p: X \rightarrow X/\sim_f$ and $q: X \sqcup Y \rightarrow X \cup_f Y$ be the quotient map, and let $i_X: X \hookrightarrow X \sqcup Y$ be the inclusion map. By the universal property of quotient topological spaces, the diagram 
\begin{equation*}
\xymatrix@+1.5em{
                & X
                \ar@{^{(}->}[r]^{i_X}
                \ar[d]_{p}
                & X \sqcup Y
                \ar[d]^{q}
                \\
                & X/\sim_f
                \ar[r]_{\phi}
        	& X \cup_f Y
			}    
\end{equation*}
commutes. For (a), let $x$ and $x'$ be two elements in $X$ that represent equivalence $[x]$ and $[x']$ in $X/\sim_f$. If $x$ and $x'$ define the same equivalence class in $X \cup_f Y$, then either $x = x'$ or $f(x) = f(x')$ with $x, x' \in A$. This shows that $x \sim x'$ in $X$ and proves that $\phi$ must be one-to-one. For $(b)$, if $f$ is onto, then every $y$ in $Y$ is equivalent to an element $x$ in $X$, and this shows that $\phi$ must be onto. 

For (c),  let $\mathcal{S}$ be a subset of $X/\sim_f$, then $x \in q^{-1}(\phi(\mathcal{S})) \cap X$ if and only if $x \in i_X^{-1}(q^{-1}(\phi(\mathcal{S})))$. For the second equality, if $y \in q^{-1}(\phi(\mathcal{S})) \cap Y$, then $q(y) = \phi(p(x)) = q(x)$ for some $x \in X$ with $p(x) \in \mathcal{S}$. Especially, $x$ and $y$ define the same equivalence class in $X \cup_f Y$. By the definition of $X \cup_f Y$, we have $x \in A$ and $y = f(x)$. Thus, $y \in f(p^{-1}(\mathcal{S}) \cap A)$. Conversely, $y \in f(p^{-1}(\mathcal{S}) \cap A)$ implies $y = f(x)$ for some $x \in p^{-1}(\mathcal{S}) \cap A$. Then, $q(y) = q(f(x)) = q(x) = \phi(p(x)) \in \phi(\mathcal{S})$.
\end{proof}

In the study, we utilize Theorem \ref{Theorem: Adjunction space versus Gluing space} in the following scenario. Suppose $\sim_A$ is an equivalence relation on $A$, then $\sim_A$ can be also viewed as an equivalence relation on $X$. Especially, for $x, x' \in X \setminus A$, $x \sim_A x'$ only if $x = x'$. Let $Y = A/\sim_A$ be the quotient space with the quotient map $f: A \rightarrow Y$, then the adjunction space $X \cup_f Y$ is defined. Especially, if $x, x' \in A$, then $x \sim_f x'$ if and only if $f(x) = f(x')$, which is equivalent to $x \sim_A x'$. Then the map in~\eqref{Equation: Universal property map-2} becomes 
\begin{equation}
\label{Equation: Universal property map-2}
\phi: X /\sim_A \rightarrow X \cup_f Y.    
\end{equation}
Because the quotient map $f: A \rightarrow Y$ is onto, the induced map $\phi$ is bijective (Theorem \ref{Theorem: Adjunction space versus Gluing space}(b)).  Furthermore, with additional assumptions, the bijective map is elevated to a homeomorphism. For instance, as outlined in the following corollary, when $A$ is closed and $f$ is a closed map, then $\phi$ constitutes a homeomorphism.

\begin{corollary}
\label{Coro: Adjunction space versus Gluing space-v2}
Let $X$, $Y$, $A$, $f: A \rightarrow Y$ and $\phi: X/\sim_f \rightarrow X \cup_f Y$ be defined as in~\eqref{Equation: Universal property map-1}. Suppose $f$ is onto, $A$ is closed in $X$, and $f$ is a closed map, then $\phi$ is a homeomorphism. 
\end{corollary}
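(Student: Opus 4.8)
The plan is to exploit the structural facts already established in Theorem \ref{Theorem: Adjunction space versus Gluing space}. Since $f$ is onto, parts (a) and (b) of that theorem give that $\phi$ is a continuous bijection. It therefore suffices to prove that $\phi$ is a \emph{closed} map, because a continuous closed bijection is automatically a homeomorphism (its inverse sends closed sets to closed sets). So I would reduce the entire problem to the single claim: if $\mathcal{S}$ is closed in $X/\sim_f$, then $\phi(\mathcal{S})$ is closed in $X \cup_f Y$.

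To test closedness in the adjunction space, I would use its definition as the quotient $(X \sqcup Y)/\sim$ with quotient map $q$: the set $\phi(\mathcal{S})$ is closed in $X \cup_f Y$ exactly when $q^{-1}(\phi(\mathcal{S}))$ is closed in $X \sqcup Y$. Because the disjoint-union topology is the coproduct topology, a subset of $X \sqcup Y$ is closed if and only if its trace on $X$ and its trace on $Y$ are each closed in the respective spaces. Thus the task splits into two independent closedness checks, one in $X$ and one in $Y$, and this is precisely where I would invoke part (c) of Theorem \ref{Theorem: Adjunction space versus Gluing space}.

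For the $X$-part, part (c) gives $q^{-1}(\phi(\mathcal{S})) \cap X = i_X^{-1}(q^{-1}(\phi(\mathcal{S})))$; using the commuting square $q \circ i_X = \phi \circ p$ together with the injectivity of $\phi$ from part (a), this set is exactly $p^{-1}(\mathcal{S})$, which is closed in $X$ because $p$ is continuous and $\mathcal{S}$ is closed. For the $Y$-part, part (c) identifies $q^{-1}(\phi(\mathcal{S})) \cap Y = f(p^{-1}(\mathcal{S}) \cap A)$. Here all three hypotheses enter: $p^{-1}(\mathcal{S})$ is closed in $X$, so intersecting with the closed set $A$ keeps it closed in $X$, and since $A$ is closed this intersection is also closed as a subset of the subspace $A$; applying the hypothesis that $f$ is a closed map then shows $f(p^{-1}(\mathcal{S}) \cap A)$ is closed in $Y$. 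With both traces closed, $q^{-1}(\phi(\mathcal{S}))$ is closed in $X \sqcup Y$, hence $\phi(\mathcal{S})$ is closed, which is what was needed.

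The one delicate point — the step I would treat most carefully — is the bookkeeping in the $Y$-part: one must verify that ``closed in $X$ and contained in $A$'' upgrades to ``closed in the subspace $A$.'' This is immediate precisely because $A$ is closed in $X$, but it is essential, since the hypothesis that $f$ is a closed map is a statement about closed subsets of $A$, not of $X$. Everything else is a routine transcription of the universal-property diagram and the coproduct topology, so I expect no further obstacles.
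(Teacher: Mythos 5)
Your proof is correct and is essentially the paper's own argument: bijectivity of $\phi$ from Theorem~\ref{Theorem: Adjunction space versus Gluing space}(a)--(b), then closedness of $\phi$ verified trace-by-trace on $X$ and $Y$ through Theorem~\ref{Theorem: Adjunction space versus Gluing space}(c), with the $X$-trace reducing to $p^{-1}(\mathcal{S})$ via injectivity and the $Y$-trace handled by the closedness of $A$ and of $f$. The only difference is expository: you spell out the subspace-topology bookkeeping that the paper leaves implicit (and, as a minor remark, that step does not actually require $A$ to be closed, since the trace of a closed subset of $X$ on any subspace $A$ is automatically closed in $A$).
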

\begin{proof}
Let $\mathcal{C}$ be an arbitrary closed subset of $X/\sim$. Because $f$ is onto, map $\phi$ is bijective by Theorem \ref{Theorem: Adjunction space versus Gluing space}(a) and (b). By Theorem \ref{Theorem: Adjunction space versus Gluing space}(c), $q^{-1}(\phi(\mathcal{C})) \cap X = i_X^{-1}(q^{-1}(\phi(\mathcal{C}))) = (p^{-1}((\phi^{-1} \circ \phi)(\mathcal{C}))) = p^{-1}(\mathcal{C})$, which is closed in $X$. On the other hand, we have $q^{-1}(\phi(\mathcal{C})) \cap Y = f(p^{-1}(\mathcal{C}) \cap A)$. Because $A$ is closed in $X$ and $f$ is closed, the set $q^{-1}(\phi(\mathcal{C})) \cap Y$ is closed in $Y$. Because $q$ is the quotient map, $\phi(\mathcal{C})$ is closed. Thus, $\phi$ is a bijective, continuous, and closed map, and hence a homeomorphism. 
\end{proof}

Alternatively, as indicated in the following corollary, if $Y$ is a discrete space, then $\phi$ also forms a homeomorphism. In this work, this fact is naturally adapted to construct quotient complexes from a simplicial complex by gluing grouped vertices and is extensively utilized for analyzing the persistent homology of quotient complexes (see Section \ref{SI: Quotient Complexes} and \ref{SI: QC based PH}).

\begin{corollary}
\label{Coro: Adjunction space versus Gluing space-v1}
Let $X$, $Y$, $A$, $f: A \rightarrow Y$ and $\phi: X/\sim_f \rightarrow X \cup_f Y$ be defined as in~\eqref{Equation: Universal property map-1}. Suppose $f$ is onto and $Y$ is discrete, then $\phi$ is a homeomorphism. 
\end{corollary}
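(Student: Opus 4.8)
The plan is to reuse the machinery already assembled for Corollary \ref{Coro: Adjunction space versus Gluing space-v2} and to replace the closedness hypotheses there by the single hypothesis that $Y$ is discrete. By construction $\phi$ is continuous, and since $f$ is onto, Theorem \ref{Theorem: Adjunction space versus Gluing space}(a) and (b) already guarantee that $\phi$ is a bijection. Hence the whole task reduces to showing that $\phi$ is an open map (equivalently, that $\phi^{-1}$ is continuous); it will then be a continuous bijection with continuous inverse, i.e.\ a homeomorphism.

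First I would fix an open subset $\mathcal{S} \subseteq X/\sim_f$ and aim to show that $\phi(\mathcal{S})$ is open in $X \cup_f Y$. Because $q: X \sqcup Y \to X \cup_f Y$ is the quotient map, this is equivalent to showing that $q^{-1}(\phi(\mathcal{S}))$ is open in the disjoint union $X \sqcup Y$, which in turn amounts to checking that its two traces $q^{-1}(\phi(\mathcal{S})) \cap X$ and $q^{-1}(\phi(\mathcal{S})) \cap Y$ are open in $X$ and $Y$ respectively.

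Next I would compute these two traces using Theorem \ref{Theorem: Adjunction space versus Gluing space}(c). For the $X$-trace, the commutativity of the square $q \circ i_X = \phi \circ p$ together with the bijectivity of $\phi$ gives
\begin{equation*}
q^{-1}(\phi(\mathcal{S})) \cap X = i_X^{-1}(q^{-1}(\phi(\mathcal{S}))) = p^{-1}(\phi^{-1}(\phi(\mathcal{S}))) = p^{-1}(\mathcal{S}),
\end{equation*}
which is open in $X$ since $\mathcal{S}$ is open and $p$ is continuous. For the $Y$-trace, Theorem \ref{Theorem: Adjunction space versus Gluing space}(c) identifies it with $f(p^{-1}(\mathcal{S}) \cap A)$, a subset of $Y$; and here the discreteness hypothesis does all the work, since in a discrete space \emph{every} subset is open. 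Thus both traces are open, so $q^{-1}(\phi(\mathcal{S}))$ is open, hence $\phi(\mathcal{S})$ is open, which establishes that $\phi$ is a homeomorphism.

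There is essentially no serious obstacle here: the argument is a near-verbatim variant of the proof of Corollary \ref{Coro: Adjunction space versus Gluing space-v2}, with ``closed'' replaced by ``open'' throughout. The only point worth isolating is the exact role of discreteness, which is precisely what renders the $Y$-trace $f(p^{-1}(\mathcal{S}) \cap A)$ automatically open (indeed clopen), thereby substituting for the ``$A$ closed and $f$ a closed map'' hypotheses used previously. One could equally run the argument with closed sets, since in a discrete $Y$ every subset is simultaneously open and closed.
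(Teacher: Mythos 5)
Your proposal is correct and follows essentially the same route as the paper's own proof: bijectivity from Theorem \ref{Theorem: Adjunction space versus Gluing space}(a) and (b), the two-trace computation via Theorem \ref{Theorem: Adjunction space versus Gluing space}(c) together with the identity $p^{-1}(\phi^{-1}(\phi(\mathcal{S}))) = p^{-1}(\mathcal{S})$, discreteness of $Y$ handling the $Y$-trace, and the quotient-map property of $q$ yielding openness of $\phi(\mathcal{S})$. Your closing remark that the argument is the ``open'' analogue of the proof of Corollary \ref{Coro: Adjunction space versus Gluing space-v2} also matches how the paper structures these two corollaries.
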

\begin{proof}
Because $f$ is onto, map $\phi$ is bijective by Theorem \ref{Theorem: Adjunction space versus Gluing space}(a) and (b).  Let $\mathcal{U}$ be an arbitrary open subset of $X/\sim_f$.  Then
\begin{equation*}
q^{-1}(\phi(\mathcal{U})) \cap X = i_X^{-1}(q^{-1}(\phi(\mathcal{U}))) \text{ and } q^{-1}(\phi(\mathcal{U})) \cap Y = f(p^{-1}(\mathcal{U}) \cap A)    
\end{equation*}
by Theorem \ref{Theorem: Adjunction space versus Gluing space}(c). Because $Y$ is discrete, $q^{-1}(\phi(\mathcal{U})) \cap Y$ is open in $Y$. On the other hand, 
\begin{equation*}
q^{-1}(\phi(\mathcal{U})) \cap X = i_X^{-1}(q^{-1}(\phi(\mathcal{U}))) = p^{-1}(\phi^{-1}(\phi(\mathcal{U}))) = p^{-1}(\mathcal{U}), 
\end{equation*}
where the third equality holds since $\phi$ is bijective. In particular, $q^{-1}(\phi(\mathcal{U}))$ is open in $X \sqcup Y$. Because $q$ is the quotient map, $\phi(\mathcal{U})$ is open. In summary, $\phi$ is a bijective, continuous, and open map, and hence a homeomorphism.  
\end{proof}

For instance, if $\sim_A$ defines finitely many equivalence classes in $A$, and each equivalence class of $\sim_A$ constitutes a closed subset of $A$, then $A/\sim_A$ becomes a discrete space due to its finiteness and the closedness of every point. Notably, $X/\sim_A$ and $X \cup_f Y$ are homeomorphic. 

\subsection{Quotient Complexes (QCs)}
\label{SI: Quotient Complexes}

In this section, we present our proposed framework for constructing quotient complexes (QCs) derived from a simplicial complex embedded in the $d$-dimensional Euclidean space $\mathbb{R}^d$. Formally, let $K$ be a simplicial complex embedded in $\mathbb{R}^d$, $A$ be a subcomplex of $K$, and $\sim_A$ be an equivalence relation on $A$. Viewing $\sim_A$ as an equivalence on $K$, the \textit{quotient complex} (QC) $\overline{K}$ with respect to $\sim_A$ is defined as the quotient space $K/\sim_A$, which is a cell complex~\cite{hatcher2002algebraic}.

Especially, the main focus of this work is on the QC based on the equivalence relations of the $0$-simplices of the simplicial complex. In other words, we consider the QC of the form $K/\sim_V$, where $V$ is the vertex set of $K$ and $\sim_V$ is an equivalence relation on $V$.  Additionally, as a constraint on $V$ and $\sim_V$, we require that $V$ is a discrete set, and $\sim_V$ only defines finitely many equivalence classes that partition the set $V$. This constraint is compatible with the crystal structure, which extends a finite motif to the entire material, making the analysis of the homological structure more straightforward.

Fig. \ref{fig: quotient spaces and quotient complexes} illustrates simplicial complexes $K$ and the related QCs defined by various equivalence relations $\sim_A$ on subspaces $A$ of $K$. Specifically, spaces in the first row are simplicial complexes $K$ in dimension $\leq 2$. The second row depicts subspaces $A \subseteq K$ with colored vertices and edges. Simplices in $A$ with the same color determine an equivalence relation $\sim_A$. Especially, by gluing the same-colored vertices and edges in $K$, the associated QCs $K/\sim_A$ are depicted in the third row. Notably, $A$ in examples \textbf{(a)}, \textbf{(b)}, and \textbf{(c)} equal the vertex set $V \subseteq K$, which are the cases concerned in this paper. On the other hand, except for vertices, \textbf{(d)} and \textbf{(e)} consider $A$ as subcomplexes of $K$ containing $1$-simplices. 

\paragraph{Homotopy Equivalent Representaion}

In order to compute the quotient complex structure more efficiently, we introduce a homotopy equivalent approach that uses simplicial complexes $\widetilde{K}$ to represent the proposed quotient complex $\overline{K} = K/\sim_V$. Because homology is a homotopy invariant for topological spaces,  the homology spaces $H_q(\overline{K})$ and $H_q(\widetilde{K})$ are isomorphic. Especially, to elucidate the homotopy equivalence relation between the constructed $\widetilde{K}$ and the quotient complex $\overline{K}$, we refer to the following well-known theorem in homotopy theory.

\begin{theorem}[\cite{brown2006topology}]
\label{homotopy theorem in Brown's book}
Suppose the following diagram of topological spaces and continuous maps
\begin{equation*}
\xymatrix@+1.0em{
                & Y
                \ar[d]_{\varphi_Y}
                & A
                \ar[l]_{f}
                \ar@{^{(}->}[r]^{i}
                \ar[d]^{\varphi_A}
                & X
                \ar[d]^{\varphi_X}
                \\
                & Y'
        	& A'
				\ar[l]^{f'}
				\ar@{^{(}->}[r]_{i'}
				& X'
				}    
\end{equation*}
commutes, where $\varphi_X, \varphi_A, \varphi_Y$ are homotopy equivalences, and the inclusions $i, i'$ are closed cofibrations. Then the map $\varphi: X \cup_f Y \longrightarrow X' \cup_{f'} Y'$ induced by the maps $\varphi_X, \varphi_A, \varphi_Y$ is a homotopy equivalence.
\end{theorem}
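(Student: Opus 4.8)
The plan is to prove that the formation of the adjunction space is homotopy invariant by comparing each pushout with its associated \emph{double mapping cylinder} (the homotopy pushout of the span), the whole weight of the hypotheses falling on the fact that $i,i'$ are closed cofibrations. For a span $Y \xleftarrow{f} A \xrightarrow{i} X$ I would introduce the double mapping cylinder
\[
M(f,i) = \bigl( Y \sqcup (A\times[0,1]) \sqcup X \bigr)\big/\!\!\sim,\qquad (a,0)\sim f(a),\ (a,1)\sim i(a),
\]
and the analogous $M(f',i')$ for the primed span. The commuting ladder of the hypothesis induces a map $M(f,i)\to M(f',i')$ (acting as the identity on the cylinder coordinate), and there are canonical collapse maps $c\colon M(f,i)\to X\cup_f Y$ and $c'\colon M(f',i')\to X'\cup_{f'}Y'$ that crush the cylinder $A\times[0,1]$ onto $A$. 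These collapses are natural, so the square formed by $M(f,i)\to M(f',i')$, the two collapses, and $\varphi$ commutes. The strategy is then to show that the top map and the two vertical maps are homotopy equivalences, and to conclude by the two-out-of-three property.

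The first step is to show that $M(-,-)$ is a homotopy functor of the span: if $\varphi_X,\varphi_A,\varphi_Y$ are homotopy equivalences, then the induced map $M(f,i)\to M(f',i')$ is one as well. I would build a candidate inverse directly from homotopy inverses $\psi_X,\psi_A,\psi_Y$, and then manufacture the two required homotopies by gluing the homotopies $\varphi_X\psi_X\simeq\mathrm{id}$, $\varphi_A\psi_A\simeq\mathrm{id}$, $\varphi_Y\psi_Y\simeq\mathrm{id}$ along $A\times[0,1]$. The pleasant point is that \emph{no} cofibration hypothesis is needed here: the free cylinder coordinate provides exactly the room to reparametrise and reconcile the three homotopies on the overlaps $A\times\{0\}$ and $A\times\{1\}$, so the gluing is unobstructed and the construction is purely formal, if bookkeeping-heavy.

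The second and decisive step is to prove that when $i$ is a closed cofibration the collapse $c\colon M(f,i)\to X\cup_f Y$ is a homotopy equivalence (and likewise for $c'$). This is exactly where the homotopy extension property enters: because $(X,A)$ has the HEP, the mapping cylinder of $i$ deformation retracts onto $X$ rel $A$, and this deformation can be transported across the pushout to yield a homotopy inverse of $c$ together with the homotopies witnessing $c$ as an equivalence. I expect this to be the main obstacle, since one must verify that the deformation retraction of the cylinder is compatible with the gluing along $f$ and descends continuously to the quotient; the closedness of $A$ in $X$ is precisely what guarantees that the glued maps and homotopies remain continuous on the adjunction space.

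Finally, combining the two steps with naturality of the collapse maps produces a commutative square in which the top horizontal map and both vertical maps are homotopy equivalences; two-out-of-three then forces $\varphi$ to be a homotopy equivalence, which is the assertion. I would also note that this route deliberately sidesteps the tempting but flawed shortcut of changing one corner of the span at a time through an intermediate adjunction space: composing the cofibration $i'$ with the equivalence $\varphi_A$ need not yield a cofibration, so such an intermediate pushout is not guaranteed to be a homotopy pushout. The double mapping cylinder is insensitive to this difficulty and confines the cofibration hypothesis to the single clean statement about the collapse maps.
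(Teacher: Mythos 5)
The paper itself offers no proof of this statement: it is quoted as the classical gluing theorem and attributed to Brown's \emph{Topology and Groupoids}, so there is no in-paper argument to compare against, and your proposal has to stand or fall as a free-standing proof. Your architecture --- compare each adjunction space with the double mapping cylinder, prove (1) homotopy invariance of $M(f,i)$ with no cofibration hypotheses and (2) that the collapse $c\colon M(f,i)\to X\cup_f Y$ is a homotopy equivalence when $i$ is a closed cofibration, then conclude by two-out-of-three --- is a standard and viable route, and step (2) is essentially right: the retraction $X\times I\to (X\times\{0\})\cup(A\times I)$ supplied by the HEP yields a section $s\colon X\to M(i)$ of the collapse together with homotopies $cs\simeq \mathrm{id}_X$ rel $i(A)$ and $sc\simeq \mathrm{id}_{M(i)}$ rel the free end of the cylinder; because these homotopies are stationary on the relevant copy of $A$, they glue with the constant homotopy on $Y$ and descend to the adjunction spaces. (One wording correction: ``the mapping cylinder of $i$ deformation retracts onto $X$ rel $A$'' cannot be taken literally, since the free end of the cylinder is disjoint from $X$ inside $M(i)$ and so cannot be fixed by any retraction onto $X$; what your argument actually produces, and what the transport across the pushout needs, is a homotopy equivalence \emph{under} $A$.)

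The genuine gap is step (1), precisely the step you declare ``purely formal'' and ``unobstructed.'' Constructing the candidate inverse $\Psi\colon M(f',i')\to M(f,i)$ is indeed formal: insert chosen homotopies $f\psi_A\simeq \psi_Y f'$ and $i\psi_A\simeq \psi_X i'$ into thirds of the cylinder coordinate. But verifying $\Psi\Phi\simeq\mathrm{id}$ and $\Phi\Psi\simeq\mathrm{id}$ is not a matter of gluing the homotopies $\varphi_Y\psi_Y\simeq\mathrm{id}$, $\varphi_A\psi_A\simeq\mathrm{id}$, $\varphi_X\psi_X\simeq\mathrm{id}$ along the overlaps: on $A\times\{0\}$, say, the homotopy coming from the $Y$-piece and the one coming from the cylinder piece are two \emph{different} homotopies between the same pair of maps (roughly, $H_Y\circ(f\times\mathrm{id})$ versus the composite of $f\circ H_A$ with the inserted correction homotopies), and for arbitrary choices of inverses and homotopies they need not agree, nor even be homotopic rel endpoints. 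Filling that square is a coherence problem, and solving it is exactly the hard content of the gluing theorem; the standard repairs are Dold's theorem (a homotopy equivalence under $A$ between cofibrations under $A$ is an equivalence under $A$), applied to the cylinder-end inclusions $A\times\partial I\hookrightarrow A\times I$ --- which are cofibrations for free, so your ``no hypotheses on the span'' claim stays honest --- or the lemma that homotopy-equivalence data can be improved to coherent data. Beware also of circularity: the usual textbook proof of homotopy invariance of double mapping cylinders is to apply the gluing theorem to the presentation $M(f,i)=(Y\sqcup X)\cup_{A\times\partial I}(A\times I)$, so you cannot simply cite that invariance; you must supply the Dold-type argument yourself. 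With such a lemma inserted, your proof closes up; as written, the central step is asserted rather than proved.
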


By applying the result in Theorem \ref{homotopy theorem in Brown's book} to the specified settings of the simplicial complex $K \subseteq \mathbb{R}^d$ and the equivalence relation $\sim_V$, we can construct the corresponding quotient complex $\widetilde{K}$ as follows. Since $\sim_V$ is assumed to admit finitely equivalence classes in $V$, we set 
$V_1, ..., V_k$ to be the equivalence classes in $V$. To make the arguments in more computational aspects, we identify $\mathbb{R}^m$ as a subspace of $\mathbb{R}^{n}$ for $m < n$ via the canonical inclusion map $(x_1, ..., x_{m}) \mapsto (x_1, ..., x_{m}, 0, ..., 0)$. Especially, we let $z_1, z_2, ..., z_k$ be points in $\mathbb{R}^{d+k}$ such that $z_{j + 1} \in \mathbb{R}^{d+j+1} \setminus \mathbb{R}^{d+j}$ for $j = 0, ..., k-1$. In particular, we have $V = \bigsqcup_{j = 1}^k V_j$ and $Z := \{ z_1, z_2, ..., z_k \}$ is a discrete space.

By Proposition \ref{Coro: Adjunction space versus Gluing space-v1}, the quotient space $\overline{K} = K/\sim_V$ that emerges points in $V_i$ to a single point can be identified as the adjunction space $K \cup_f Z = K \cup_f \{ z_1, z_2, ..., z_k \}$, where $f: V \rightarrow Z$ is the local constant function with $f(V_j) = \{ z_j \}$. Furthermore, for each $j \in \{ 1, 2, ..., k \}$, we set
\begin{equation*}
S_j = \bigcup_{x \in V_j} [z_j, x] = \bigcup_{x \in V_j} \{ \lambda z_j + (1-\lambda) x \ | \ \lambda \in [0,1] \}, 
\end{equation*}
which is a star-shaped set, as it is the union of all line segments that connect each $z_j$ to the points in $V_j$. By construction, $S_j$ is a $1$-dimensional simplicial complex embedded in $\mathbb{R}^{d+j}$. In particular, $S_i \cap S_j = \emptyset$ for $i \neq j \in \{ 1,2, ..., k\}$ by the assumption of $z_1, z_2, ..., z_k$. Set $S = \bigsqcup_{j = 1}^k S_j$ and define $g: V \rightarrow S$ such that each $g|_{V_j}$ is the inclusion map $V_j \hookrightarrow S$. Finally, we set $\widetilde{K} = K \cup S = K \cup_g S$. Then $\widetilde{K}$ is a simplicial complex and the diagram 
\begin{equation*}
\xymatrix@+1.0em{
                & S
                \ar[d]_{r}
                & V
                \ar[l]_{g}
                \ar@{^{(}->}[r]^{i}
                \ar[d]^{{\rm id}_V}
                & K
                \ar[d]^{{\rm id}_K}
                \\
                & Z
        	  & V
				\ar[l]^{f}
				\ar@{^{(}->}[r]_{i}
				& K
				}    
\end{equation*}
commutes, where $r: S \rightarrow Z$ is defined such that $r|_{S_j}$ equals the constant map $S_j \rightarrow \{ z_j \} \hookrightarrow Z$. Since the continuous maps $r|_{S_j}$ can be shrunk synchronously, $r$ is a homotopy equivalence. Because any inclusion map of a subcomplex of a simplicial complex is a closed cofibration, $i$ is a closed cofibration (\cite{brown2006topology}, page 281). By Theorem \ref{homotopy theorem in Brown's book}, the following corollary follows.

\begin{corollary}
\label{Main corollary 1}
Let $K$, $V = \bigsqcup_{j = 1}^k V_j$, $K$,  $\overline{K}$, and $\widetilde{K}$ be defined as above. Then $\overline{K}$ and $\widetilde{K}$ are homotopy quivalent. In particular, the homology of $\overline{K}$ and $\widetilde{K}$ are isomorphic. 
\end{corollary}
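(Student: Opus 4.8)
The plan is to realize both $\overline{K}$ and $\widetilde{K}$ as adjunction spaces built from the single pair $(K,V)$ and then to apply the gluing theorem for homotopy equivalences, Theorem \ref{homotopy theorem in Brown's book}, to a ladder whose two ``legs'' are these adjunction spaces. First I would use Corollary \ref{Coro: Adjunction space versus Gluing space-v1} to replace the quotient space $\overline{K}=K/\sim_V$ by the homeomorphic adjunction space $K\cup_f Z$. The hypotheses there are exactly met: the target $Z=\{z_1,\dots,z_k\}$ is discrete and the collapsing map $f\colon V\to Z$ with $f(V_j)=\{z_j\}$ is onto, so the canonical comparison map $\phi$ is a homeomorphism and we lose nothing by working with $K\cup_f Z$ in place of $K/\sim_V$. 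On the other side, $\widetilde{K}=K\cup_g S$ is by construction already an adjunction space, where $S=\bigsqcup_{j=1}^k S_j$ is the disjoint union of the star-shaped complexes and $g\colon V\to S$ is the inclusion of the vertices.

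Next I would assemble the commutative ladder whose top row is $S\xleftarrow{\,g\,}V\xhookrightarrow{\,i\,}K$ and whose bottom row is $Z\xleftarrow{\,f\,}V\xhookrightarrow{\,i\,}K$, with vertical maps $r\colon S\to Z$, $\mathrm{id}_V$, and $\mathrm{id}_K$. Commutativity of the right square is trivial, and commutativity of the left square is the statement $r\circ g=f$, which holds because $r$ sends all of $S_j$ to $z_j$ while $g$ places $V_j$ inside $S_j$. Matching this with Theorem \ref{homotopy theorem in Brown's book} (taking $X=K$, $A=V$, $Y=S$ on top and $X'=K$, $A'=V$, $Y'=Z$ on the bottom) will give $\widetilde{K}=K\cup_g S\simeq K\cup_f Z\cong\overline{K}$ once the theorem's hypotheses are checked.

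Verifying those hypotheses is where essentially all the real work lies. The two inclusions in the diagram are both $i\colon V\hookrightarrow K$, the inclusion of the $0$-skeleton of a simplicial complex; since the inclusion of any subcomplex is a closed cofibration, $i$ qualifies for both $i$ and $i'$. The middle and right vertical maps $\mathrm{id}_V$ and $\mathrm{id}_K$ are obviously homotopy equivalences, so the crux is showing that $r$ is one. Each $S_j$ is star-shaped with apex $z_j$, hence contractible, and the straight-line homotopy $(x,t)\mapsto t\,z_j+(1-t)x$ deformation retracts $S_j$ onto $\{z_j\}$; because the $z_j$ were placed in successively larger coordinate subspaces the sets $S_j$ are pairwise disjoint, so these retractions assemble into a single deformation retraction of $S$ onto $Z$ with homotopy inverse the inclusion $Z\hookrightarrow S$. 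I expect the main obstacle to be precisely this ``synchronous'' bookkeeping, namely confirming that the fiberwise contractions glue to a globally continuous homotopy and that $r$ is genuinely a homotopy equivalence (not merely a map with contractible fibers); the disjointness of the $S_j$ and the discreteness of $Z$ are what make this routine rather than delicate.

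Finally, applying Theorem \ref{homotopy theorem in Brown's book} yields a homotopy equivalence $\widetilde{K}\simeq\overline{K}$, and since singular homology is a homotopy invariant we conclude $H_q(\overline{K})\cong H_q(\widetilde{K})$ for every $q$, which is the assertion of the corollary. The only conceptual ingredients are the two earlier results just cited; everything else reduces to the explicit geometry of the star-shaped pieces $S_j$ and the standard fact that subcomplex inclusions are closed cofibrations.
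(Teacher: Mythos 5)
Your proposal is correct and follows essentially the same route as the paper: identifying $\overline{K}$ with the adjunction space $K\cup_f Z$ via Corollary \ref{Coro: Adjunction space versus Gluing space-v1}, forming the commutative ladder with vertical maps $r$, ${\rm id}_V$, ${\rm id}_K$, checking that $r$ is a homotopy equivalence by contracting the pairwise disjoint gluing stars $S_j$ onto their apices $z_j$ (the paper's ``shrunk synchronously'' step, which you spell out with the straight-line homotopies), invoking the fact that subcomplex inclusions are closed cofibrations, and concluding with Theorem \ref{homotopy theorem in Brown's book}. The only difference is expository: you verify the ladder's commutativity and the assembly of the fiberwise contractions in more detail than the paper does.
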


In detail, by Theorem \ref{homotopy theorem in Brown's book}, the homotopy equivalence $r: S \rightarrow Z$ induces a homotopy equivalence $\widehat{r}$ from $\widetilde{K} = K \cup S = K \cup_g S$ to $ \overline{K} = K \cup_f Z$, which is defined as follows:
\begin{equation}
\label{Eq. Induced map from K cupg S to K cupf Z}
\widehat{r}(x) = \begin{cases}
\overline{x} &\quad\null\text{ if } x \in K \setminus S\\
\overline{r(x)} &\quad\null\text{ if } x \in S
\end{cases},   
\end{equation}
where the bar notation $\overline{\bullet}$ is applied to denote the equivalence classes in the quotient topological space $K \cup_f Z = \overline{K}$. 

\begin{definition}
Let $K$, $V = \bigsqcup_{j = 1}^k V_j$, $\overline{K}$, and $\widetilde{K}$ be defined as described above. The simplicial complex $S_j$ mentioned previously is referred to as the \textbf{gluing star} of the set $V_j$, and we designate the simplicial complex $S$ as the \textbf{gluing stars} of the set $A$. In particular, $V_j \subseteq S_j$ for each $j$ and $\widetilde{K} = K \cup S$.
\end{definition}

\begin{figure}
	\centering
		\centerline{\includegraphics[width=1\textwidth]{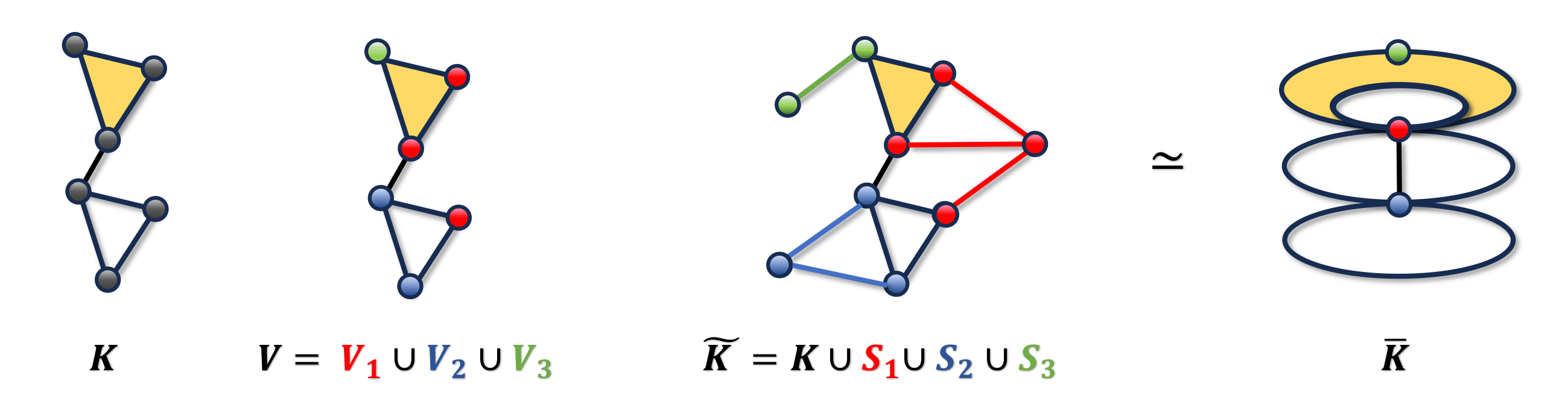}}
		\caption{An illustrative example showcases the construction of the simplicial complex $\widetilde{K}$, homotopy equivalent to the quotient complex $\overline{K}$. Here, $K$ represents a simplicial complex with 6 vertices, 7 edges, and 1 triangle. The vertex set $V$ is partitioned into three groups: $V_1$, $V_2$, and $V_3$. Introducing gluing stars $S_1$, $S_2$, and $S_3$ to the simplicial complex $K$ results in the formation of $\widetilde{K}$—a complex homotopy equivalent to the quotient complex $\overline{K}$.}
		\label{fig:homotopy_equivalence}
\end{figure}

Fig. \ref{fig:homotopy_equivalence} visually demonstrates the result of Corollary \ref{Main corollary 1}. Using the simplicial complex $K$ and the vertex set $V = V_1 \cup V_2 \cup V_3$, we construct the simplicial complex $\widetilde{K}$ as the union $K \cup S$, with $S = S_1 \cup S_2 \cup S_3$ denoting the gluing stars as illustrated in Fig. \ref{fig:homotopy_equivalence}. Following the implications of Corollary \ref{Main corollary 1}, both the simplicial complex $\widetilde{K}$ and the quotient space $\overline{K}$ share the same homotopy type. In particular, $\widetilde{K}$ and $\overline{K}$ have identical homological structures. Importantly, $K$ is a subcomplex of $\widetilde{K}$. Considering the homomorphism $\theta_\bullet: H_\bullet(K) \rightarrow H_\bullet(\widetilde{K})$, the following theorem holds.

\begin{theorem}
\label{Theorem: onto, one-one, and iso}
Let $V \subseteq \mathbb{R}^d$ be a discrete set and $K \subseteq \mathbb{R}^d$ be a simplicial complex over $V$. Let $\sim_V$ be an equivalence relation on $V$ with finitely many equivalence classes $V_1, V_2, ..., V_k$. Let $\widetilde{K}$ be defined as above. Then
\begin{itemize}
    \item[\rm (a)] $\theta_0: H_0(K) \rightarrow H_0(\widetilde{K})$ is onto;
    \item[\rm (b)] $\theta_1: H_1(K) \rightarrow H_1(\widetilde{K})$ is one-to-one;
    \item[\rm (c)] $\theta_q: H_q(K) \rightarrow H_q(\widetilde{K})$ is an isomorphism for $q > 1$.
\end{itemize}    
\end{theorem}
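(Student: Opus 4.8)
The plan is to analyze the inclusion $K \hookrightarrow \widetilde{K}$ through the long exact sequence of the pair $(\widetilde{K}, K)$ in simplicial homology over $\mathbb{Z}_2$. The essential observation is that $\widetilde{K}$ is obtained from $K$ by adjoining only the gluing stars $S = S_1 \sqcup \cdots \sqcup S_k$, and each $S_j$ contributes exactly one new vertex $z_j$ together with the edges $[z_j, x]$ for $x \in V_j$; no simplex of dimension $\geq 2$ is added. Consequently the relative chain complex $C_\bullet(\widetilde{K}, K) = C_\bullet(\widetilde{K})/C_\bullet(K)$ is concentrated in degrees $0$ and $1$: the group $C_0(\widetilde{K}, K)$ is spanned by $\{ z_1, \ldots, z_k \}$ and $C_1(\widetilde{K}, K)$ is spanned by the new edges $\{ [z_j, x] : 1 \leq j \leq k, \ x \in V_j \}$, while $C_q(\widetilde{K}, K) = 0$ for all $q \geq 2$.

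First I would compute the relative homology. Since $C_q(\widetilde{K}, K) = 0$ for $q \geq 2$, it is immediate that $H_q(\widetilde{K}, K) = 0$ for every $q \geq 2$. For degree $0$, the induced relative boundary sends $[z_j, x] \mapsto z_j$: in $C_\bullet(\widetilde{K})$ one has $\partial_1 [z_j, x] = z_j + x$, and the vertex $x$ of $K$ vanishes upon passing to the quotient $C_\bullet(\widetilde{K})/C_\bullet(K)$. Because every equivalence class $V_j$ is non-empty, each generator $z_j$ lies in the image of this relative boundary, so $\overline{\partial}_1$ is onto and $H_0(\widetilde{K}, K) = 0$. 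The only possibly non-trivial relative group is then $H_1(\widetilde{K}, K) = \ker \overline{\partial}_1$, which records the extra loops created by the gluing and is precisely what prevents $\theta_1$ from being surjective.

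With these vanishing statements in hand, the three assertions fall out of the exact sequence
\begin{equation*}
\cdots \to H_{q+1}(\widetilde{K}, K) \to H_q(K) \xrightarrow{\theta_q} H_q(\widetilde{K}) \to H_q(\widetilde{K}, K) \to \cdots
\end{equation*}
For $q = 0$, the term $H_0(\widetilde{K}, K) = 0$ sitting to the right of $\theta_0$ forces $\theta_0$ to be onto, giving (a). For $q = 1$, the term $H_2(\widetilde{K}, K) = 0$ sitting to the left of $\theta_1$ forces $\theta_1$ to be injective, giving (b). For $q \geq 2$, both flanking relative groups $H_{q+1}(\widetilde{K}, K)$ and $H_q(\widetilde{K}, K)$ vanish, so $\theta_q$ is an isomorphism, giving (c).

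The argument is essentially formal once the relative complex is pinned down, so I do not anticipate a serious obstacle. The one point requiring care is the identification of the relative boundary $\overline{\partial}_1 [z_j, x] = z_j$ and the resulting surjectivity in degree $0$, which hinges on each $V_j$ being non-empty; this is exactly where the hypothesis that $\sim_V$ partitions $V$ into genuine equivalence classes is used. An entirely equivalent route would bypass the pair sequence and compare the simplicial chain complexes of $K$ and $\widetilde{K}$ directly, noting that they coincide in degrees $\geq 2$ and that the added generators live only in degrees $0$ and $1$; I would present the long-exact-sequence version, as it isolates the computation most cleanly.
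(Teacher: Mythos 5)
Your proof is correct, but it takes a genuinely different route from the paper. Both arguments rest on the same structural observation---that passing from $K$ to $\widetilde{K}$ adds only the vertices $z_1,\ldots,z_k$ and the edges $[z_j,x]$ with $x\in V_j$, so the chain groups change only in degrees $0$ and $1$---but you package this observation through the long exact sequence of the pair $(\widetilde{K},K)$, computing $C_\bullet(\widetilde{K},K)$ explicitly, showing $H_q(\widetilde{K},K)=0$ for $q\geq 2$ and $H_0(\widetilde{K},K)=0$ (the latter using that each class $V_j$ is non-empty, so $\overline{\partial}_1$ is onto), and then reading off all three assertions from exactness. The paper instead argues directly at the chain level: for (c) it notes $C_q(K)=C_q(\widetilde{K})$ for $q\geq 2$ so cycles and boundaries agree; for (b) it observes that a $1$-cycle of $K$ bounding in $\widetilde{K}$ must bound a $2$-chain of $C_2(\widetilde{K})=C_2(K)$; and for (a) it shows every vertex class of $\widetilde{K}$ is homologous to one from $K$, since each $z_j$ is joined by an edge to some $w\in V_j$. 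Your approach is more uniform and formal---the three statements fall out of one vanishing computation, and as a bonus it identifies the cokernel obstruction for $\theta_1$ as $H_1(\widetilde{K},K)=\ker\overline{\partial}_1$, which cleanly anticipates the ``new loops created by gluing'' that the paper later studies via Theorem \ref{Main Theorem of the infinite barcodes in the first QC PH}; the cost is invoking relative simplicial homology and the exact sequence of a pair. The paper's argument is more elementary and self-contained, requiring nothing beyond the definitions of cycles and boundaries, which suits its expository aim of keeping the appendix accessible. Both proofs are valid over $\mathbb{Z}_2$ (indeed over any coefficients), and neither requires finiteness of the classes $V_j$, only that they are non-empty.
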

\begin{proof}
At the beginning of the proof, we consider the commutative ladder
\begin{equation*}
\xymatrix@+1.0em{
                \cdots
                \ar[r]^{}
                & C_3(K)
                \ar[r]^{\partial_3}
                \ar@{=}[d]^{}
                & C_2(K)
                \ar[r]^{\partial_2}
                \ar@{=}[d]^{}
                & C_1(K)
                \ar[r]^{\partial_1}
                \ar@{^{(}->}[d]^{}
                & C_0(K)
                \ar@{^{(}->}[d]^{}
                \ar[r]^{}
                & 0
                \\
                \cdots
                \ar[r]^{}
                & C_3(\widetilde{K})
                \ar[r]^{\partial_3}
        	& C_2(\widetilde{K})
				\ar[r]^{\partial_2}
				& C_1(\widetilde{K})
                \ar[r]^{\partial_1}
                & C_0(\widetilde{K})
                \ar[r]^{}
                & 0
				}
\end{equation*}
with chain complexes $C_\bullet(K)$ and $C_\bullet(\widetilde{K})$. By the construction of $\widetilde{K}$, we have $C_q(K) = C_q(\widetilde{K})$ for all $q \geq 2$ and $C_q(K) \subseteq C_q(\widetilde{K})$ for $q = 0, 1$. Especially, $\theta_q$ is an isomorphism whenever $q \geq 2$, hence (c) follows.

Let $\langle c \rangle$ be an equivalence class in $\ker(\theta_1)$ with $\partial_1(c) = 0$. Because $C_2(K) = C_2(\widetilde{K})$, and the class $\theta_1(\langle c \rangle)$ is also represented by $c$ in $H_1(\widetilde{K})$, $\langle c \rangle$ must be the zero element in $H_1(K)$, and this proves (b). 

Finally, let $\langle v \rangle_{\widetilde{K}}$ be an equivalence class in $H_0(\widetilde{K})$ with a $0$-simplex $v$ of $\widetilde{K}$. If $v \in K$, then $\theta_0(\langle v \rangle_K) = \langle v \rangle_{\widetilde{K}}$, where $\langle v \rangle_K$ is the equivalence class of $v$ in $H_0(K)$. For otherwise, $v = z$ for some $ z \in Z$. Especially, there is an $w \in V \subseteq K$ such that $[z, w]$ is a $1$-simplex in $\widetilde{K}$. Because $z$ and $w$ are homologous in $\widetilde{K}$, $\theta_0(\langle w \rangle_K) = \langle w \rangle_{\widetilde{K}} = \langle z \rangle_{\widetilde{K}}$. Because $H_0(\widetilde{K})$ is generated by all equivalence classes that are represented by vertices of $\widetilde{K}$, (a) follows.
\end{proof}

Theorem \ref{Theorem: onto, one-one, and iso}-(a) and (b) clarifies the relationships within the homology groups of the quotient complex $\overline{K} \simeq \widetilde{K}$ and the original simplicial complex $K$, specifically in dimensions $0$ and $1$. In contrast, Theorem \ref{Theorem: onto, one-one, and iso}-(c) asserts that the quotient complex $\overline{K} \simeq \widetilde{K}$ encapsulates identical homological information to that of $K$. This arises from the fact that the gluing operation only affects 0-dimensional simplices. Investigating the impact of gluing higher-dimensional simplices or subcomplexes presents a potential avenue for future research.

\subsection{QC-based Persistent Homology}
\label{SI: QC based PH}

\paragraph{Persistent Homology}
Beyond a single topological space and its homology, persistent homology (PH) directs attention to continuously changing topological spaces. In particular, one primary interest within the TDA community involves a sequence of included topological spaces $X_\bullet: \emptyset \subseteq X_1 \subseteq X_2 \subseteq \cdots \subseteq X_n$, referred to as a \textit{filtration} of topological spaces, particularly for the filtration of simplicial complexes. Based on the functoriality of homology,  this filtration induces the following sequence of $\mathbb{Z}_2$-vector spaces and $\mathbb{Z}_2$-linear transformations:
\begin{equation*}
{\rm PH}_q(X_\bullet): 0 \longrightarrow H_q(X_1) \xrightarrow{ \ \phi_1 \ } H_q(X_2) \xrightarrow{ \ \phi_2 \ } \cdots \xrightarrow{ \ \phi_{n-1} \ } H_q(X_n)   
\end{equation*}
for any given $q \geq 0$, where each $\phi_i$ is induced by the inclusion $X_i \hookrightarrow X_{i+1}$. 

The \textit{persistence barcode} (PB) serves as a comprehensive tool for summarizing persistent homology, capturing the birth and death information of local structures within the persistent homology sequence~\cite{Ghrist:2008, CZCG05}. More generally, the computation of birth and death information is applicable to any sequence of vector spaces and linear transformations. Mathematically, let
\begin{equation}
\label{Eq. Vector space sequence}
W_\bullet: 0 \xrightarrow{ \ \ \ \ } W_1 \xrightarrow{ \ \ \phi_1 \ \ } W_2 \xrightarrow{ \ \ \phi_2 \ \ } \cdots \xrightarrow{ \ \ \phi_{n-1} \ \ } W_n    
\end{equation}
be a sequence of vector spaces $W_i$ and linear maps $\phi_i$. Utilizing the composition maps $\phi_{j-1} \circ \cdots \circ \phi_{i+1} \circ \phi_i$ for $i < j \in { 1,2, ..., n }$, the evolution of elements in $W_\bullet$ defines their lifespan information and influences the resulting PB. 

We introduce the definition of PB by adhering to the following mathematical setting. The composition map $\phi_{j-1} \circ \cdots \circ \phi_{i+1} \circ \phi_i$ is denoted by $\phi_{i,j}$, and $\phi_{i,i}$ is defined as the identity function on $W_i$. An $s \in W_b$ with $b \in \{ 1, 2, ..., n \}$ is said to have a \textit{persistence interval} $(b, d)$ with \textit{birth} $b$ and \textit{death} $d \in \{ b+1, ..., n \} \cup \{ \infty \}$ if the following two properties hold: (a) $s \notin {\rm im}(\phi_{b-1})$ and (b) $\phi_{b,d'}(s) \notin {\rm im}(\phi_{b-1, d'})$ for $d' < d$ and $\phi_{b,d}(s) \in {\rm im}(\phi_{b-1, d})$. Moreover, $d = \infty$ if $s$ never dies within the sequence.

\begin{figure}
	\centering	\includegraphics[width=1\linewidth]{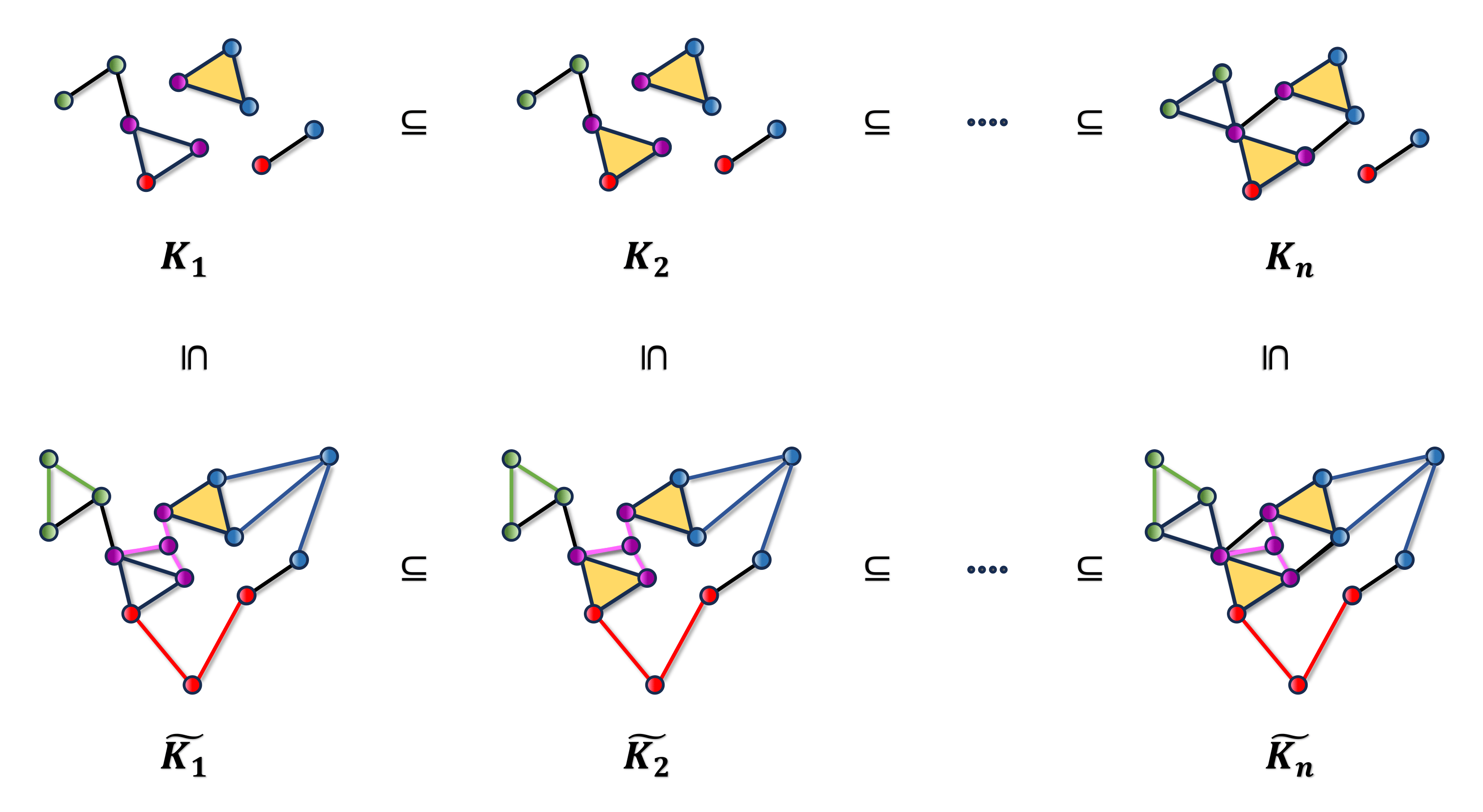}
	\caption{An illustrative example of a simplicial complex filtration $K_1 \subseteq K_2 \subseteq \cdots \subseteq K_n$ and the induced filtration $\widetilde{K_1} \subseteq \widetilde{K_2} \subseteq \cdots \subseteq \widetilde{K_n}$ that are homotopically equivalent to the filtration $\overline{K_1} \subseteq \overline{K_2} \subseteq \cdots \subseteq \overline{K_n}$. Three colored groups (blue, red, green, and purple) of vertices are considered equivalence classes of vertices within the filtration $K_1 \subseteq K_2 \subseteq \cdots \subseteq K_n$. Each $\widetilde{K_i} = K_i \cup S$ is obtained as the union of $K_i$ and the gluing stars $S$.}	
	\label{fig:two_filtrations}
\end{figure}

\begin{figure}
\centering
\centerline{\includegraphics[width=1\textwidth]{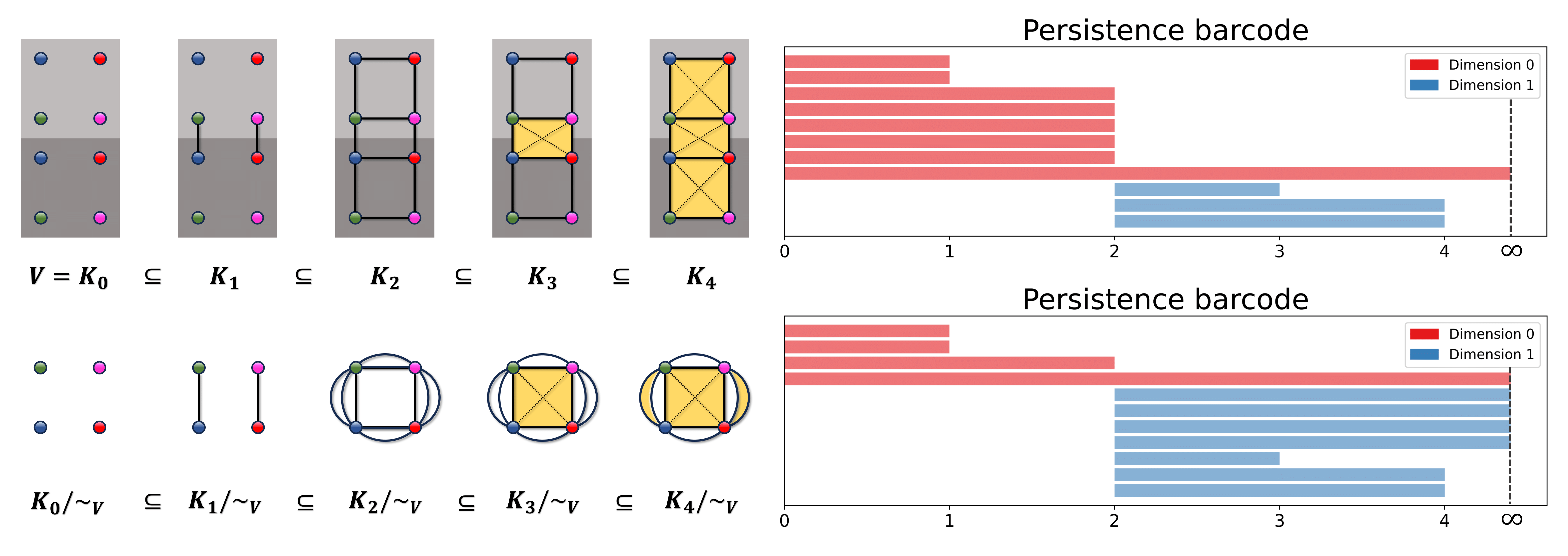}}
\caption{An illustration of a simplicial complex filtration, the induced QC filtration, and the corresponding persistence barcodes. Specifically, the simplicial complex $K_\bullet$ is a Vietoris--Rips simplicial complex built on a finite 2-periodic motif $V$ consisting of $8$ points. The periodic equivalence relation $\sim_V$ classifies the $8$ points into 4 classes annotated by different colors. Using 2D planar representations, the induced QC filtration $\overline{K_\bullet}$ is depicted in the second row. The corresponding persistence barcodes of $K_\bullet$ and $\overline{K_\bullet}$ are shown on the right-hand side of the figure. In particular, ${\rm PB}_0(K_\bullet) = \{ (0,1), (0,1), (0,2), (0,2), (0,2), (0,2), (0,2), (0,\infty) \}$, ${\rm PB}_1(K_\bullet) = \{ (2,3), (2,4), (2,4) \}$, ${\rm PB}_0(\overline{K_\bullet}) = \{ (0,1), (0,1), (0,2), (0,\infty) \}$, and ${\rm PB}_1(\overline{K_\bullet}) = \{ (2,3), (2,4), (2,4), (2, \infty), (2, \infty), (2, \infty), (2, \infty) \}$. These barcodes are generated using the Gudhi package~\cite{gudhi:urm}.}
\label{fig:quotient-complex ph}
\end{figure}

\paragraph{Persistent Homology of QC Filtration}
We introduce the construction of the QC filtration based on the following setting. In detail, let $\emptyset \subseteq K_1 \subseteq K_2 \subseteq \cdots \subseteq K_n \subseteq \mathbb{R}^d$ be a filtration of simplicial complexes, where $K_1$ is based on a discrete vertex set $V \subseteq \mathbb{R}^n$. Let $\sim_V$ be an equivalence relation $V$ with finitely many equivalence classes. Note that $V$ is a subcomplex of $X_i$ for each $i$, with an extended equivalence relation on each $X_i$. More precisely, as introduced in Section \ref{SI: Topological Background}, for $x, x' \in K_i$, $x \sim_V x'$ if, and only if either $x = x'$ or $x \sim_V x'$ with $x, x' \in V$. In particular, the canonical map $K_i/\sim_V \rightarrow K_{i+1}/\sim_V$ defines an inclusion. That is, 
\begin{equation*}
\emptyset \subseteq \overline{K_1} \hookrightarrow \overline{K_2} \hookrightarrow \overline{K_3} \hookrightarrow \cdots \hookrightarrow \overline{K_n}   
\end{equation*}
forms a filtration of quotient complexes. Consequently, by defining $\overline{K_i} = K_i/\sim_V$, the PH of the filtration $\overline{K_\bullet}$ is established. Incorporating the adjunction spaces $\widetilde{K_\bullet}$, the following commutative network is established.
\begin{equation}
\label{Eq. Commutative diagram of simplicial complexes and quotient complexes}
\xymatrix@+1.0em{
                & K_1
                \ar@{^{(}->}[r]^{}
                \ar@{^{(}->}[d]_{}
                & K_2
                \ar@{^{(}->}[r]^{}
                \ar@{^{(}->}[d]^{}
                & K_3
                \ar@{^{(}->}[r]^{}
                \ar@{^{(}->}[d]^{}
                & \cdots
                \ar@{^{(}->}[r]^{}
                & K_n
                \ar@{^{(}->}[d]^{}
                \\
                & \widetilde{K_1}
                \ar@{^{(}->}[r]^{}
                \ar[d]_{\widehat{r_1}}
        	& \widetilde{K_2}
				\ar@{^{(}->}[r]^{}
                \ar[d]_{\widehat{r_2}}
				& \widetilde{K_3}
                \ar@{^{(}->}[r]^{}
                \ar[d]_{\widehat{r_3}}
                & \cdots
                \ar@{^{(}->}[r]^{}
                & \widetilde{K_n}
                \ar[d]_{\widehat{r_n}}
                \\
                & \overline{K_1}
                \ar@{^{(}->}[r]^{}
        	& \overline{K_2}
				\ar@{^{(}->}[r]^{}
				& \overline{K_3}
                \ar@{^{(}->}[r]^{}
                & \cdots
                \ar@{^{(}->}[r]^{}
                & \overline{K_n}
				}
\end{equation}
It consists of topological spaces and continuous maps, encompassing inclusion maps and homotopy equivalences $\widehat{r_l}$ ($l = 1, 2, ..., n$). Especially, by the definition of $\widehat{r_l}$ depicted as in~\eqref{Eq. Induced map from K cupg S to K cupf Z}, the bottom rectangles commute. Furthermore, by utilizing the notation introduced in Theorem \ref{Theorem: onto, one-one, and iso}, the network in \eqref{Eq. Commutative diagram of simplicial complexes and quotient complexes} induces the following network of $q$-th homologies and the induced homomorphisms:
\begin{equation}
\label{Eq. Induced ladder tower}
\xymatrix@+1.0em{
                & H_q(K_1)
                \ar[r]^{}
                \ar[d]_{}
                & H_q(K_2)
                \ar[r]^{}
                \ar[d]^{}
                & H_q(K_3)
                \ar[r]^{}
                \ar[d]^{}
                & \cdots
                \ar[r]^{}
                & H_q(K_n)
                \ar[d]^{}
                \\
                & H_q(\widetilde{K_1})
                \ar[r]^{}
                \ar[d]_{H_q(\widehat{r_1})}
        	& H_q(\widetilde{K_2})
				\ar[r]^{}
                \ar[d]_{H_q(\widehat{r_2})}
				& H_q(\widetilde{K_3})
                \ar[r]^{}
                \ar[d]_{H_q(\widehat{r_3})}
                & \cdots
                \ar[r]^{}
                & H_q(\widetilde{K_n})
                \ar[d]_{H_q(\widehat{r_n})}
                \\
                & H_q(\overline{K_1})
                \ar[r]^{}
        	& H_q(\overline{K_2})
				\ar[r]^{}
				& H_q(\overline{K_3})
                \ar[r]^{}
                & \cdots
                \ar[r]^{}
                & H_q(\overline{K_n})
				}    
\end{equation}
Especially, maps $H_q(\widehat{r_l})$ are isomorphisms since $\widehat{r_l}$ are homotopy equivalences. In particular, $H_q(\widetilde{K_\bullet})$ and $H_q(\overline{K_\bullet})$ define the same persistence information, including the persistence barcodes. As a result, we shift our focus from the ladder $H_q(K_\bullet) \rightarrow H_q(\overline{K_\bullet})$ to the upper ladder $H_q(K_\bullet) \rightarrow H_q(\widetilde{K_\bullet})$ in~\eqref{Eq. Induced ladder tower}, and we analyze the barcode relations between flirtations $K_\bullet$ and $\widetilde{K_\bullet}$ in Section \ref{SI: Persistence Barcode Analysis}. Fig. \ref{fig:two_filtrations} exhibits a visible example of a filtration $K_1 \subseteq K_2 \subseteq \cdots \subseteq K_n$ and its associated filtration $\widetilde{K_1} \subseteq \widetilde{K_2} \subseteq \cdots \subseteq \widetilde{K_n}$.

\subsection{Persistence Barcode Analysis}
\label{SI: Persistence Barcode Analysis}
This section analyzes the homological relationships of the persistence barcodes of $H_q(K_\bullet)$ and $H_q(\overline{K_\bullet})$ in $q = 0$ and $1$. Specifically, the vertical maps of the top ladder in~\eqref{Eq. Induced ladder tower} are all isomorphisms for $q > 1$ (Theorem \ref{Theorem: onto, one-one, and iso}(c)), injective for $q = 1$ (Theorem \ref{Theorem: onto, one-one, and iso}(b)), and surjective for $q = 0$ (Theorem \ref{Theorem: onto, one-one, and iso}(a)). This observation allows us to focus our investigation on the persistence barcodes of filtrations $K_\bullet$ and $\widetilde{K_\bullet}$ in dimensions $0$ and $1$.

\paragraph{Analysis of the Barcode in Dimension 0}
We first investigate the persistence barcodes $H_0(K_\bullet)$ and $H_0(\widetilde{K_\bullet})$. In particular, let $\emptyset \subseteq K_1 \subseteq K_2 \subseteq \cdots \subseteq K_n \subseteq \mathbb{R}^d$, $V_1 \subseteq \mathbb{R}^d$ be a discrete vertex set of $K_1$, and $\sim_{V_1}$ be defined as in the previous sections. Then, we have the QC filtration $\emptyset \subseteq K_1/\sim_{V_1} \subseteq K_2/\sim_{V_1} \subseteq \cdots \subseteq K_n/\sim_{V_1}$ with $\widetilde{K_i} := K_i/\sim_{V_1}$ and the ladder
\begin{equation}
\label{Eq. 0-th ladder}
\xymatrix@+0.5em{
                & 0
                \ar[r]^{}
                & H_0(K_1)
                \ar[r]^{\phi_1}
                \ar@{->>}[d]_{\theta_{0,1}}
                & H_0(K_2)
                \ar[r]^{\phi_2}
                \ar@{->>}[d]_{\theta_{0,2}}
                & H_0(K_3)
                \ar[r]^{\phi_3}
                \ar@{->>}[d]_{\theta_{0,3}}
                & \cdots
                \ar[r]^{}
                & H_0(K_n)
                \ar@{->>}[d]_{\theta_{0,n}}
                \\
                & 0
                \ar[r]^{}
                & H_0(\widetilde{K_1})
                \ar[r]^{\psi_1}
        	& H_0(\widetilde{K_2})
				\ar[r]^{\psi_2}
				& H_0(\widetilde{K_3})
                \ar[r]^{\psi_3}
                & \cdots
                \ar[r]^{}
                & H_0(\widetilde{K_n})
                }    
\end{equation}
of the 0-th homology spaces and the $\mathbb{Z}_2$-linear transformations $\phi_\bullet$, $\psi_\bullet$, $\theta_{0,\bullet}$ induced by the inclusion maps.

Especially, if simplicial complexes $K_1, K_2, ..., K_n$ share the same vertex set $V = V_1$, then every persistence interval in the persistence barcode ${\rm PB}_0(X_\bullet)$ has the birth value $0$ since every connected component in each $K_i$ can be represented by a single vertex. However, to consider a more general situation, we slightly release this assumption $K_1, K_2, ..., K_n$ may have different vertex sets and allow intervals in $H_0(K_\bullet)$ and $H_0(\widetilde{K_\bullet})$ to have non-zero birth values.

We utilize the following notation to analyze the barcodes ${\rm PB}_0(K_\bullet)$ and ${\rm PB}_0(\widetilde{K_\bullet})$. We define $V_1 = V$ and set $V_l$ as the vertex set of $K_l$ for $l = 2, 3, ..., n$. In particular, $V_1 \subseteq V_2 \subseteq \cdots \subseteq V_n$. The partition of equivalence classes of $\sim_{V_1}$ in $V_1$ is denoted as $V_1 = \bigsqcup_{j = 1}^k V_{1,j}$.

\begin{theorem}
\label{Theorem: Pre-thm of 0-ladder}
Let $\emptyset \subseteq K_1 \subseteq K_2 \subseteq \cdots \subseteq K_n$ be a filtration of simplicial complexes embedded in $\mathbb{R}^d$, $V = V_1$ be a discrete set which serves as the vertex set of $K_1$, and $V = \bigsqcup_{j = 1}^k V_{1,j}$ the decomposition of equivalence classes defined by an equivalence relation $\sim_{V_1}$ on $V_1$. Then, for the ladder established in~\eqref{Eq. 0-th ladder}, the following properties follow:
\begin{itemize}
\item[\rm (a)] If $t$ has a birth value of $b$ in $H_0(\widetilde{K_\bullet})$, then there is an $s \in H_0(K_b)$ such that $t = \theta_{0,b}(s)$. In particular, $s$ is also born at $b$ within the filtration $H_0(K_\bullet)$.
\item[\rm (b)] If $t \in H_0(\widetilde{K_\bullet})$ has a persistence interval $(b, d)$ in $H_0(\widetilde{K_\bullet})$, then $t$ can be expressed as the equivalence class $\langle c \rangle$ in $H_0(\widetilde{K_b})$, where $c$ is a $0$-chain of vertices in $V_b \setminus V_{b-1}$. Especially, $c$ also represents an equivalence class $s = \langle c \rangle$ in $H_0(K_b)$, which also has a persistence interval $(b, d)$ in $H_0(K_\bullet)$.
\item[\rm (c)] If $t \in H_0(\widetilde{K_\bullet})$ has a persistence interval $(b, \infty)$ in $H_0(\widetilde{K_\bullet})$, then the element $s$ defined as in (b) has the same persistence interval $(b, \infty)$ in $H_0(K_\bullet)$.
\end{itemize}
\end{theorem}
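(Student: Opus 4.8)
The plan is to extract every piece of $0$-dimensional persistence data of $K_\bullet$ from that of $\widetilde{K_\bullet}$ by exploiting the commutative ladder~\eqref{Eq. 0-th ladder}, whose vertical maps $\theta_{0,l}$ are surjective by Theorem~\ref{Theorem: onto, one-one, and iso}(a). The organizing principle is that \emph{births} and the property of \emph{never re-entering the image of the previous stage} descend from the top row to the bottom row (or lift back) purely through commutativity and surjectivity, using the vertical arrows in the forward direction; this is the ``easy'' direction. Pinning down the exact \emph{death} level, by contrast, forces a concrete description of $\ker\theta_{0,d}$, and this is where the real work sits.

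First I would settle (a). By surjectivity of $\theta_{0,b}$ pick $s\in H_0(K_b)$ with $\theta_{0,b}(s)=t$. If $s$ were already present at $b-1$, say $s=\phi_{b-1}(s')$, then commutativity would give $t=\theta_{0,b}\phi_{b-1}(s')=\psi_{b-1}\theta_{0,b-1}(s')\in{\rm im}(\psi_{b-1})$, contradicting that $t$ is born at $b$; hence $s$ is born at $b$. For the representative asserted in (b) I would invoke the standard dimension-$0$ fact that enlarging a complex by edges only merges components, while new components arise only from new vertices; thus a class born at $b$ in $H_0(K_\bullet)$ is a sum of classes of $K_b$-components built entirely from vertices of $V_b\setminus V_{b-1}$, so $s=\langle c\rangle$ for a $0$-chain $c$ supported on $V_b\setminus V_{b-1}$, and since $\theta_{0,b}$ is induced by the inclusion $K_b\hookrightarrow\widetilde{K_b}$ the same chain gives $t=\langle c\rangle$.

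Next I would show $s$ has death $d$ as well. That $s$ does not die before $d$ is again the easy direction: if $\phi_{b,d'}(s)\in{\rm im}(\phi_{b-1,d'})$ for some $d'<d$, applying $\theta_{0,d'}$ and commutativity would force $\psi_{b,d'}(t)\in{\rm im}(\psi_{b-1,d'})$, contradicting that $t$ is alive at $d'$. The hard part, and the main obstacle, is proving $\phi_{b,d}(s)\in{\rm im}(\phi_{b-1,d})$. Writing the death of $t$ as $\psi_{b,d}(t)=\psi_{b-1,d}(u)$, lifting $u$ to $u'\in H_0(K_{b-1})$ by surjectivity and comparing under $\theta_{0,d}$ yields only $\phi_{b,d}(s)-\phi_{b-1,d}(u')\in\ker\theta_{0,d}$, so everything hinges on controlling $\ker\theta_{0,d}$. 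Here I would use that $\ker\theta_{0,d}$ is spanned by differences $[x]-[x']$ of vertices $x,x'$ in a common equivalence class, and that every such class lies inside $V_1$. When $b\ge 2$ one has $V_1\subseteq V_{b-1}$, so each $[x]$ is the class of a component already present at stage $b-1$; thus $\ker\theta_{0,d}\subseteq{\rm im}(\phi_{b-1,d})$ and $\phi_{b,d}(s)\in{\rm im}(\phi_{b-1,d})$ with no change to the new-vertex representative. When $b=1$ the target ${\rm im}(\phi_{0,d})$ is $0$, and instead I would use that $\phi_{1,d}$ maps $\ker\theta_{0,1}$ onto $\ker\theta_{0,d}$ (both are generated by the same same-class vertex differences), letting me subtract a suitable $\delta\in\ker\theta_{0,1}$ from $s$ — which keeps $s$ a representative of $t$ still supported on $V_1=V_1\setminus V_0$ — so that $\phi_{1,d}(s)=0$.

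Finally, (c) is the pure easy direction and needs no kernel analysis: if $t$ has infinite lifespan, then for every $d'$ the relation $\phi_{b,d'}(s)\in{\rm im}(\phi_{b-1,d'})$ would push down to $\psi_{b,d'}(t)\in{\rm im}(\psi_{b-1,d'})$, which never occurs, so $s$ never dies and carries the interval $(b,\infty)$. I expect the genuine difficulty to be concentrated entirely in death condition~(ii) of part (b), that is, the explicit control of $\ker\theta_{0,d}$ via same-class vertex differences and the resulting separate handling of the border case $b=1$.
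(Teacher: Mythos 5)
Your proposal follows the paper's proof in all of its load\-/bearing parts: (a) is obtained exactly as in the paper (surjectivity of $\theta_{0,b}$ plus commutativity of the ladder \eqref{Eq. 0-th ladder}); the ``does not die before $d$'' half of (b) and all of (c) are the same push\-/down\-/the\-/ladder arguments; and your kernel formulation of the hard step is the same mathematics as the paper's chain computation, since the paper writes the discrepancy as $\partial_1(\alpha)$ with $\alpha \in C_1(S)$ and observes that, after the $Z$\-/vertices cancel, it is a sum of pairs $x + x'$ of $\sim_{V_1}$\-/equivalent vertices --- precisely your generators of $\ker\theta_{0,d}$ --- and then both arguments finish, for $b \geq 2$, via $V_1 \subseteq V_{b-1}$. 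Where you genuinely add something is the case $b = 1$: the paper invokes ``$V_1 \subseteq V_{b-1}$'' uniformly, which fails when $b = 1$ (there $V_0 = \emptyset$ and ${\rm im}(\phi_{0,d}) = 0$), and an arbitrary lift really can fail to die at $d$. For instance, with $V_1 = \{p_1,p_2,q_1,q_2\}$, equivalence classes $\{p_1,p_2\}$ and $\{q_1,q_2\}$, $K_1$ discrete and $K_2$ adding the edges $[p_1,q_1]$ and $[p_2,q_2]$, the class $t = \langle p_1+q_1 \rangle$ has interval $(1,2)$ in $H_0(\widetilde{K_\bullet})$, yet its lift $\langle p_1+q_2 \rangle$ survives in $H_0(K_2)$. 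Your repair --- replacing $s$ by $s + \delta$ with $\delta \in \ker\theta_{0,1}$, using that $\phi_{1,d}$ carries $\ker\theta_{0,1}$ onto $\ker\theta_{0,d}$ because both are spanned by the same classes $\langle x + x' \rangle$ with $x \sim_{V_1} x'$ --- is correct and closes a gap that the paper's own proof leaves open.

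The one step of yours that fails is the justification of the new\-/vertex representative: the claimed ``standard fact'' that a class born at $b$ in $H_0(K_\bullet)$ is a sum of classes of components built entirely from vertices of $V_b \setminus V_{b-1}$ is false. Being born at $b$, i.e.\ $s \notin {\rm im}(\phi_{b-1})$, only forces at least one all\-/new component to occur in the component\-/basis expansion of $s$. Concretely, take $V_1 = \{x, x'\}$ with $x \sim_{V_1} x'$, $K_2 = K_1 \sqcup \{y\}$ with $y$ new and isolated, and $K_3 = K_2 \cup [x,y]$: then $s = \langle x + y \rangle \in H_0(K_2)$ is born at $2$ but is not a sum of all\-/new components, and the corresponding $t = \theta_{0,2}(s) = \langle x+y \rangle$ has persistence interval $(2,3)$ in $H_0(\widetilde{K_\bullet})$ yet admits \emph{no} representative chain supported on $V_2 \setminus V_1 = \{y\}$ (adding elements of $\ker\theta_{0,2}$, which are supported on the old vertices $V_1$, cannot remove the old component). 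So ``born at $b$ $\Rightarrow$ supported on $V_b \setminus V_{b-1}$'' does not go through, and indeed the full support clause of statement (b) is not provable under the paper's elementwise definition of intervals. You are in good company: the paper's own proof only establishes the weaker assertion that at least one vertex of the representative lies in $V_b \setminus V_{b-1}$, and Corollary \ref{Corollary: Pre-thm of 0-ladder} downstream uses only the interval correspondence and the linear independence of lifts, never the support clause. Your argument becomes correct --- and proves everything that is actually used --- if you simply weaken that sentence to ``some vertex of the representative is new'' and keep the rest of your death analysis unchanged.
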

\begin{proof}
(a) If $t$ has a birth value of $b$ in $H_0(\widetilde{K_\bullet})$, then $t \in H_0(\widetilde{K_b}) \setminus {\rm im}(\psi_{b-1})$. Because $\theta_{0,b}$ is onto, there is an $s \in H_0(K_b)$ such that $t = \theta_{0,b}(s)$. Suppose $s$ is not born at $b$ within $H_0(K_\bullet)$, then $s \in {\rm im}(\phi_{b-1})$. Then, $t \in {\rm im}(\theta_{0,b} \circ \phi_{b-1}) = {\rm im}(\psi_{b} \circ \theta_{0,b-1}) \subseteq  {\rm im}(\psi_{b})$. This shows that $t$ is not born at moment $b$, a contradiction.

(b) Because $\theta_{0,b}$ is onto, there is a $0$-chain $c$ such that $\theta_{0,b}(\langle c \rangle) = t$, where $s := \langle c \rangle$ denotes the equivalence class in $H_0(K_b)$ that is represented by $c$. By definition, $c$ can be represented as a formal sum of vertices $v_1 + v_2 + \cdots + v_m$ in $V_b$. By assertion (a), the element $s$ in $H_0(K_\bullet)$ must have a birth value of $b$, at least one the vertices in $\{ v_1, ..., v_m \}$ belongs to $V_{b} \setminus V_{b-1}$. If $\phi_{b,d-1}(s) \in {\rm im}(\phi_{b-1,d-1})$, then we have $\psi_{b,d-1}(t) = (\psi_{b,d-1} \circ \theta_{0,b})(s) = (\theta_{0,d-1} \circ \phi_{b,d-1})(s) \in {\rm im}(\theta_{0,d-1} \circ \phi_{b-1,d-1}) = {\rm im}(\psi_{b-1,d-1} \circ \theta_{0,b-1}) \subseteq {\rm im}(\psi_{b-1,d-1})$. This is impossible since $t$ doesn't die at $d-1$, and we deduce that $s$ has a death value $\geq d$.

On the other hand, by the death value of $t$, we have $\psi_{b,d}(t) = (\psi_{b-1, d} \circ \theta_{0,b-1})(u)$ for some $u \in H_{0}(K_{b-1})$, where $u$ is a formal sum $w_1 + w_2 + \cdots + w_r$ of vertices in $V_{b-1}$. In particular,
\begin{equation*}
v_1 + \cdots  + v_m = w_1 + w_2 + \cdots + w_r + \partial_1(\alpha) + \partial_1(\beta)  
\end{equation*}
for some $\alpha \in C_1(S)$ and $\beta \in C_1(K_b)$. Because the $0$-chains $v_1 + \cdots + v_m$, $w_1 + \cdots + w_r$, and $\partial_1(\beta)$ contain no vertices in $Z$. The $0$-chain $\partial_1(\alpha)$ contains no vertices in $Z$; in particular, $\partial_1(\alpha)$ is a $0$-chain consists of vertices in $V_1 \subseteq V_{b-1}$. This shows that
\begin{equation*}
\phi_{b,d}(s) = \langle v_1 + \cdots  + v_m \rangle = \langle w_1 + w_2 + \cdots + w_r + \partial_1(\alpha) \rangle \in {\rm im}(\phi_{b-1, d}).  
\end{equation*}
This shows that $s$ dies at the moment $d$ and we conclude that $s$ has a persistence interval $(b,d)$ in $H_0(K_\bullet)$.

(c) Suppose $t$ has a persistence interval $(b,\infty)$ and $s$ the element defined as in (b). By the proof of (b), if $s$ dies at some moment $d$, then $t$ must have a death value $\leq d$, which is impossible.
\end{proof}

\begin{corollary}
\label{Corollary: Pre-thm of 0-ladder}
Let $K_\bullet$ and $\widetilde{K_\bullet}$ be defined as in Theorem \ref{Theorem: Pre-thm of 0-ladder}. Let $H_0(K_\bullet)$ and $H_0(\widetilde{K_\bullet})$ be the $0$-th persistent homologies of the filtrations $K_\bullet$ and $\widetilde{K_\bullet}$ as shown in \ref{Eq. 0-th ladder}. Then, ${\rm PB}_0(\widetilde{K_\bullet})$ is a subset of ${\rm PB}_0(K_\bullet)$, i.e.,
\begin{equation*}
{\rm PB}_0(\widetilde{K_\bullet}) \subseteq {\rm PB}_0(K_\bullet).
\end{equation*}
\end{corollary}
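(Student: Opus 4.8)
The plan is to prove the multiset inclusion by exhibiting, for every interval of $H_0(\widetilde{K_\bullet})$, a matching interval of $H_0(K_\bullet)$ with the same endpoints, and then checking that this matching respects multiplicities. The two ingredients I would lean on are both already available: the vertical maps $\theta_{0,\bullet}$ in the ladder \eqref{Eq. 0-th ladder} are surjective (Theorem \ref{Theorem: onto, one-one, and iso}(a)), and Theorem \ref{Theorem: Pre-thm of 0-ladder} pairs the birth and death of each class of $H_0(\widetilde{K_\bullet})$ with a class of $H_0(K_\bullet)$ carrying the identical persistence interval. So morally the Corollary is just a repackaging of Theorem \ref{Theorem: Pre-thm of 0-ladder}; the work lies in turning the element-wise matching into a comparison of the two barcodes as multisets.

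First I would fix a persistence-compatible basis of the module $H_0(\widetilde{K_\bullet})$: a family $\{t_\alpha\}$ whose born-at-$b$ members descend to a basis of $H_0(\widetilde{K_b})/{\rm im}(\psi_{b-1})$ for each $b$, chosen so that every $t_\alpha$ has a well-defined death and the multiset $\{(b_\alpha,d_\alpha)\}$ is exactly ${\rm PB}_0(\widetilde{K_\bullet})$. For a fixed $t=t_\alpha$ with interval $(b,d)$, allowing $d=\infty$, Theorem \ref{Theorem: Pre-thm of 0-ladder}(b)--(c) hands me a $0$-chain $c$ with $t=\langle c\rangle$ in $H_0(\widetilde{K_b})$ whose class $s:=\langle c\rangle\in H_0(K_b)$ satisfies $\theta_{0,b}(s)=t$ and has the same persistence interval $(b,d)$ in $H_0(K_\bullet)$. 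This produces an interval-preserving assignment $t\mapsto s$; what is left is to see that for each pair $(b,d)$ the lifts account for at least as many copies of $(b,d)$ in ${\rm PB}_0(K_\bullet)$ as there are in ${\rm PB}_0(\widetilde{K_\bullet})$, i.e. $\mu^{K}_{b,d}\ge\mu^{\widetilde{K}}_{b,d}$.

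The main obstacle is precisely this multiplicity bookkeeping, because a linear combination of the lifts $s_i$ could in principle die earlier than the common value $d$, which would spoil their independence in the relevant death-subquotient of $H_0(K_b)/{\rm im}(\phi_{b-1})$. I would rule this out by pulling independence back through $\theta_{0,\bullet}$: the commuting squares of \eqref{Eq. 0-th ladder} give $\psi_{b,d-1}(\sum\lambda_i t_i)=\theta_{0,d-1}(\phi_{b,d-1}(\sum\lambda_i s_i))$ together with $\theta_{0,d-1}({\rm im}(\phi_{b-1,d-1}))\subseteq{\rm im}(\psi_{b-1,d-1})$, so if $\sum\lambda_i s_i$ dies before $d$ then $\sum\lambda_i t_i$ dies before $d$ as well; since the $t_i$ were chosen independent in the death-$d$ subquotient of $H_0(\widetilde{K_\bullet})$, every $\lambda_i$ must vanish. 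The same pull-back, run with $d=\infty$ and using surjectivity of $\theta_{0,b}$ to secure independence modulo ${\rm im}(\phi_{b-1})$, disposes of the infinite intervals ${\rm PB}_1^{\infty}$-style classes in degree $0$. Summing the resulting inequalities $\mu^{\widetilde{K}}_{b,d}\le\mu^{K}_{b,d}$ over all $(b,d)$ then yields ${\rm PB}_0(\widetilde{K_\bullet})\subseteq{\rm PB}_0(K_\bullet)$, as claimed.
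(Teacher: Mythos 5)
Your proposal is correct and follows essentially the same route as the paper's own proof: lift each interval of $H_0(\widetilde{K_\bullet})$ to $H_0(K_\bullet)$ via Theorem \ref{Theorem: Pre-thm of 0-ladder}(b)--(c), then use surjectivity and linearity of $\theta_{0,\bullet}$ to transfer linear independence and match multiplicities. If anything, your multiplicity bookkeeping is more careful than the paper's: the published proof only checks that the lifts $s_i$ are independent at birth (i.e., modulo ${\rm im}(\phi_{b-1})$), whereas you additionally rule out a nontrivial combination $\sum\lambda_i s_i$ dying before $d$ by pushing it back through the commuting squares of \eqref{Eq. 0-th ladder}, which is exactly the point needed to conclude the multiset inequality $\mu^{\widetilde{K}}_{b,d}\le\mu^{K}_{b,d}$.
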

\begin{proof}
By Theorem \ref{Theorem: Pre-thm of 0-ladder}-(b), every element which has a persistence interval $(b,d)$ in $H_0(\widetilde{K_\bullet})$ admits an element in $H_0(K_\bullet)$ that has the same persistence interval $(b,d)$. Moreover, suppose $\{ t_1, t_2, ..., t_m \}$ is a linear independent set in $H_0(\widetilde{K_b}) \setminus {\rm im}(\psi_{b-1})$, there are $s_1, s_2, ..., s_m \in H_0(K_b) \setminus {\rm im}(\phi_{b-1})$ such that $\theta_{0,b}(s_i) = t_i$ for each $i$. Because $\theta_{0,b}$ is linear, set $\{ s_1, s_2, ..., s_m \}$ is also linearly independent. In other words, each $t_i$ corresponds to a persistence interval in $H_0(K_\bullet)$, which represents $s_i$. In summary, ${\rm PB}_0(\widetilde{K_\bullet})$ is a subset of ${\rm PB}_0(K_\bullet)$.
\end{proof}

Corollary \ref{Corollary: Pre-thm of 0-ladder} elucidates that the barcode details of ${\rm PB}_0(\widetilde{K_\bullet})$ are intrinsically incorporated within the persistence barcode ${\rm PB}_0(K_\bullet)$. In particular, it encapsulates the merging relationships inherent in the original filtration, unaffected by the quotient relation defined on the vertex set $V$. This unveils a more fundamental topology that is independent of the quotient relation, offering a clearer depiction of the persistence within the original filtration.

\paragraph{Analysis of Barcode in Dimension 1} 
The previous part and observations establish the inclusion relationship ${\rm PB}_0(\widetilde{K_\bullet}) \subseteq {\rm PB}_0(K_\bullet)$. Similarly, an analogous version holds for the first persistence barcodes, i.e., ${\rm PB}_1(K_\bullet) \subseteq {\rm PB}_1(\widetilde{K_\bullet})$. We formalize this dual observation in the following theorem.

\begin{theorem}
\label{Theorem: barcode information in the included persistent homologies}
Let $\emptyset \subseteq K_1 \subseteq K_2 \subseteq \cdots \subseteq K_n$ be a filtration of simplicial complexes in $\mathbb{R}^d$.  Let $V$ be a discrete set, serving as the vertex set of $K$. Let $\sim_V$ be an equivalence relation on $V$ with finite equivalence classes $V_1, V_2, ..., V_k$. Then the following ladder
\begin{equation}
\label{Eq. H1 Ladder}
\xymatrix@+1.0em{
                & H_1(K_1)
                \ar[r]^{\phi_1}
                \ar@{^{(}->}[d]_{}
                & H_1(K_2)
                \ar[r]^{\phi_2}
                \ar@{^{(}->}[d]_{}
                & H_1(K_3)
                \ar[r]^{\phi_3}
                \ar@{^{(}->}[d]_{}
                & \cdots
                \ar[r]^{}
                & H_1(K_n)
                \ar@{^{(}->}[d]_{}
                \\
                & H_1(\widetilde{K_1})
                \ar[r]^{\psi_1}
        	& H_1(\widetilde{K_2})
				\ar[r]^{\psi_2}
				& H_1(\widetilde{K_3})
                \ar[r]^{\psi_3}
                & \cdots
                \ar[r]^{}
                & H_1(\widetilde{K_n})
				}    
\end{equation}
of the $1$-th persistent homology $H_1(K_\bullet)$ and $H_1(\widetilde{K_\bullet})$ are derived. Let $b, d \in \{ 1, 2, ..., n \}$. If $s \in H_1(K_b)$ has the persistence interval $(b,d)$ within $H_1(K_\bullet)$, then $s$ has the same birth and death moments $(b,d)$ within $H_1(\widetilde{K_\bullet})$. Furthermore, if $s$ has a death value $\infty$ in $H_1(K_\bullet)$, then $s$ also has a death value $\infty$ in $H_1(\widetilde{K_\bullet})$.    
\end{theorem}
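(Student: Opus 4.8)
The plan is to prove the statement by identifying $s \in H_1(K_b)$ with its image $\theta_{1,b}(s) \in H_1(\widetilde{K_b})$ under the vertical map, which is injective by Theorem \ref{Theorem: onto, one-one, and iso}(b), and then tracking the birth and death conditions through the commutative ladder \eqref{Eq. H1 Ladder}. The whole argument rests on a single structural observation: because the gluing stars $S$ consist only of extra vertices and edges (they are $1$-dimensional), we have $C_2(K_i) = C_2(\widetilde{K_i})$ for every $i$, so ${\rm im}(\partial_2^{\widetilde{K_i}}) = {\rm im}(\partial_2^{K_i}) \subseteq C_1(K_i)$. In particular, every $2$-boundary in $\widetilde{K_i}$ is a chain of \emph{original} edges and carries no gluing-star edge. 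Throughout I would use commutativity of the ladder in the form $\psi_{i,j} \circ \theta_{1,i} = \theta_{1,j} \circ \phi_{i,j}$, together with the convention that birth at $b$ is the case $d' = b$ of the liveness condition $\phi_{b,d'}(s) \notin {\rm im}(\phi_{b-1,d'})$.

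For birth and for survival up to $d$ I would argue by contradiction using a representative chase. Fix a $1$-cycle $z$ in $K_b$ representing $s$, and suppose that for some $d'$ with $b \le d' < d$ one has $\psi_{b,d'}(\theta_{1,b}(s)) \in {\rm im}(\psi_{b-1,d'})$, say equal to $\psi_{b-1,d'}([w])$ where $w \in C_1(\widetilde{K_{b-1}})$ is a cycle representing a class of $H_1(\widetilde{K_{b-1}})$. Then $z - w = \partial_2 \tau$ for some $\tau \in C_2(\widetilde{K_{d'}}) = C_2(K_{d'})$. Since $\partial_2 \tau$ and $z$ both lie in $C_1(K_{d'})$ and contain no gluing-star edge, the gluing-star component of $w = z - \partial_2 \tau$ must vanish, which forces $w \in C_1(K_{b-1})$ (using that the original edges of $\widetilde{K_{b-1}}$ are exactly those of $K_{b-1}$). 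As $w$ is then a cycle lying in the original complex, it defines a class $[w] \in H_1(K_{b-1})$ with $\phi_{b-1,d'}([w]) = \phi_{b,d'}(s)$, contradicting the fact that $s$ is alive at $d'$ within $H_1(K_\bullet)$. Taking $d' = b$ yields that $\theta_{1,b}(s)$ is born at $b$, and taking $b < d' < d$ yields that it survives to $d$.

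Death at $d$ is then immediate from commutativity, with no representative chasing: since $\phi_{b,d}(s) \in {\rm im}(\phi_{b-1,d})$, write $\phi_{b,d}(s) = \phi_{b-1,d}(s')$; then $\psi_{b,d}(\theta_{1,b}(s)) = \theta_{1,d}(\phi_{b,d}(s)) = \psi_{b-1,d}(\theta_{1,b-1}(s')) \in {\rm im}(\psi_{b-1,d})$, so $\theta_{1,b}(s)$ dies at $d$, giving the persistence interval $(b,d)$. For the infinite case, if $s$ never dies in $H_1(K_\bullet)$, then the representative-chasing argument of the previous paragraph applies verbatim at every $d' \ge b$, showing $\psi_{b,d'}(\theta_{1,b}(s)) \notin {\rm im}(\psi_{b-1,d'})$ for all $d'$, so $\theta_{1,b}(s)$ also has death value $\infty$.

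I expect the representative chase in the second paragraph to be the crux. The delicate point is not the diagram manipulation but the claim that the earlier-born representative $w$ can be pinned inside the original subcomplex $C_1(K_{b-1})$; this is precisely where the identity $C_2(K_i) = C_2(\widetilde{K_i})$ is indispensable, since it guarantees that adjoining the gluing stars never creates a $2$-dimensional relation involving the new edges that could make $z$ bound prematurely in $\widetilde{K}$. Everything else---birth, survival, death, and the infinite case---then reduces to bookkeeping with the commutative ladder and the injectivity of the vertical maps.
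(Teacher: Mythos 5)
Your proof is correct and takes essentially the same route as the paper's: both arguments hinge on the key fact that the gluing stars contribute no $2$-simplices, so $C_2(\widetilde{K_i}) = C_2(K_i)$ and any earlier-born representative can be pinned inside $C_1(K_{b-1})$ by comparing gluing-star components of a bounding relation, with commutativity of the ladder supplying death at $d$ and the contrapositive handling the infinite case. Your only departure is organizational---you merge birth and survival into a single parametrized liveness chase over $b \le d' < d$, whereas the paper proves birth first and then reuses that argument for survival at intermediate levels.
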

\begin{proof}
All the homologies are vector spaces over the binary field $\mathbb{Z}_2$. Firstly, we prove that $s$ is born with the birth value of $b$ in $H_1(\widetilde{K_\bullet})$. By definition of homology, $s$ can be represented as the equivalence class of a $1$-cycle $\gamma$, which is a $\mathbb{Z}_2$-linear combination of $1$-simplices in $K_b$. If $s$ is not born at $b$, then there is a $t \in H_1(\widetilde{K_{b-1}})$ such that $\psi_{b-1}(t) = s$. Let $S$ be the simplicial complex as the union of gluing stars, making $\widetilde{K_l} = K_l \cup S$.

Because $s \in H_1(K_b) \setminus {\rm im}(\phi_{b-1})$, the representative of the equivalence class $t$ must have the form $\alpha + \beta$, where $\alpha$ is a non-zero $1$-chain consisting of $1$-simplices in $S$, $\beta$ is a $1$-chain of $1$-simplices in $K_{b-1} \subseteq K_b$, and $\alpha + \beta$ is a $1$-cycle in $\widetilde{K_{b-1}}$. Then $\alpha + \beta - \gamma$ is the boundary of a $2$-simplex in $\widetilde{K_b}$, which is also a $2$-simplex in $K_b$. It is impossible since $\alpha \neq 0$ contains $1$-simplices in $S$ as summands and $\beta - \gamma$ contains only terms of $1$-simplices in $K_b$.

Subsequently, we show that $s$ has the death value of $d$ in the second persistent homology $H_1(\widetilde{K_\bullet})$. Because $s$ dies at $d$, there is a $t \in H_{1}(K_{b-1}) \subseteq H_1(\widetilde{K_{b-1}})$ such that $\phi_{b-1,d}(t) = \phi_{b,d}(s)$. First, the commutativity of the ladder shows that $\psi_{b,d}(s) = \phi_{b,d}(s) = \phi_{b-1,d}(t) = \psi_{b-1,d}(t)$. On the other hand, suppose $\psi_{b,d'}(s) = \psi_{b-1,d'}(t')$ for some $t' \in H_1(\widetilde{K_{b-1}})$ and $d'$ with $b < d' < d$. By the same argument in the first part, any $1$-cycle representative of $t'$ must be a $1$-cycle in $K_{b-1}$ since $s \in H_1(K_b)$. However, it contradicts the fact that $s$ has a death value in $H_1(K_\bullet)$.

Lastly, suppose $s$ has an infinite death value in $H_1(K_\bullet)$ but a finite death value $d \leq n$ in $H_1(\widetilde{K_\bullet})$. By the proof of the second part, $s$ must have a death value $d' \leq d$ in $H_1(K_\bullet)$, leading to a contradiction.
\end{proof}

\begin{corollary}
\label{Coro: inclusion relation of the first PBs-prototype}
Let $\emptyset \subseteq K_1 \subseteq K_2 \subseteq \cdots \subseteq K_n \subseteq \mathbb{R}^d$ and $V = \bigsqcup_{j = 1}^k V_j \subseteq K_1$ be defined as in Theorem \ref{Theorem: barcode information in the included persistent homologies}. Let $s$ be an element in $H_1(K_b)$. Then $s \in {\rm im}(\psi_{b-1})$ if and only if $s \in {\rm im}(\phi_{b-1})$. In particular, the canonical linear transformation
\begin{equation*}
\frac{H_1(K_b)}{{\rm im}(\phi_{b-1})} \longrightarrow  \frac{H_1(\widetilde{K_b})}{{\rm im}(\psi_{b-1})}  
\end{equation*}
is one-to-one. Therefore, $H_1(K_b)/{\rm im}(\phi_{b-1})$ can be identified as a subspace of $H_1(\widetilde{K_b})/{\rm im}(\psi_{b-1})$. 
\end{corollary}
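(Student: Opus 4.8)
The plan is to treat the two implications separately, deducing the harder one from Theorem \ref{Theorem: barcode information in the included persistent homologies} and the easier one from the commutativity of the ladder \eqref{Eq. H1 Ladder}. Throughout I identify $s \in H_1(K_b)$ with its image $\theta_b(s) \in H_1(\widetilde{K_b})$ under the injective vertical map $\theta_b$ of \eqref{Eq. H1 Ladder}, so that ``$s \in \mathrm{im}(\psi_{b-1})$'' is read as $\theta_b(s) \in \mathrm{im}(\psi_{b-1})$. First the easy direction: if $s = \phi_{b-1}(t)$ for some $t \in H_1(K_{b-1})$, then commutativity of \eqref{Eq. H1 Ladder} gives $\theta_b(s) = \theta_b(\phi_{b-1}(t)) = \psi_{b-1}(\theta_{b-1}(t)) \in \mathrm{im}(\psi_{b-1})$, so $s \in \mathrm{im}(\phi_{b-1})$ implies $s \in \mathrm{im}(\psi_{b-1})$ with no further work.

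For the converse I would argue by contraposition: assuming $s \notin \mathrm{im}(\phi_{b-1})$ I must show $\theta_b(s) \notin \mathrm{im}(\psi_{b-1})$. Here $s$ is born at $b$ in $H_1(K_\bullet)$, so it carries a well-defined persistence interval $(b,d)$ (take $d$ to be the least index at which $\phi_{b,d}(s)$ enters $\mathrm{im}(\phi_{b-1,d})$, or $d = \infty$). The birth half of Theorem \ref{Theorem: barcode information in the included persistent homologies} then yields exactly that $s$ is born at $b$ within $H_1(\widetilde{K_\bullet})$, i.e. $\theta_b(s) \notin \mathrm{im}(\psi_{b-1})$. If a self-contained argument is preferred, I would repeat the cycle-level computation from that proof: represent $s$ by a $1$-cycle $\gamma$ supported on $1$-simplices of $K_b$; a putative preimage $t \in H_1(\widetilde{K_{b-1}})$ with $\psi_{b-1}(t) = \theta_b(s)$ would have a representative $\alpha + \beta$ with $\alpha \neq 0$ a chain of $1$-simplices in the gluing stars $S$ and $\beta$ a chain in $K_{b-1}$; then $\alpha + \beta + \gamma$ would bound a $2$-chain in $\widetilde{K_b} = K_b \cup S$, which is impossible since every $2$-simplex of $\widetilde{K_b}$ lies in $K_b$ and so its boundary contains no $1$-simplex of $S$, whereas $\alpha \neq 0$ does.

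With both implications in hand, the displayed map $H_1(K_b)/\mathrm{im}(\phi_{b-1}) \to H_1(\widetilde{K_b})/\mathrm{im}(\psi_{b-1})$ induced by $\theta_b$ is well-defined, because the easy direction shows $\theta_b(\mathrm{im}(\phi_{b-1})) \subseteq \mathrm{im}(\psi_{b-1})$. Its kernel consists of classes $s + \mathrm{im}(\phi_{b-1})$ with $\theta_b(s) \in \mathrm{im}(\psi_{b-1})$, and the converse direction forces $s \in \mathrm{im}(\phi_{b-1})$; hence the kernel is trivial, the map is one-to-one, and $H_1(K_b)/\mathrm{im}(\phi_{b-1})$ is identified with a subspace of $H_1(\widetilde{K_b})/\mathrm{im}(\psi_{b-1})$.

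I expect the converse direction to be the main obstacle, and its entire force sits in the structural observation that the gluing stars $S$ are $1$-dimensional: this is precisely what prevents any $2$-simplex of $\widetilde{K_b}$ from cancelling the $S$-component $\alpha$, and hence what stops new $1$-dimensional relations from being created when passing from $K_\bullet$ to $\widetilde{K_\bullet}$ at stages before $b$. Everything else is formal diagram-chasing through \eqref{Eq. H1 Ladder}.
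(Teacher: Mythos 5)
Your proposal is correct and follows essentially the same route as the paper: the easy implication via commutativity of the ladder \eqref{Eq. H1 Ladder}, and the converse by contraposition using the birth statement of Theorem \ref{Theorem: barcode information in the included persistent homologies}, with injectivity of the quotient map then following formally. The optional self-contained cycle argument you sketch simply reproduces the computation inside the paper's proof of that theorem, so it is not a genuinely different approach.
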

\begin{proof}
By the commutativity of rectangles in \eqref{Eq. H1 Ladder}, $s \in {\rm im}(\psi_{b-1})$ if $s \in {\rm im}(\phi_{b-1})$. If $s \notin {\rm im}(\phi_{b-1})$, then $s$ is born at $b$ in $H_1(K_\bullet)$. By Theorem \ref{Theorem: barcode information in the included persistent homologies}, $s$ is also born at $b$ in $H_1(\widetilde{K_\bullet})$. That is, $s \notin {\rm im}(\psi_{b-1})$.    
\end{proof}

In essence, the persistent homology $H_1(\widetilde{K_{\bullet}})$ preserves the original information from $H_1(K_{\bullet})$ and introduces additional loop information resulting from the gluing process. Specifically, ${\rm PB}_1(K_{\bullet})$ constitutes a subset of ${\rm PB}_1(\widetilde{K_{\bullet}})$, i.e., ${\rm PB}_1(K_{\bullet}) \subseteq {\rm PB}_1(\widetilde{K_{\bullet}})$. On the other hand, ${\rm PB}_q(K_{\bullet}) = {\rm PB}_q(\widetilde{K_{\bullet}})$ holds for $q > 1$. This observation is succinctly summarized in the following corollary.

\begin{corollary}
\label{Coro: inclusion relation of the first PBs}
Let $\emptyset \subseteq K_1 \subseteq K_2 \subseteq \cdots \subseteq K_n \subseteq \mathbb{R}^d$ and $V = \bigsqcup_{j = 1}^k V_j \subseteq K_1$ be defined as in Theorem \ref{Theorem: barcode information in the included persistent homologies}. Then ${\rm PB}_1(K_{\bullet}) \subseteq {\rm PB}_1(\widetilde{K_{\bullet}})$ and ${\rm PB}_q(K_{\bullet}) = {\rm PB}_q(\widetilde{K_{\bullet}})$ for $q > 1$.        
\end{corollary}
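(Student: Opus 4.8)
The plan is to dispatch the two assertions separately, since they rest on different parts of Theorem~\ref{Theorem: onto, one-one, and iso}. For $q>1$ the claim is immediate: by Theorem~\ref{Theorem: onto, one-one, and iso}(c) every vertical map $\theta_{q,i}\colon H_q(K_i)\to H_q(\widetilde{K_i})$ in the top ladder of~\eqref{Eq. Induced ladder tower} is an isomorphism, and that ladder commutes because it is induced by inclusions. A levelwise isomorphism that commutes with the structure maps is an isomorphism of persistence modules, so $H_q(K_\bullet)$ and $H_q(\widetilde{K_\bullet})$ are isomorphic as persistence modules; by the uniqueness of the interval decomposition they have identical barcodes, giving ${\rm PB}_q(K_\bullet)={\rm PB}_q(\widetilde{K_\bullet})$.

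For $q=1$ the goal reduces to a \emph{multiset} inclusion ${\rm PB}_1(K_\bullet)\subseteq{\rm PB}_1(\widetilde{K_\bullet})$, i.e.\ to showing $\mu^K(b,d)\le\mu^{\widetilde K}(b,d)$ for every interval $(b,d)$, where $\mu$ denotes the multiplicity (number of bars) with the given birth and death. I would fix $(b,d)$ (allowing $d=\infty$) and take classes $s_1,\dots,s_m\in H_1(K_b)$ generating the bars of ${\rm PB}_1(K_\bullet)$ with interval exactly $(b,d)$, so that $m=\mu^K(b,d)$ and the $s_i$ are linearly independent modulo ${\rm im}(\phi_{b-1})$; any nonzero combination $\sum c_i s_i$ then again has persistence interval $(b,d)$ in $H_1(K_\bullet)$. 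Corollary~\ref{Coro: inclusion relation of the first PBs-prototype} guarantees that the images $t_i=\theta_b(s_i)$ stay linearly independent modulo ${\rm im}(\psi_{b-1})$, so they give $m$ classes born at $b$ in $\widetilde{K_\bullet}$, while Theorem~\ref{Theorem: barcode information in the included persistent homologies} ensures that each nonzero combination $\sum c_i t_i=\theta_b(\sum c_i s_i)$ dies exactly at $d$. Consequently the $t_i$ are independent from every class dying before $d$, so they contribute at least $m$ bars with interval $(b,d)$ to ${\rm PB}_1(\widetilde{K_\bullet})$; ranging over all $(b,d)$ yields the desired inclusion.

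The main obstacle is to control \emph{multiplicities} rather than merely the set of intervals. Injectivity of the level maps $\theta_{1,i}$ (Theorem~\ref{Theorem: onto, one-one, and iso}(b)) is by itself too weak: a monomorphism of persistence modules can carry a short bar into a strictly longer one, so it need not induce a sub-multiset of barcodes. What rescues the argument is that Theorem~\ref{Theorem: barcode information in the included persistent homologies} preserves births \emph{and} deaths exactly, and that Corollary~\ref{Coro: inclusion relation of the first PBs-prototype} keeps the born-at-$b$ classes independent after applying $\theta_b$. The delicate point to get right is therefore the independence bookkeeping: one must verify not only that the $t_i$ are independent modulo ${\rm im}(\psi_{b-1})$ but also that no nontrivial combination of them dies before $d$, which is exactly where the exact death-preservation of the Theorem is invoked; the infinite-death case $d=\infty$ is handled by the same Theorem, which treats it explicitly.
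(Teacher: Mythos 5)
Your proof is correct and takes essentially the same route as the paper, which presents this corollary as a direct consequence of Theorem \ref{Theorem: barcode information in the included persistent homologies} (exact preservation of birth and death values), Corollary \ref{Coro: inclusion relation of the first PBs-prototype} (injectivity modulo ${\rm im}(\psi_{b-1})$, which controls multiplicities), and Theorem \ref{Theorem: onto, one-one, and iso}(c) together with the equality of chain groups in degrees $\geq 2$ for the case $q>1$. Your explicit multiset bookkeeping---and your remark that levelwise injectivity alone would not suffice---simply spells out what the paper leaves implicit, in the same style as its proof of the dimension-$0$ analogue, Corollary \ref{Corollary: Pre-thm of 0-ladder}.
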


In a more intuitive sense, each persistence interval $(b, d)$ in ${\rm PB}_1(\widetilde{K_\bullet}) \setminus {\rm PB}_1(K_{\bullet})$ corresponds to a loop structure involving edges within the gluing star $S$. Specifically, the presence of such a loop in ${\rm PB}_1(\widetilde{K_b})$ will not be fulfilled in $K_{b'}$ for all $b' \geq b$. This is due to the absence of $2$-simplices in $K_{b'}$ that possess $1$-dimensional faces in $S$. The implications of this observation are captured in the following two theorems.

\begin{theorem}
\label{Theorem: for infinite B1}
Consider the ladder defined in equation \eqref{Eq. H1 Ladder} and $b \in \{ 1,2, ..., n\}$, then every $s \in H_1(\widetilde{K_b}) \setminus H_1(K_b)$ satisfies $\psi_{b,d}(s) \neq 0$ for every $d \geq b$.
\end{theorem}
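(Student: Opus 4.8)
The plan is to run the entire argument at the level of simplicial chains over $\mathbb{Z}_2$, exploiting the defining feature of the gluing stars: by construction each $\widetilde{K_l} = K_l \cup S$, and $S$ is built only from the apex vertices $z_j$ together with the gluing edges $[z_j,x]$, so it contributes no $2$-simplices. First I would record the two structural consequences of this. Because $S$ has no $2$-simplices, every $2$-simplex of $\widetilde{K_l}$ already belongs to $K_l$, whence $C_2(\widetilde{K_l}) = C_2(K_l)$; and because the gluing edges are disjoint from the edges of $K_l$ (each carries an apex $z_j \notin K_l$), the $1$-chains split as a direct sum $C_1(\widetilde{K_l}) = C_1(K_l) \oplus C_1^S$, where $C_1^S$ is spanned by the gluing edges and is independent of $l$. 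The key consequence is that $\partial_2$ sends $C_2(\widetilde{K_l}) = C_2(K_l)$ into $C_1(K_l)$, so no nonzero element of $C_1^S$ ever lies in $\mathrm{im}(\partial_2)$.

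Next I would fix $s \in H_1(\widetilde{K_b}) \setminus H_1(K_b)$ and establish the preliminary claim that every cycle $\gamma$ representing $s$ has a nonzero $S$-component $\gamma_S$ in the decomposition $\gamma = \gamma_K + \gamma_S$. Indeed, if some representative satisfied $\gamma_S = 0$, then $\gamma$ would lie in $C_1(K_b)$; since $\partial_1 \gamma = 0$ holds in $C_0(\widetilde{K_b}) \supseteq C_0(K_b)$, this $\gamma$ is already a $1$-cycle in $K_b$, so its class in $H_1(K_b)$ maps to $s$ under $\theta_1\colon H_1(K_b)\to H_1(\widetilde{K_b})$. As $\theta_1$ is injective (Theorem \ref{Theorem: onto, one-one, and iso}(b)) and realizes the identification with respect to which the hypothesis $s \notin H_1(K_b)$ is read, this would be a contradiction.

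With these pieces in place, I would finish by contradiction. Suppose $\psi_{b,d}(s) = 0$ for some $d \geq b$; the case $d = b$ is immediate since $s \neq 0$ (the zero class lies in $H_1(K_b)$), so take $d > b$. Vanishing of $\psi_{b,d}(s)$ means any representative $\gamma$ of $s$, viewed in $\widetilde{K_d}$, becomes a boundary, i.e.\ $\gamma = \partial_2 \omega$ for some $\omega \in C_2(\widetilde{K_d})$. But $C_2(\widetilde{K_d}) = C_2(K_d)$ forces $\partial_2\omega \in C_1(K_d)$, which has trivial $C_1^S$-component. Comparing $C_1^S$-components in $\partial_2\omega = \gamma_K + \gamma_S$ then yields $\gamma_S = 0$, contradicting the preliminary claim. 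Hence $\psi_{b,d}(s) \neq 0$ for every $d \geq b$.

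I do not anticipate a substantial obstacle: the proof is essentially direct-sum bookkeeping once the identity $C_2(\widetilde{K_d}) = C_2(K_d)$ is observed. The only step demanding genuine care is the preliminary claim, where I must invoke the injectivity of $\theta_1$ to be certain that a representative with vanishing $S$-part actually certifies $s \in H_1(K_b)$, rather than merely producing some unrelated cycle supported in $K_b$.
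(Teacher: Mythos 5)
Your proof is correct and takes essentially the same route as the paper's: both arguments hinge on the observation that $C_2(\widetilde{K_d}) = C_2(K_d)$, so every boundary lies in $C_1(K_d)$ and can never equal a cycle with nonzero $S$-component, while any representative of a class $s \notin H_1(K_b)$ must carry a nonzero $S$-component. Your write-up simply makes explicit (via the direct-sum decomposition $C_1(\widetilde{K_l}) = C_1(K_l) \oplus C_1^S$, the $\theta_1$-injectivity justification of the preliminary claim, and the separate $d=b$ case) what the paper's proof states tersely.
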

\begin{proof}
Let $S$ be the simplicial complex defined as above, then $\widetilde{K_b} = K_b \cup S$ for each $b \in \{ 1,2, ..., n\}$. Because $s \in H_1(\widetilde{K_b}) \setminus H_1(K_b)$, it can be represented as a $1$-chain $\alpha + \beta$, where $\alpha \in C_1(S)$, $\beta \in C_1(K_b)$, and $\partial_1(\alpha + \beta) = 0$. Moreover, $\alpha \neq 0$ since $s \notin H_1(K_b)$. If $\psi_{b,d}(s) = 0$ for some $d \in \{ 1,2, ..., n\}$ with $d \geq b$, then there is a $2$-cycle $\delta \in C_2(\widetilde{K_d})$ such that $\partial_2(\delta) = \alpha + \beta$ in $C_1(\widetilde{K_d})$. Because $C_2(\widetilde{K_d}) = C_2(K_d)$, $\alpha + \beta \in C_1(K_d)$ with $\alpha \neq 0$. It is impossible since $K_d$ contains no $1$-simplices in $S$.    
\end{proof}

According to Corollary \ref{Coro: inclusion relation of the first PBs-prototype}, it follows that $H_1(K_b)/{\rm im}(\phi_{b-1}) \subseteq H_1(\widetilde{K_b})/{\rm im}(\psi_{b-1})$. Specifically, the persistence intervals with a birth value of $b$ in $H_1(\widetilde{K_\bullet})$ can be categorized into two groups: intervals corresponding to $H_1(K_b)$ and intervals represented by an element $s \in H_1(\widetilde{K_b})$ where $s \notin H_1(K_b) + {\rm im}(\psi_{b-1})$. This observation is captured in the following theorem.

\begin{theorem}
\label{Main Theorem of the infinite barcodes in the first QC PH}
Consider the ladder defined in equation \eqref{Eq. H1 Ladder}. Let $s \in H_1(\widetilde{K_b})$ be an element with $s \notin H_1(K_b) + {\rm im}(\psi_{b-1})$, then $s$ has a persistence interval $(b,\infty)$ in $H_1(\widetilde{K_\bullet})$.
\end{theorem}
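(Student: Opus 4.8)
The plan is to verify directly the two defining conditions of a persistence interval $(b,\infty)$ as recorded earlier in the excerpt: first that $s$ is \emph{born} at $b$, i.e. $s \notin {\rm im}(\psi_{b-1})$, and second that $s$ \emph{never dies}, i.e. $\psi_{b,d}(s) \notin {\rm im}(\psi_{b-1,d})$ for every $d \geq b$. The birth condition is immediate: since ${\rm im}(\psi_{b-1}) \subseteq H_1(K_b) + {\rm im}(\psi_{b-1})$, the hypothesis $s \notin H_1(K_b) + {\rm im}(\psi_{b-1})$ already forces $s \notin {\rm im}(\psi_{b-1})$, so $b$ is the birth moment. The genuine content is the survival claim, which I would establish by contraposition. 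Note that Theorem \ref{Theorem: for infinite B1} only yields $\psi_{b,d}(s) \neq 0$, which is weaker than what is needed; the point is that the hypothesis $s \notin H_1(K_b) + {\rm im}(\psi_{b-1})$ is precisely calibrated to upgrade ``nonzero'' to ``not in ${\rm im}(\psi_{b-1,d})$''.

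Suppose, toward a contradiction, that $s$ dies at some finite $d$, so that $\psi_{b,d}(s) = \psi_{b-1,d}(t)$ for some $t \in H_1(\widetilde{K_{b-1}})$. I would fix cycle representatives $\sigma \in Z_1(\widetilde{K_b})$ of $s$ and $\tau \in Z_1(\widetilde{K_{b-1}})$ of $t$, and exploit that the gluing operation introduces only new $0$- and $1$-simplices, so that the $1$-chains split as $C_1(\widetilde{K_i}) = C_1(K_i) \oplus C_1(S)$ into a $K_i$-part and a gluing-star part (the star edges $[z_j,x]$ are never edges of any $K_i$). Writing $\sigma = \beta + \alpha$ and $\tau = \beta' + \alpha'$ accordingly, with $\beta \in C_1(K_b)$, $\beta' \in C_1(K_{b-1})$, and $\alpha, \alpha' \in C_1(S)$, the relation $\psi_{b,d}(s) = \psi_{b-1,d}(t)$ says exactly that $\sigma + \tau$ is a boundary in $\widetilde{K_d}$.

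The crux, and the step I expect to require the most careful bookkeeping, is the identity $C_2(\widetilde{K_d}) = C_2(K_d)$, since gluing creates no $2$-simplices (this is the same mechanism driving Theorem \ref{Theorem: for infinite B1}). Hence $\sigma + \tau = \partial_2(\delta)$ with $\delta \in C_2(K_d)$, so the sum lies in $C_1(K_d)$ and has trivial gluing-star component. Comparing $C_1(S)$-components over $\mathbb{Z}_2$ forces $\alpha = \alpha'$, whence $\sigma + \tau = \beta + \beta'$ is a cycle lying entirely in $C_1(K_b)$. Passing to homology in $\widetilde{K_b}$ then gives $s + \psi_{b-1}(t) = [\sigma + \tau] = [\beta + \beta']$, and since $\beta + \beta'$ is a cycle of $K_b$, its class lies in the image of $\theta_1 \colon H_1(K_b) \to H_1(\widetilde{K_b})$ (injective by Theorem \ref{Theorem: onto, one-one, and iso}(b)), i.e. in the subspace $H_1(K_b)$. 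Therefore $s = [\beta+\beta'] + \psi_{b-1}(t) \in H_1(K_b) + {\rm im}(\psi_{b-1})$, contradicting the hypothesis. This rules out every finite death value, so $s$ has death $\infty$; combined with the birth computation, $s$ has persistence interval $(b,\infty)$ in $H_1(\widetilde{K_\bullet})$.
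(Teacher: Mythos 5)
Your proof is correct and takes essentially the same route as the paper's: both verify birth at $b$ directly from the hypothesis, then rule out a finite death $d$ by contradiction, splitting cycle representatives of $s$ and the witness class into a gluing-star part in $C_1(S)$ and a part in $C_1(K_\bullet)$, using the fact that $C_2(\widetilde{K_d}) = C_2(K_d)$ (so boundaries contain no $S$-edges) to force the star components to cancel over $\mathbb{Z}_2$, and concluding $s \in H_1(K_b) + {\rm im}(\psi_{b-1})$, a contradiction. Your only departures are presentational, e.g.\ making the direct-sum decomposition of $C_1(\widetilde{K_i})$ explicit and citing Theorem \ref{Theorem: onto, one-one, and iso}(b) to identify $H_1(K_b)$ as a subspace of $H_1(\widetilde{K_b})$, which the paper leaves implicit.
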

\begin{proof}
Because $s \notin H_1(K_b) + {\rm im}(\psi_{b-1})$, $s$ is born at the moment $b$ in the persistent homology $H_1(\widetilde{K_\bullet})$. To investigate the death value of $s \in H_1(\widetilde{K_b})$, we consider the ladder
\begin{equation*}
%\label{Eq. H1 Ladder-v2}
\xymatrix@+1.0em{
                & H_1(K_{b-1})
                \ar[r]^{\phi_{b-1}}
                \ar@{^{(}->}[d]_{}
                & H_1(K_b)
                \ar[r]^{\phi_{b,d}}
                \ar@{^{(}->}[d]_{}
                & H_1(K_d)
                \ar@{^{(}->}[d]_{}
                \\
                & H_1(\widetilde{K_{b-1}})
                \ar[r]^{\psi_{b-1}}
        	& H_1(\widetilde{K_b})
				\ar[r]^{\psi_{b,d}}
			& H_1(\widetilde{K_d})
                }    
\end{equation*}
that compresses the composition maps $\phi_{d-1} \circ \cdots \circ \phi_{b}$ and $\psi_{d-1} \circ \cdots \circ \psi_{b}$ into maps $\phi_{b,d}$ and $\psi_{b,d}$ (c.f.,~\eqref{Eq. Vector space sequence}). If $s$ dies at moment $d$ in $H_1(\widetilde{K_\bullet})$, then there is an $s' \in H_1(\widetilde{K_{b-1}})$ such that $\psi_{b-1,d}(s') = \psi_{b,d}(s)$.  We may represent $s$ and $s'$ by the equivalence classes $\langle \alpha + \beta \rangle$ and $\langle \alpha' + \beta' \rangle$, where $\alpha, \alpha' \in C_1(S)$, $\beta \in C_1(K_b)$, and $\beta' \in C_1(K_{b-1}) \subseteq C_1(K_b)$. This shows that 
\begin{equation*}
\alpha + \alpha' = \beta + \beta' + \gamma,    
\end{equation*}
where $\gamma$ is the boundary of a $2$-chain in $\widetilde{K_d}$. Because $\beta + \beta' + \gamma$ contains no $1$-simplices in $S$, and all $1$-simplices in $\alpha$ and $\alpha'$ are in $S$, we must have $\alpha = \alpha'$ and $\beta + \beta' = \gamma$ (over $\mathbb{Z}_2$). In particular, $\partial_1(\beta + \beta') = \partial_1(\gamma) = 0$, and this shows that $\beta + \beta'$ defines an equivalence class $\langle \beta + \beta' \rangle$ in $H_1(K_b)$. Finally, we conclude that
\begin{equation*}
\begin{split}
s + \psi_{b-1}(s') &= \langle \alpha + \beta \rangle + \langle \alpha' + \beta' \rangle = \langle \alpha + \beta + \alpha' + \beta' \rangle = \langle \beta + \beta' \rangle.
\end{split}    
\end{equation*}
Because $\langle \beta + \beta' \rangle \in H_1(K_b)$, $s = \langle \beta + \beta' \rangle + \psi_{b-1}(s') \in H_1(K_b) + {\rm im}(\psi_{b-1})$. However, this contradicts the assumption of $s$, leading us to conclude that $s$ has an infinite death value in the persistent homology $H_1(\widetilde{K_\bullet})$.
\end{proof}

In the context of Vietoris--Rips complexes, considering real numbers $0 < \epsilon_1 < \epsilon_2 < \cdots < \epsilon_n$ as filtration levels and a given point-cloud $V \subseteq K$, we establish the Vietoris--Rips complex filtration $\emptyset \subseteq K_{\epsilon_1} \subseteq K_{\epsilon_2} \subseteq \cdots \subseteq K_{\epsilon_n}$, with the initial condition $V \subseteq K_{\epsilon_1}$. Conversely, by introducing an equivalence relation on $V$, we derive the induced quotient complex filtration $\emptyset \subseteq \overline{K_{\epsilon_1}} \subseteq \overline{K_{\epsilon_2}} \subseteq \cdots \subseteq \overline{K_{\epsilon_n}}$. 

Especially, the 1-th PB ${\rm PB}_1(\overline{K_{\epsilon_\bullet}})$ is partitioned into two sub-multisets: ${\rm PB}_1^{\rm finite}(\overline{K_{\epsilon_\bullet}})$ and ${\rm PB}_1^{\infty}(\overline{K_{\epsilon_\bullet}})$, representing the collections of finite and infinite persistence intervals within ${\rm PB}_1(\overline{K_{\epsilon_\bullet}})$, respectively. By applying Corollary \ref{Coro: inclusion relation of the first PBs} and Theorem \ref{Main Theorem of the infinite barcodes in the first QC PH}, we deduce that ${\rm PB}_1^{\rm finite}(\overline{K_{\epsilon_\bullet}})$ is a subset of ${\rm PB}_1(K_{\epsilon_\bullet})$. In particular, the following identification holds: 
\begin{equation}
\label{Eq. PB1 decomposition}
{\rm PB}_1(\overline{K_{\epsilon_\bullet}}) = {\rm PB}_1^{\rm finite}(\overline{K_{\epsilon_\bullet}}) \cup {\rm PB}_1^{\infty}(\overline{K_{\epsilon_\bullet}}) = {\rm PB}_1(K_{\epsilon_\bullet}) \cup {\rm PB}_1^{\infty}(\overline{K_{\epsilon_\bullet}}).    
\end{equation}
Furthermore, if $\epsilon_n$ is large enough, then the PH $H_1(K_{\epsilon_\bullet})$ has no intervals of lifespan $\infty$. In particular, there is a one-to-one correspondence between ${\rm PB}_1(K_{\epsilon_\bullet})$ and ${\rm PB}_1^{\rm finite}(\overline{K_{\epsilon_\bullet}})$, i.e., ${\rm PB}_1(K_{\epsilon_\bullet}) = {\rm PB}_1^{\rm finite}(\overline{K_{\epsilon_\bullet}})$. 

Fig. \ref{fig:quotient-complex ph} illustrates an example of PBs for a simplicial complex filtration $K_\bullet$ and the induced QC filtration $\overline{K_\bullet}$. In this example, the filtration $K_0 \subseteq K_1 \subseteq K_2 \subseteq K_3 \subseteq K_4$ consists of 5 simplicial complexes within a $1 \times 2$ supercell in $\mathbb{R}^2$. Mathematically, the 0-th PB of $K_\bullet$ can be represented by
\begin{equation*}
{\rm PB}_0(K_\bullet) = \{ (0,1), (0,1), (0,2), (0,2), (0,2), (0,2), (0,2), (0,\infty) \},
\end{equation*}
which records the merging behavior of connected components within the filtration. Notably, 8 connected components represented by the vertices are born at moment 0, while two of them merge with other components at 1, and finally, all components merge together at moment 2. On the other hand, similar birth and death information can be utilized to capture the lifespan details of loop structures within $K_\bullet$. Specifically,
\begin{equation*}
{\rm PB}_1(K_\bullet) = \{ (2,3), (2,4), (2,4) \}.
\end{equation*}
It means that three loops are born at moment 2, one of which is eliminated at 3 by being filled with $2$-simplices, and two of which die at moment 4.

Based on the equivalence relation $\sim_V$ on the vertex set $V$, the second row of Fig. \ref{fig:quotient-complex ph} illustrates a 2D planar representation of the filtration $\overline{K_0} \subseteq \overline{K_1} \subseteq \overline{K_2} \subseteq \overline{K_3} \subseteq \overline{K_4}$. Especially,
\begin{equation*}
{\rm PB}_0(\overline{K_\bullet}) = \{ (0,1), (0,1), (0,2), (0,\infty) \}.
\end{equation*}
We note that the number of intervals in ${\rm PB}_0(\overline{K_\bullet})$ is half that of ${\rm PB}_0(K_\bullet)$ since $\sim_V$ separates all 8 points in $V$ into 4 equivalence classes based on periodicity. For the $1$-th PB of $\overline{K_\bullet}$, the quotient operation may generate additional loops. For instance, by merging the periodic vertices of $K_2$, $4$ additional loop structures are created, all with a birth value of 2 and an infinite death value. In summary,
\begin{equation*}
{\rm PB}_1(\overline{K_\bullet}) = \{ (2,3), (2,4), (2,4), (2, \infty), (2, \infty), (2, \infty), (2, \infty) \}.
\end{equation*}

In this study, we introduce a framework for constructing the QC filtration from a simplicial complex filtration. This involves applying quotient relations to the vertex set of the initial simplicial complex. Additionally, we investigate the relationships between the QC-based PB and the traditional PB. Our approach provides a detailed classification of persistence intervals, distinguishing between intervals originating within the original filtration and those generated by the quotient operation. 

While our primary focus in this study is on generating quotient complexes through the quotient relation applied to $0$-simplices, it is worth noting that exploring more complex quotient relations on higher-dimensional simplices, such as edges, faces, or tetrahedrons, may lead to more generalized theorems, as discussed in the sections. The investigation of these intricate quotient complexes represents a significant and promising direction for future research.

\section{Quotient Complex Descriptors (QCDs)}
\label{Appendix: Quotient Complex Descriptors (QCDs)}

\subsection{QC Representation in Finite Context}
In terms of computational efficiency, this study adopts a methodology to construct the finite periodic motif, QC filtration, and QC-based PH in alignment with the CIF data format for material structures. Regarding the finite periodic motif, our approach diverges from supercell extensions, such as the $3 \times 3 \times 3$ or $5 \times 5 \times 5$ extensions. Instead, our focus lies on the union of $M$ and the other three translated motifs.
\begin{equation}
\label{Eq. 4M-2}
\begin{split}
V &= M \cup (M + v_1) \cup (M + v_2) \cup (M + v_3).
\end{split}
\end{equation}
Here, $\mathcal{B} = \{ v_1, v_2, v_3 \}$ denotes the $\mathbb{R}$-basis specified in the material's CIF file. To succinctly and effectively represent the material, we predominantly collect points within the original motif $M$ and account for the material's periodicity by incorporating its union with the translated copies $M + v_i$. On the other hand, the bipartite distance $d: V \times V \rightarrow \mathbb{R} \cup \{ +\infty \}$ to characterize the interaction relationships between unit cells is defined as:
\begin{equation}
\label{Eq. Distance function of extended cells-v2}
d(u,v) = \begin{cases}
 	 +\infty &\quad\null\text{ if } u \notin M \text{ and 
 } v \notin M, \\
 	\Vert u - v \Vert_2 &\quad\null\text{ otherwise.}
 	\end{cases}   
\end{equation}
The Vietoris--Rips filtration $K_{\epsilon_\bullet}$ can be constructed by this distance function. In addition, by defining $v \sim_V w$ as $v - w \in \Lambda(\mathcal{B})$, the QC $\overline{K_{\epsilon_\bullet}}$ with $\overline{K_{\epsilon_i}} = K_{\epsilon_i}/\sim_V$ is defined.  For the computation of the QC filtration's PH based on Vietoris--Rips filtration, the primary parameter to determine is the maximum filtration level of radii. In terms of computation, we calculate the PH over the interval of $[0, 10]$ (\r{A}) to explore the linkages between the original and shifted neighboring cells.

The subsequent theorem demonstrates that, given the configuration of the periodic motif $V$ (\eqref{Eq. 4M}) and the distance function (\eqref{Eq. Distance function of extended cells-v2}), the periodic information can be encapsulated within the infinite persistence intervals of the 1-th PB in the proposed QC filtration. In other words, the QC-based persistent barcodes not only extend the conventional periodic information of the periodic motif but also provide a more intricate representation of periodic relationships within the crystal structure.

\begin{theorem}
\label{Theorem: Unit cell info as PB intervals}
Let $\mathcal{B} = \{ v_1, v_2, v_3 \}$ denote an $\mathbb{R}$-basis in $\mathbb{R}^3$ and $U$ be the unit cell spanned by $\mathcal{B}$. Let $M = \{ (0,0,0) \}$ be the motif set within $U$. Based on the point cloud $V$ in~\eqref{Eq. 4M}, bipartite distance in~\eqref{Eq. Distance function of extended cells-v1}, Vietoris--Rips complex $K_{\epsilon_\bullet}$, and the associated QC filtration $\overline{K_{\epsilon_\bullet}}$, there are persistence intervals $(b_1, \infty), (b_2, \infty)$, and $(b_3, \infty)$ in ${\rm PB}_1^\infty(\overline{K_{\epsilon_\bullet}})$ with $b_i = |v_i|$ for $i = 1,2,3$.    
\end{theorem}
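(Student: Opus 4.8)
The plan is to reduce everything to an explicit, combinatorial description of the filtration, which is possible precisely because the motif $M = \{(0,0,0)\}$ is a single point. First I would unwind the definitions. With $M = \{0\}$, the point cloud in \eqref{Eq. 4M} is $V = \{0, v_1, v_2, v_3\}$, and all four points lie in $\Lambda(\mathcal{B})$, so $\sim_V$ collapses them into a single equivalence class. Reading off the bipartite distance \eqref{Eq. Distance function of extended cells-v1}, the only finite distances are $d(0, v_i) = |v_i|$, while $d(v_i, v_j) = +\infty$ for $i \neq j$ (since $v_i, v_j \notin M$). Consequently the Vietoris--Rips complex $K_\epsilon$ is, at every scale, a subgraph of the star centred at $0$: the edge $[0, v_i]$ is inserted exactly at $\epsilon = |v_i|$, no edge $[v_i, v_j]$ ever appears, and hence no triangle or higher simplex is ever formed. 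In particular each $K_\epsilon$ is a forest, so $H_1(K_\epsilon) = 0$ for all $\epsilon$ and ${\rm PB}_1(K_{\epsilon_\bullet}) = \emptyset$.

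Next I would invoke the barcode decomposition already established for QC filtrations. By \eqref{Eq. PB1 decomposition} (equivalently Corollary \ref{Coro: inclusion relation of the first PBs} together with Theorem \ref{Main Theorem of the infinite barcodes in the first QC PH}) we have ${\rm PB}_1(\overline{K_{\epsilon_\bullet}}) = {\rm PB}_1(K_{\epsilon_\bullet}) \cup {\rm PB}_1^{\infty}(\overline{K_{\epsilon_\bullet}})$; since the first summand is empty, every first-dimensional bar of the QC filtration is infinite, and it suffices to exhibit three infinite bars born at $|v_1|, |v_2|, |v_3|$.

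For this I would pass to the homotopy-equivalent gluing-star model $\widetilde{K_\epsilon}$ of Corollary \ref{Main corollary 1}, which computes the same persistence as $\overline{K_\epsilon}$. Because $\sim_V$ has a single class, there is one gluing vertex $z$ and one star $S = [z,0] \cup \bigcup_{i} [z, v_i]$, present at every scale, with $\widetilde{K_\epsilon} = K_\epsilon \cup S$. The star $S$ is a tree, so $\widetilde{K_\epsilon}$ is a connected graph whose first homology is spanned by the three-edge cycles $\ell_i = [0, v_i] + [v_i, z] + [z, 0]$, one for each inserted edge $[0, v_i]$. An Euler-characteristic count gives $\beta_1(\widetilde{K_\epsilon}) = \#\{ i : |v_i| \leq \epsilon \}$, and the $\ell_i$ are linearly independent since each contains the unique diagonal edge $[0,v_i]$. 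Thus $\ell_i$ is born exactly at $\epsilon = |v_i|$: it is not in $H_1(K_b) = 0$, and it is independent of the previously born cycles $\ell_j$ with $|v_j| < |v_i|$, so $\ell_i \notin H_1(K_b) + {\rm im}(\psi_{b-1})$. Theorem \ref{Main Theorem of the infinite barcodes in the first QC PH} then forces $\ell_i$ to have interval $(|v_i|, \infty)$, and transporting back along the homotopy equivalence $\widehat{r}$ yields the desired $(b_i, \infty) \in {\rm PB}_1^{\infty}(\overline{K_{\epsilon_\bullet}})$ with $b_i = |v_i|$.

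The routine part is the distance and complex bookkeeping; the one place demanding care is the claim that each $\ell_i$ is genuinely new at its birth scale, i.e.\ the hypothesis $s \notin H_1(K_b) + {\rm im}(\psi_{b-1})$ of Theorem \ref{Main Theorem of the infinite barcodes in the first QC PH}. This needs the linear independence of $\{\ell_j\}$ and a clean treatment of ties among the $|v_i|$ (when two norms coincide, two bars of equal birth are created, which the multiset statement accommodates), together with matching the continuous Vietoris--Rips parameter to the discrete filtration indices so that the birth value is literally $|v_i|$ rather than merely the first sampled level above it.
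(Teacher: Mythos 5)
Your proposal is correct and follows essentially the same route as the paper's own proof: both pass to the homotopy-equivalent gluing-star model $\widetilde{K_\epsilon}$, observe that the bipartite distance forbids all $2$-simplices (so $H_1(K_\epsilon) = 0$ for every $\epsilon$), exhibit the linearly independent three-edge cycles $[z,v_i] + [z,v_0] + [v_0,v_i]$, and conclude via Theorem \ref{Main Theorem of the infinite barcodes in the first QC PH}. Your additional bookkeeping (the Euler-characteristic count, the explicit handling of ties among the $|v_i|$, and the preliminary appeal to \eqref{Eq. PB1 decomposition}) is sound but not a departure from the paper's argument.
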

\begin{proof}
For computational convenience, we identify $\overline{K_\epsilon}$ as the simplicial complex $\widetilde{K_\epsilon}$. By equation~\eqref{Eq. 4M-2}, the extended point cloud $V$ is the set $\{ v_0, v_1, v_2, v_3 \}$, where $v_0 = (0,0,0)$ is the origin in $\mathbb{R}^3$. In particular, we have $|v_i| = |v_i - v_0|$ for $i = 1,2,3$ and $v_{i} \sim v_j$ for all $i, j \in \{ 0,1,2,3 \}$. Moreover, we have $d(v_i, v_j) = \infty$ for $i \neq j$ in $\{ 1, 2, 3\}$. Since any three points in $V$ must include two points defining an infinite distance, $K_\epsilon$ and $\widetilde{K_\epsilon}$ contain no $2$-simplices for every $\epsilon > 0$. In particular, any nonzero $1$-cycle in $\widetilde{K_\epsilon}$ determines a nonzero element in $H_1(\widetilde{K_\epsilon};\mathbb{Z}_2)$, i.e., a loop structure in $\widetilde{K_\epsilon}$. Assume $|v_1| \leq |v_2| \leq |v_3|$ and set $\epsilon_i = |v_i|$ for $i = 1,2,3$, then the corresponding simplicial complexes $\widetilde{K_{r}}$ is
\begin{equation*}
\begin{split}
\widetilde{K_{r}} &= \left( \bigcup_{i=1}^3 \ [v_0,v_i] \right) \cup \left( \bigcup_{i = 0}^3 \ [z,v_i] \right).    
\end{split}
\end{equation*}
for $r \geq \epsilon_3$. In particular, the $1$-cycles $[z, v_i] + [z,v_0] + [v_0,v_i]$ with $i = 1,2,3$ are linearly independent in $C_1(\widetilde{K_r};\mathbb{Z}_2)$ since there are linearly independent as $1$-chains in $C_1(\widetilde{K_r};\mathbb{Z}_2)$ and $C_2(\widetilde{K_r};\mathbb{Z}_2) = 0$. Because $H_1(K_{\epsilon};\mathbb{Z}_2) = 0$ for every $\epsilon > 0$, the theorem immediately follows from Theorem \ref{Main Theorem of the infinite barcodes in the first QC PH}.    
\end{proof}

\subsection{Quotient Complex Descriptors (QCDs)}
We extract features for the persistence barcodes ${\rm PB}_0(\overline{K_\bullet})$, ${\rm PB}_1^{\rm finite}(\overline{K_\bullet})$, ${\rm PB}_1^{\infty}(\overline{K_\bullet})$, and ${\rm PB}_2(\overline{K_\bullet})$ using both statistical measures and Betti curves~\cite{umeda2017time}. Due to the distinct types of persistence intervals present in ${\rm PB}_0(\overline{K_\bullet})$, ${\rm PB}_1^{\rm finite}(\overline{K_\bullet})$, ${\rm PB}_1^{\infty}(\overline{K_\bullet})$, and ${\rm PB}_2(\overline{K_\bullet})$, the feature extraction process for these barcodes exhibits slight variations. In essence, our analysis delves into various aspects of birth, death, and lifespan information within these persistence barcodes. For the implementation, refer to the technique document released on GitHub (\url{https://github.com/peterbillhu/QCPH}). 

\paragraph{Statistical Measurements}
Given a collection $\mathcal{L}$ of real numbers, we consider the following statistical estimators as a quantitative summarization of $\mathcal{L}$: the maximum ($M$), minimum ($m$), $(\frac{1}{4}, \frac{2}{4}, \frac{3}{4})$-quartiles ($q_{\frac{1}{4}}, q_{\frac{2}{4}}, q_{\frac{3}{4}}$), average value ($\mu$), and standard deviation ($\sigma$). We call $(M, m, q_{\frac{1}{4}}, q_{\frac{2}{4}}, q_{\frac{3}{4}}, \mu, \sigma)$ the \textit{statistical descriptors} of the collection $\mathcal{L}$. Furthermore, for a collection $\mathcal{L}$ consisting of non-negative real numbers, we define the normalized $\mathcal{L}$ as the collection:
\begin{equation*}
\widetilde{\mathcal{L}} = \left \{ \frac{l}{\sum_{l' \in \mathcal{L}} l'} \ \bigg| \ l \in \mathcal{L} \right \}.
\end{equation*}
Since ${\rm PB}_0(\overline{K_\bullet})$, ${\rm PB}_1^{\rm finite}(\overline{K_\bullet})$, ${\rm PB}_1^{\infty}(\overline{K_\bullet})$, and ${\rm PB}_2(\overline{K_\bullet})$ encompass distinct types of persistence intervals, we adopt different sets $\mathcal{L}$ to extract features. Particularly, for ${\rm PB}_0(\overline{K_\bullet})$, we examine the statistical descriptors of the following collections:
\begin{itemize}
\item the collection of death numbers: $\mathcal{L}_d = \{ d \ | \ (0,d) \in {\rm PB}_0(\overline{K_\bullet}) \}$;
\item the collection of normalized death numbers: $\widetilde{\mathcal{L}_d}$. 
\end{itemize}
We emphasize that this definition aligns with the interval distribution of ${\rm PB}_0(\overline{K_\bullet})$, given that all persistence intervals in ${\rm PB}_0(\overline{K_\bullet})$ share a birth value of $0$. Conversely, for ${\rm PB}_1^{\infty}(\overline{K_\bullet})$, where all intervals have an infinite lifespan, the statistical descriptors examine the following sets:
\begin{itemize}
\item the collection of birth numbers: $\mathcal{L}_b = \{ b \ | \ (b,\infty) \in {\rm PB}_1^{\infty}(\overline{K_\bullet}) \}$;
\item the collection of normalized birth numbers: $\widetilde{\mathcal{L}_b}$. 
\end{itemize}
For the resistance barcodes $\mathcal{P} = {\rm PB}_1^{\rm finite}(\overline{K_\bullet})$ or ${\rm PB}_2(\overline{K_\bullet})$, all intervals are finite with various birth values. The statistical descriptors of $\mathcal{P}$ track the following birth-death information:
\begin{itemize}
\item the collection of birth numbers: $\mathcal{L}_b(\mathcal{P}) = \{ b \ | \ (b,d) \in \mathcal{P} \}$;
\item the collection of normalized birth numbers: $\widetilde{\mathcal{L}_b(\mathcal{P})}$;
\item the collection of death numbers: $\mathcal{L}_d(\mathcal{P}) = \{ d \ | \ (b,d) \in \mathcal{P} \}$;
\item the collection of normalized death numbers: $\widetilde{\mathcal{L}_d(\mathcal{P})}$;
\item the collection of middle points: $\mathcal{L}_{\frac{b+d}{2}}(\mathcal{P}) = \left\{ \frac{b+d}{2} \ | \ (b,d) \in \mathcal{P} \right\}$;
\item the collection of normalized middle points: $\widetilde{\mathcal{L}_{\frac{b+d}{2}}(\mathcal{P})}$;
\item the collection of lifespans: $\mathcal{L}_{d-b}(\mathcal{P}) = \{ d-b \ | \ (b,d) \in \mathcal{P} \}$;
\item the collection of normalized lifespans: $\widetilde{\mathcal{L}_{d-b}(\mathcal{P})}$;
\end{itemize}

In summary, the statistical descriptors $(M, m, q_{\frac{1}{4}}, q_{\frac{2}{4}}, q_{\frac{3}{4}}, \mu, \sigma)$ comprehensively capture the statistical indices of various collections of birth-death information within the persistence barcodes ${\rm PB}_0(\overline{K_\bullet})$, ${\rm PB}_1^{\rm finite}(\overline{K_\bullet})$, ${\rm PB}_1^{\infty}(\overline{K_\bullet})$, and ${\rm PB}_2(\overline{K_\bullet})$, constituting a fundamental part of the QCDs.

\paragraph{Betti Curve Information}
Given a persistence barcode $\mathcal{P}$ comprising persistence intervals $(b,d)$ range from the interval $[0,T]$ of filtration levels, we utilize two curves to encapsulate the information within $\mathcal{P}$: the \textit{Betti curve} and the \textit{normalized Betti curve}, denoted as ${\rm BC}(\mathcal{P})$ and ${\rm NBC}(\mathcal{P})$, respectively. Formally, they are defined as functions from $[0,T]$ to $\mathbb{R}$ with
\begin{equation*}
{\rm BC}(\mathcal{P})(t) = \# \{ (b,d) \in \mathcal{P} \ | \ b \leq t < d \}    \text{ and } {\rm NBC}(\mathcal{P})(t) = \frac{{\rm BC}(\mathcal{P})(t)}{\# \mathcal{P}}.
\end{equation*}
We consider the Betti curves and normalized Betti curves of the persistence barcodes ${\rm PB}_0(\overline{K_\bullet})$, ${\rm PB}_1^{\rm finite}(\overline{K_\bullet})$, ${\rm PB}_1^{\infty}(\overline{K_\bullet})$, ${\rm PB}_2(\overline{K_\bullet})$ as QC-based descriptors of the material. In application, the maximal filtration level $T$ is set to be $10$ (\r{A}) for computation efficiency.
 
\paragraph{Unit Cell Information}
Additionally, within our specialized QCDs, as illustrated in Theorem \ref{Theorem: Unit cell info as PB intervals}, we further enrich the feature set by considering $(|v_1|, |v_2|, |v_3|, |v_1 + v_2|, |v_1 + v_3|, |v_2 + v_3|, |v_1 + v_2 + v_3|)$ with $|v_1| \leq |v_2| \leq |v_3|$ to accentuate the periodic information within the material. This feature encapsulates edge length information of the unit cell representation and has proven to be a pivotal feature in GNN models designed for crystalline structures~\cite{yan2022periodic}.

\subsection*{Element-specific QCDs}

Capitalizing on the advantage of element-specific feature extraction from specific atomic systems or their combinations, particularly in the realm of bioinformatics molecules and perovskite data analysis~\cite{cang:2017integration, szocinski2021awegnn, mayr2021global, anand2022topological}, our approach meticulously captures QCDs of specific atomic systems and their atom combinations within a 2D perovskite. Specifically, we leverage the QCDs, which are derived from diverse atomic site combinations and types (i.e., atom sets), including the B-site ($\rm B$), X-site ($\rm X$), and their combinations (e.g., ${\rm A}_{\rm C}{\rm B}{\rm X}$). On the other hand, our consideration extends to QCDs affiliated with distinct atom types, spanning from $\rm O$ and $\rm N$ (associated with the A-site atom type) to $\rm Bi$, $\rm Cd$, $\rm Pb$, and $\rm Sn$, etc. (pertaining to the B-site atom type), and finally $\rm Cl$, $\rm Br$, and $\rm I$ (linked to the X-site atom type). The specifics of site combinations and atom types are elucidated in Table \ref{Table: Involving sites}.

 \begin{table}
\centering
\caption{This table provides comprehensive descriptions of the involved site combinations and atomic systems of element-specific QCDs. A total of 17 site combinations and atomic systems are considered in this study, based on the ${\rm A}{\rm B}{\rm X}$ structure found in 2D perovskites. The primary inorganic atoms featured in materials with DFT-band gaps include $\rm Bi$, $\rm Cd$, $\rm Ge$, $\rm Pb$, and $\rm Sn$. Notably, some materials encompass additional inorganic atoms beyond this set (e.g. $\rm Cu$), categorized under the $\rm B$ type. For a more detailed exploration of the material structures, please refer to the NMSE database~\cite{marchenko2020database}.}
\begin{tabular}{|lcc|}
 \hline
 Atom set & Category & Involving atom coordinates \\ % [0.5ex]
 \hline\hline
 ${\rm A}_{\rm C}{\rm B}$ & Site combinations & $\rm C$ atoms and inorganic atoms (e.g., ${\rm Bi}$, ${\rm Cd}$, ${\rm Cu}$, etc.)\\
 \hline
 ${\rm A}_{\rm C}{\rm X}$ & Site combinations & $\rm C$ atoms and halide atoms (${\rm Cl}$, ${\rm Br}$, ${\rm I}$)\\
 \hline
 ${\rm B}{\rm X}$ & Site combinations & inorganic atoms and halide atoms (${\rm Cl}$, ${\rm Br}$, ${\rm I}$)\\
 \hline
 ${\rm A}_{\rm C}{\rm B}{\rm X}$ & Site combinations & $\rm C$ atoms, inorganic atoms, and halide atoms (${\rm Cl}$, ${\rm Br}$, ${\rm I}$)\\
 \hline
 ${\rm C}$ & ${\rm A}$ & $\rm C$ atoms\\
 \hline
 ${\rm O}$ & ${\rm A}$ & $\rm O$ atoms\\
 \hline
 ${\rm N}$ & ${\rm A}$ & $\rm N$ atoms\\
 \hline
 ${\rm B}$ & ${\rm B}$ & Inorganic atoms (e.g., ${\rm Bi}$, ${\rm Cd}$, ${\rm Cu}$, ..., ${\rm Pb}$, ${\rm Sn}$, etc.)\\
 \hline
 ${\rm Bi}$ & ${\rm B}$ & ${\rm Bi}$ atoms\\
 \hline
 ${\rm Cd}$ & ${\rm B}$ & ${\rm Cd}$ atoms\\
 \hline
 ${\rm Ge}$ & ${\rm B}$ & ${\rm Ge}$ atoms\\
 \hline
 ${\rm Pb}$ & ${\rm B}$ & ${\rm Pb}$ atoms\\
 \hline
 ${\rm Sn}$ & ${\rm B}$ & ${\rm Sn}$ atoms\\
 \hline
 ${\rm X}$ & ${\rm X}$ & Halide atoms (${\rm Cl}$, ${\rm Br}$, ${\rm I}$)\\
 \hline
 ${\rm Cl}$ & ${\rm X}$ & ${\rm Cl}$ atoms\\
 \hline
 ${\rm Br}$ & ${\rm X}$ & ${\rm Br}$ atoms\\
 \hline
 ${\rm I}$ & ${\rm X}$ & ${\rm I}$ atoms\\
 \hline
\end{tabular}
\label{Table: Involving sites}
\end{table}
\subsection*{QCD-based ML Model}
In the realm of data-driven material AI models, a two-fold classification emerges, encompassing featurization-based machine learning models and end-to-end deep learning models. Featurization, also known as feature engineering, involves the creation of a spectrum of structural, physical, chemical, and biological attributes that encapsulate intrinsic material information~\cite{schutt2014represent,ramprasad2017machine,isayev2015materials,huan2015accelerated,damewood2023representations}. These attributes, termed molecular descriptors or fingerprints, encompass conventional material descriptors like atomic, ionic, and orbital radii, as well as factors such as tolerance and octahedral factors, packing factor, crystal structure measurements, and surface/volume features. Additionally, they encompass physical properties including ionization potential, ionic polarizability, electron affinity, Pauling electronegativity, valence orbital radii, HOMO, and LUMO. A notable instance is the molecular fingerprint, an extensive vector composed of systematically generated features primarily derived from molecular structures. Prominent material molecular fingerprints include the Coulomb matrix~\cite{himanen2020dscribe}, Ewald sum matrices~\cite{rupp2012fast}, many-body tensor representation (MBTR)~\cite{huo2022unified}, smooth overlap of atomic positions (SOAP)~\cite{bartok2013representing}, and atom-centered symmetry functions (ACSF)~\cite{behler2011atom}. These descriptors, serving as input features for machine learning models, find versatile applications in predicting diverse material properties, with a specific focus on the prediction of the bandgap of 2D perovskites~\cite{marchenko2020database, mayr2021global}.

Geometric deep learning models offer a feature-free approach and a seamless end-to-end method for directly understanding material properties. Geometric deep learning, introduced to handle non-Euclidean data like graphs, networks, and manifolds, achieves this by integrating geometric and topological insights into deep learning frameworks \cite{atz2021geometric,bronstein2017geometric,bronstein2021geometric,masci2015geodesic}. Among these models, graph neural networks (GNNs) stand out, finding extensive application in node classification, link prediction, graph classification, and property forecasting \cite{hamilton2017inductive,vashishth2019composition,velivckovic2017graph,welling2016semi}. Recent advancements have produced diverse GNN models for analyzing material properties, examples being SchNet \cite{schmidt2017predicting}, crystal graph convolutional neural networks (CGCNN) \cite{xie2018crystal}, materials graph network (MEGNet) \cite{chen2019graph}, improved crystal graph convolutional neural networks (iCGCNN) \cite{park2020developing}, global attention-based graph neural network (GATGNN) \cite{louis2020graph}, crystal graph attention network (CYATT) \cite{schmidt2021crystal}, compositionally restricted attention-based network (CrabNet) \cite{wang2021compositionally}, neural equivariant interatomic potentials (NequIP) \cite{batzner20223}, atomistic line graph neural network (ALIGNN) \cite{choudhary2021atomistic}, Matformer \cite{yan2022periodic}, MatDeepLearn \cite{fung2021benchmarking}, and SIGNNA \cite{na2023substructure}. In general, GNN models for materials analysis can be categorized into two primary groups: potential-energy-based GNNs, encompassing SchNet \cite{schmidt2017predicting}, MEGNet \cite{chen2019graph}, and NequIP \cite{batzner20223}, which employ tailored architectures to approximate potential energy for predicting material properties, and chemical-composition-based GNNs, including CYATT \cite{schmidt2021crystal} and CrabNet \cite{wang2021compositionally}, which directly construct deep learning architectures using chemical formulas or structures as inputs.

In light of the constrained availability of public 2D perovskite datasets, we opt to utilize a featurization-based machine learning model as our primary downstream approach. This decision is prompted by the limited accessibility of 2D perovskite datasets. Furthermore, featurization-based machine learning models offer significant advantages for testing and incorporating the proposed topological descriptors, particularly when compared to end-to-end models. Their appropriateness for integrating and assessing the effectiveness of the proposed topological descriptors stands out as a notable advantage in our methodology.

\section{2D Perovskites and Density Functional Calculation}

\begin{figure}
\centering
\centerline{\includegraphics[width=1\textwidth]{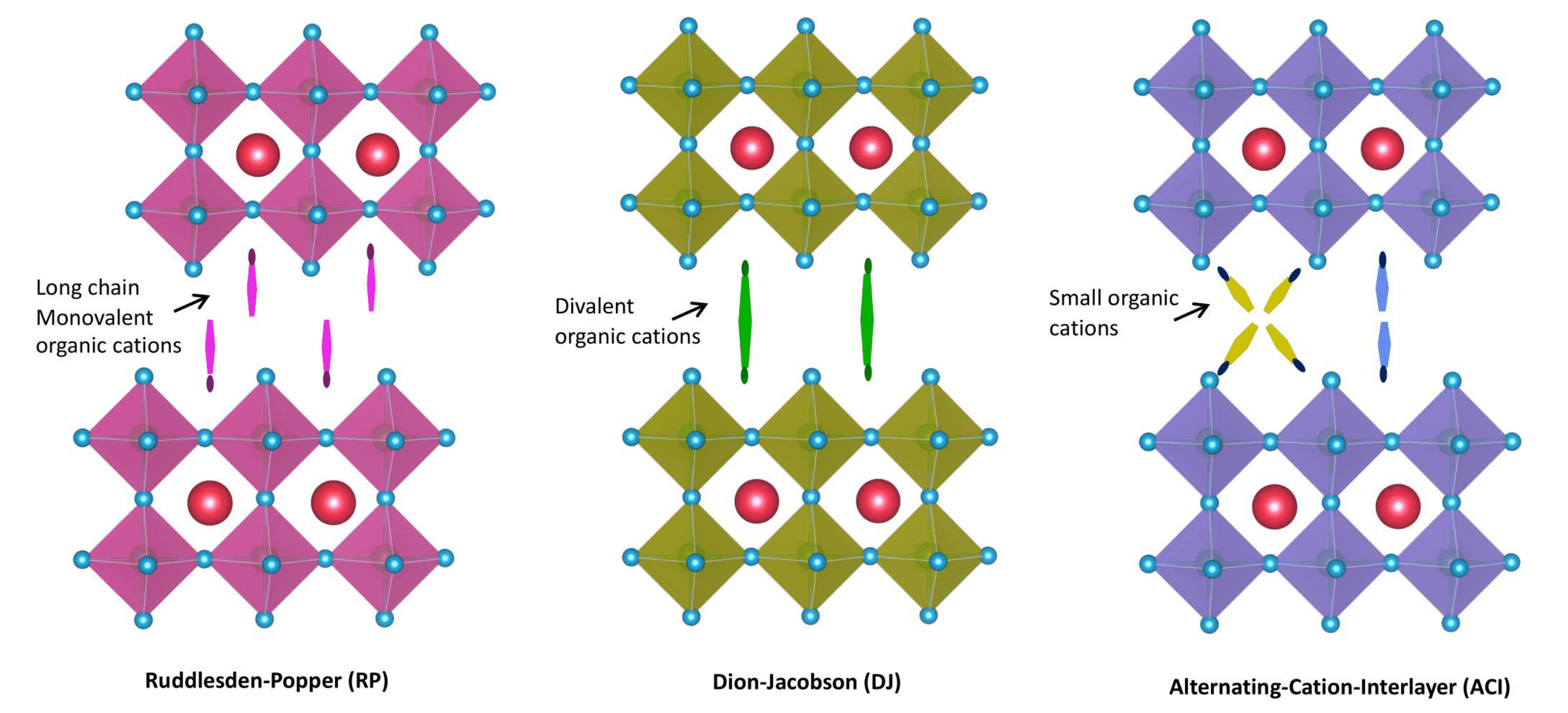}}
\caption{Illustration of three phases of 2D perovskites, the Ruddlesden-Popper (RP), the Dion-Jacobson (DJ), and the alternating-cation-interlayer (ACI) 2D Perovskites. Visualization of the crystal structure is produced by the VESTA software~\cite{momma2011vesta}.}
\label{fig:Phases_of_2D_Perovskites}
\end{figure}

\begin{figure}
\centering
\centerline{\includegraphics[width=1\textwidth]{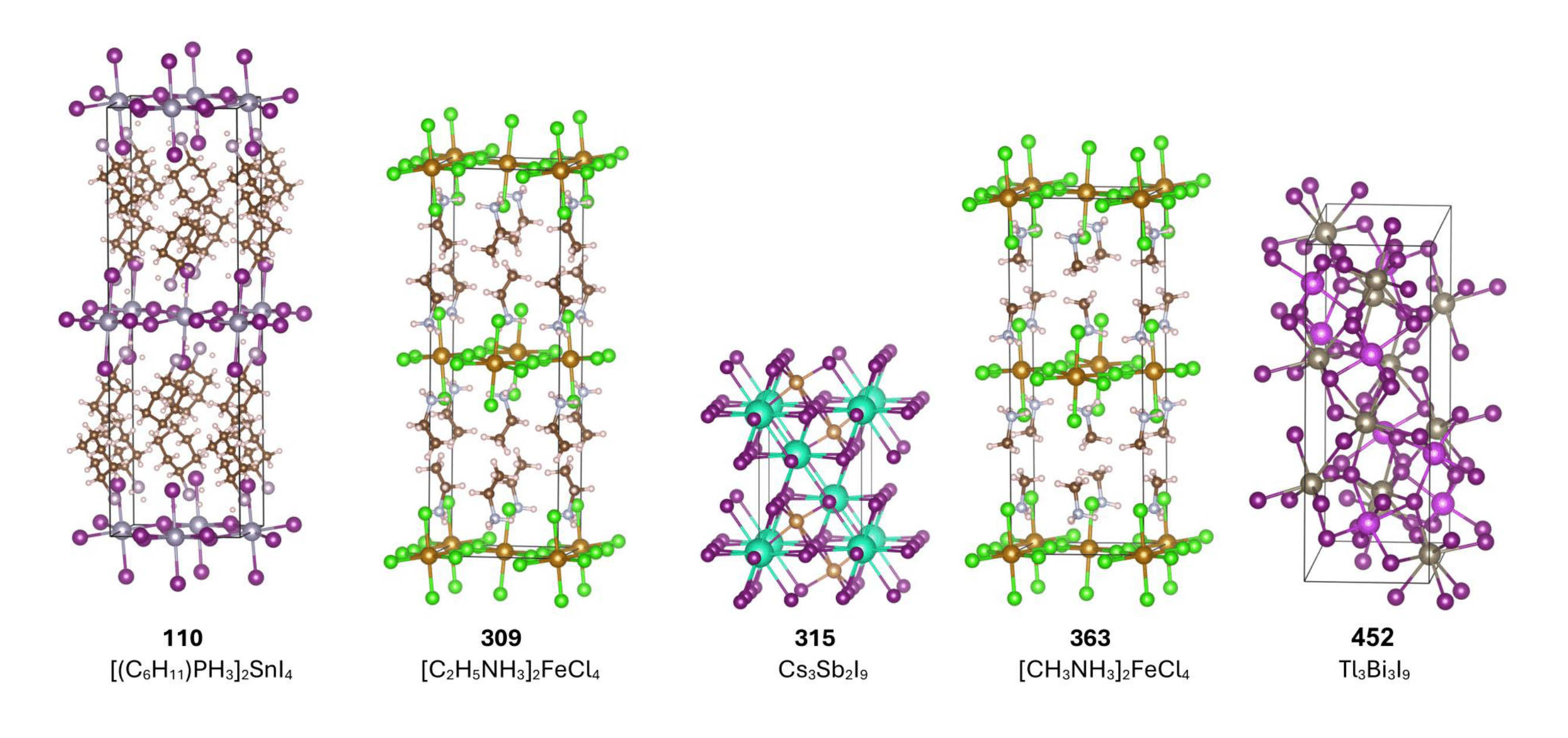}}
\caption{Five new 2D perovskite crystal structures, from the NMSE database explored through density functional theory (DFT) calculations. Visualization of the crystal structure is produced by the VESTA software~\cite{momma2011vesta}.}
\label{fig:The_5_materials}
\end{figure}

\subsection*{Main Phases of 2D Perovskites}
Two-dimensional (2D) layered halide perovskites (LHPs) are derived from the three-dimensional (3D) perovskite structure $\rm{A}\rm{B}\rm{X}_3$ where $\rm{A}$ is an organic cation, $\rm{B}$ is a divalent atom and $\rm{X}$ is a halide ion. By introducing a bulky organic cation such as ${\rm C}_3{\rm H}_7{\rm NH}_3^+$ (${\rm PA}^+$), ${\rm C}_4{\rm H}_9{\rm NH}_3^+$ (${\rm BA}^+$), and ${\rm C}_8{\rm H}_9{\rm NH}_3^+$  (${\rm PEA}^+$) in between the layers of inorganic octahedral, the bulk 3D perovskite structure can be “sliced” into layers. Broadly, LHPs are categorized into three distinct classes. Especially, we briefly introduce the Ruddlesden-Popper (RP), Dion-Jacobson (DJ), and Alternating-Cation-Interlayer (ACI) 2D Perovskites~\cite{gong2022layered} in this section (cf. Fig. \ref{fig:Phases_of_2D_Perovskites}).

\paragraph{Ruddlesden-Popper (RP) 2D Perovskites} These perovskites with a chemical formula $\rm{A}'_2\rm{A}_{n-1}\rm{B}_{n}\rm{X}_{3n+1}$ are characterized by their interlayer space being occupied by bulky or long-chain monovalent organic cation-pairs (e.g., ${\rm A}’= {\rm CH}_3({\rm CH}_2)_3{\rm NH}_3^+$, ${\rm C}_6{\rm H}_5{\rm CH}_2{\rm NH}_3^+$, and ${\rm 
 A} = {\rm CH}_3{\rm NH}_3^+$) leading to the half octahedral displacement between the adjacent 2D perovskite slabs. The interlayer distances in RP LHPs are generally the largest due to the bilayer structure of monovalent cation pairs. However, this also results in the presence of van der Waals gaps in the crystal structure, which are known to be relatively weak interaction forces compared with hydrogen bonding interactions. 

\paragraph{Dion-Jacobson (DJ) 2D Perovskites} These perovskites with the formula $\rm{A}'\rm{A}_{n-1}\rm{B}_{n}\rm{X}_{3n+1}$ feature divalent cations (e.g., $\rm{A}’= \rm{C}_5\rm{H}_{11}\rm{NCH}_2\rm{NH}_3^{2+}$, and $\rm{A} = \rm{CH}_3\rm{NH}_3^+$) that are vertically bonded with 2D perovskite slabs, resulting in well-aligned layered structures with no displacement between octahedral slabs. The interlayer distance in DJ LHPs can be much shorter than those in RP LHPs primarily because of the monolayer structures of their divalent organic cations. Moreover, DJ LHPs eliminate the van der Waals gap that exists in the RP LHPs, and the interlayer cations interact strongly with adjacent 2D perovskite slabs via hydrogen bonds, potentially increasing the structural stability of 2D LHPs. 

\paragraph{Alternating-Cation-Interlayer (ACI) 2D Perovskites} ACI 2D perovskites with the formula $\rm{A}'\rm{A}_{n}\rm{B}_{n}\rm{X}_{3n+1}$ (e.g., $\rm{A}’=\rm{C}(\rm{NH}_2)_3$, and $\rm{A}=\rm{CH}_3\rm{NH}_3$) adopt an alternating cation arrangement in the interlayer space, inducing an octahedral rotation in the final layered crystal structure. The interlayer distance in ACI LHPs can be much shorter than those in RP LHPs due to the tightly packed interlayers consisting of mainly modest or small cations. However, part of the van der Waals gap is still retained due to the cation $\rm {A}$-$\rm{A}’$ alternating arrangement. Despite this, the elimination or reduction of van der Waals gaps may increase the structural stability of 2D LHPs.

\subsection*{Density Functional Calculation}

Density Functional Theory (DFT) calculations were performed on selected 2D perovskite materials using the Vienna Ab initio Simulation Package (VASP). The Generalized Gradient Approximation (GGA) functional was utilized for quantum mechanical interactions, while the Projector Augmented Wave (PAW) method was employed for core and valence electron interactions. A Gaussian smearing width of $0.05$ eV was applied for Brillouin-zone integration, ensuring a self-consistent field energy criterion of $10^{-5}$ eV$/$atom. Volume and ionic relaxations were conducted using a conjugate-gradient optimization algorithm, with the force on each atom maintained below $0.01$ eV$/\AA$. The plane-wave energy cut-off was consistently set at 500 eV. A $k$-mesh, tailored to each perovskite lattice structure, was used for structural determinations and electronic band gap calculations. The structures of the five materials from the NMSE database, as explored through DFT calculations in this paper, are illustrated in Figure \ref{fig:The_5_materials}.

\begin{figure}
\centering
\centerline{\includegraphics[width=1\textwidth]{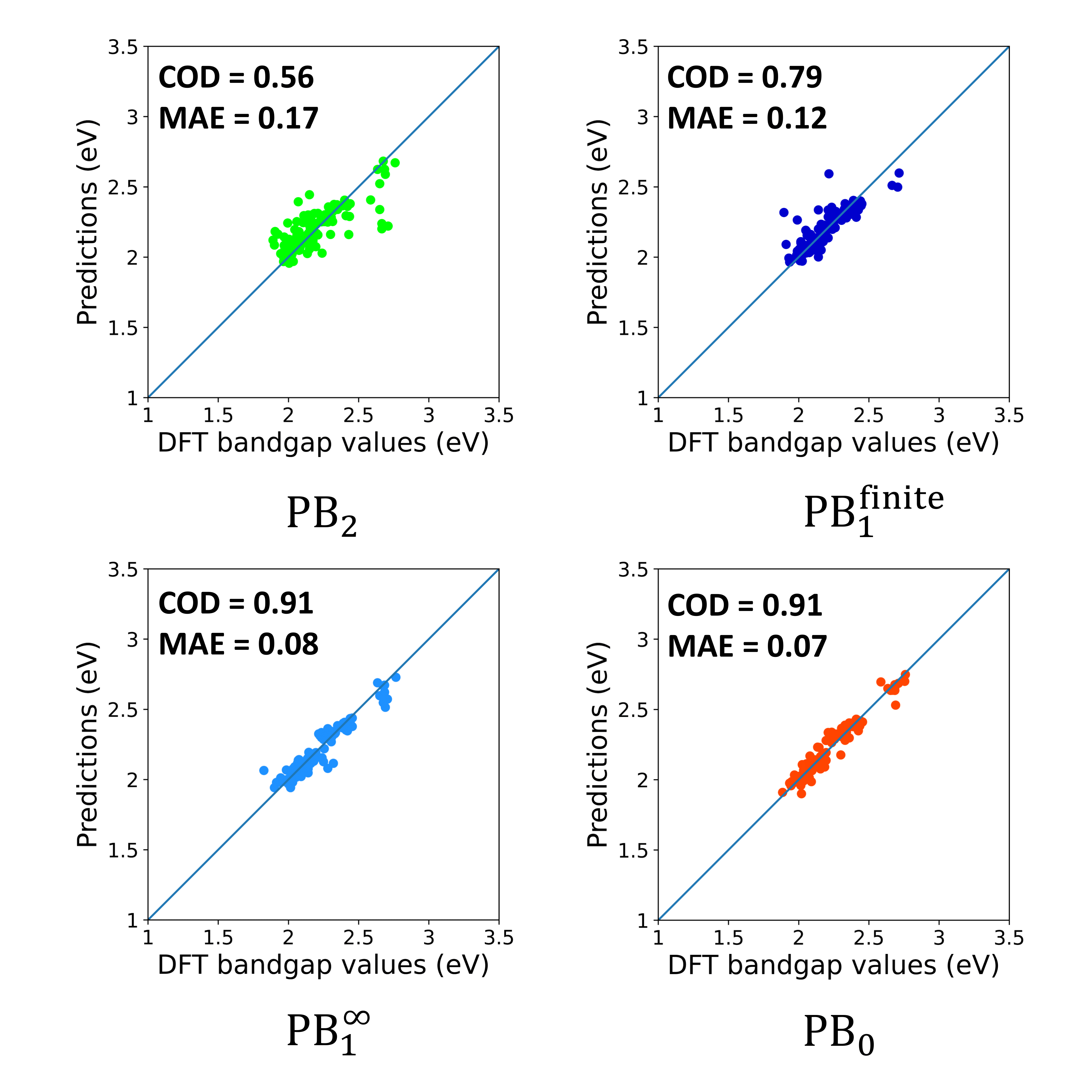}}
\caption{Visualized comparison of the performance of Betti curve features from ${\rm PB}_0$, ${\rm PB}_1^{\rm finite}$, ${\rm PB}_1^{\infty}$, and ${\rm PB}_2$ of the QC filtration $\overline{K_\bullet}$ in predicting the DFT-based bandgaps of 2D perovskites. Specifically, in the context of the 5-fold cross-validation encompassing $624$ materials, $125 \approx 624 \cdot 20\%$ test data points for each model are presented in images {\bf A}, {\bf B}, {\bf C}, and {\bf D}. The scatter plots in four diagrams depict the comparison between predictions and true values, with COD and MAE (eV) values rounded to the second decimal place based on the information in Table \ref{Table: Betti features comparison}. }
\label{fig:Betti_feature_comparison}
\end{figure}

\section{Further Experimental Results}

\subsection*{Feature Analysis on QCDs}

In order to gain deeper insights into the contributions of PB information in different dimensions, namely ${\rm PB}_0$, ${\rm PB}_1^{\rm finite}$, ${\rm PB}_1^{\infty}$, and ${\rm PB}_2$ of the QC filtration $\overline{K_\bullet}$, we conduct similar experiments in predicting DFT-bandgaps using their descriptors separately, employing GBT models. Table \ref{Table: Betti features comparison} depicts the result with COD, PCC, MAE (eV), and RMSE (eV) metrics, and Fig. \ref{fig:Betti_feature_comparison} utilizes the scatter diagrams to compare the predictions and true values. 
In order to simplify feature dimensions, the outcomes for ${\rm PB}_0$, ${\rm PB}_1^{\rm finite}$, ${\rm PB}_1^{\infty}$, and ${\rm PB}_2$ in Table \ref{Table: Betti features comparison} are derived from a focus on the Betti curve descriptors.

Based on the results presented in Table \ref{Table: Betti features comparison}, it is evident that the information from ${\rm PB}_1^{\infty}(\overline{K_\bullet})$ and ${\rm PB}_0(\overline{K_\bullet})$ in the QC filtration $\overline{K_\bullet}$ exhibits superior performance compared to ${\rm PB}_1^{\rm finite}(\overline{K_\bullet})$ and ${\rm PB}_2(\overline{K_\bullet})$. Particularly, considering~\eqref{Eq. PB1 decomposition}, it is noteworthy that ${\rm PB}_1^{{\rm finite}}(\overline{K_\bullet}) = {\rm PB}_1(K_\bullet)$ and ${\rm PB}_2(\overline{K_\bullet}) = {\rm PB}_2(K_\bullet)$ represent barcodes that essentially record the PH information within the original PH information. On the contrary, persistence intervals in ${\rm PB}_1^{\infty}(\overline{K_\bullet})$ are generated purely based on the periodicity of the crystal structure, while ${\rm PB}_0(\overline{K_\bullet})$ is constructed through the reduction of repeated intervals from the persistence barcode of the extended cell structure, both of which are influenced by the periodic and quotient structures. This observation underscores the potential of the proposed QC-based descriptors in effectively integrating both periodic and geometric information, thereby offering a comprehensive summarization of the material with enhanced insights into periodicity and geometry.

\begin{table}
\centering
\caption{Comparison of the performance of Betti curve features from ${\rm PB}_0$, ${\rm PB}_1^{\rm finite}$, ${\rm PB}_1^{\infty}$, and ${\rm PB}_2$ of the QC filtration $\overline{K_\bullet}$ in predicting the DFT-based bandgaps of 2D perovskites, based on the NMSE database~\cite{marchenko2020database}. Evaluation scores include the coefficient of determination (COD), Pearson correlation coefficient (PCC), mean absolute error (MAE), and root-mean-square error (RMSE). These metrics represent average scores obtained through 5-fold cross-validation conducted 5 times, with a total of 624 materials featuring DFT-based bandgaps in each cross-validation fold. The downstream model employed is the GBT model. The ``All QCDs'' result corresponds to the outcome of ``QC-GBT'' based on 624 materials as presented in Table 1 of the main document. This approach involves the holistic integration of all QCDs together, yielding a more promising result compared to considering individual QCDs in isolation.}
\begin{tabular}{|ccccc|}
 \hline
 Feature type & COD & PCC & MAE (eV) & RMSE (eV) \\ % [0.5ex]
 \hline\hline
 ${\rm PB}_2$ & 0.5551 & 0.7465 & 0.1741 & 0.2582 \\
 ${\rm PB}_1^{\rm finite}$ & 0.7934 & 0.8964 & 0.1206 & 0.1760  \\
 ${\rm PB}_1^{\infty}$ & 0.9061 & 0.9533 & 0.0772 & 0.1167 \\
 ${\rm PB}_0$ & 0.9086 & 0.9543 & 0.0744 & 0.1154 \\
 All QCDs & 0.9207 & 0.9613 & 0.0722 & 0.1086 \\
 \hline
\end{tabular}
\label{Table: Betti features comparison}
\end{table}

\begin{table}
\centering
\caption{Influence of Exp/DFT-based bandgap labels on QC-GBT models. The notation $X \Vert Y$ denotes the training set $X$ and testing set $Y$. There are 716 materials with DFT-based bandgaps and 235 materials with Exp-based bandgaps involved in these experiments. Evaluation scores are the coefficient of determination (COD), Pearson correlation coefficient (PCC), mean absolute error (MAE), and root-mean-square error (RMSE). The first row (Exp-based dataset $\Vert$ Exp-based dataset) presents the average scores based on the 5-fold cross-validation 5 times. On the other hand, for the second experiment (DFT-based dataset $\Vert$ Exp-based dataset), we train on the DFT-based dataset, test on the Exp-based dataset 5 times, and output the average scores.}
\begin{tabular}{|ccccc|}
 \hline
 Experiment (Traning $\Vert$ Testing) & COD & PCC & MAE (eV) & RMSE (eV) \\ % [0.5ex]
 \hline\hline
% DFT-based dataset $\Vert$ DFT-based dataset & 0.9214 & 0.9610 & 0.0731 & 0.1082  \\
 Exp-based dataset $\Vert$ Exp-based dataset & 0.8418 & 0.9257 & 0.1462 & 0.2176 \\
 DFT-based dataset $\Vert$ Exp-based dataset & 0.7268 & 0.8860 & 0.1889 & 0.2917  \\
 \hline
\end{tabular}
\label{Table: DFT vs Exp}
\end{table}

\subsection*{Experimental bandgap prediction}
%within the NMSE database (\url{http://www.pdb.nmse-lab.ru/})
To assess the predictive ability of DF features for Exp-based bandgaps in 2D perovskites, we conducted a 5-fold cross-validation on the Exp-based subset. We focus on the materials collected from 2021 to 2022 and use 716 materials with DFT-based bandgaps and 235 materials with Exp-based bandgaps as the experiment data. The results are presented in Table \ref{Table: DFT vs Exp}, demonstrating a notable contrast compared to the DFT-based subset.

One possible explanation for this difference is the relatively smaller size of the Exp-based subset, which comprises less than one-third of the DFT-based subset, thereby exerting a strong influence on the ML performance. Additionally, we trained the model using the DFT-based subset and evaluated its performance on the Exp-based subset. As expected, the results differed from the 5-fold cross-validation on the Exp-based subset. This discrepancy is reasonable, considering the larger size of the test set, which is expanded by a factor of 5, and the absence of training samples with Exp-based bandgap.

\end{document}